\newtheorem{Theorem}{Theorem}[section]
\newtheorem{Lemma}[Theorem]{Lemma}
\newtheorem{Corollary}[Theorem]{Corollary}
\newtheorem{Remark}[Theorem]{Remark}
\newtheorem{Proposition}[Theorem]{Proposition}
\newtheorem{Example}[Theorem]{Example}
\theoremstyle{definition}
\newtheorem{Definition}[Theorem]{Definition}
\newcommand\CR{{\mathcal  R}}
\newcommand\R{{\mathbb R}}
\newcommand\X{{\R^d}}
\newcommand\N{{\mathbb N}}
\newcommand\CM{{\mathcal M}}
\newcommand\B{{\mathcal B}}
\newcommand\CK{{\mathcal K}}
\newcommand\La{\Lambda}
\newcommand\la{\lambda}
\newcommand\al{\alpha}
\newcommand\Ga{\Gamma}
\newcommand\ga{\gamma}
\newcommand{\K}{{\mathbb{K}}}
\newcommand{\CF}{\mathcal F}
\newcommand{\CL}{\mathcal L}
\providecommand{\keywords}[1]
{
	\small	
	\textit{Keywords:} #1
}
\title{Classical gases with singular densities}
\author{First Author Name$^{a}$$^{*}$, Second Author Name$^{b}$$^{c}$, etc.$^{a}$$^{c}$ \\
	\small $^{a}$Department, University, City, Country \\
	\small $^{b}$Department, University, City, Country \\
	\small $^{c}$Department, University, City, Country \\\\
	\small $^{*}$Corresponding author: first name, initials, %surname; \tt{email.address}
}
\author{Luca Di Persio \footnote{College of Mathematics, 
		Department of Computer Science, University of Verona, Strada le Grazie 15 - 37134 Verona - Italy; luca.dipersio@univr.it}
	\and  Yuri Kondratiev \footnote{Dragomanov University, Kiev; yukondrat@gmail.com}
	\and Viktorya Vardanyan \footnote{Department of Mathematics, University of Trento, Via Sommarive 14-38123 Povo(TN)-Italy; viktorya.vardanyan@unitn.it}
}
\date{} %leave blank
\begin{document}
	\thispagestyle{empty}
	\maketitle
	
	\begin{abstract}  
	
	We study classical continuous systems with singular distributions of velocities. Radon measures with the infinite mass give these distributions. Positions of particles in such systems are no longer usual configurations in the location space,  leading to the necessity of developing new analytical tools to study considered models.

	\end{abstract} \hspace{10pt}

	\keywords{Classical gases; Poisson point process; Correlation functions; Discrete measures}\\
	
	{\bf 2010 MSC. Primary:} 05A40, 46E50. {\bf Secondary:} 60H40, 60G55.
	
	\section{Introduction}
Let \( \Gamma(\mathbb{R}_0^d \times \X) \) denote the space of configurations for an interacting particle system, where each particle is described by a pair \( (v_x, x) \), with \( v_x \in \mathbb{R}_0^d = \mathbb{R}^d \setminus \{0\} \) representing the velocity and \( x \in \X \) the position. The phase space is then \( \mathbb{R}_0^d \times \X \), and locally finite subsets of this space give the configurations. Specifically, a configuration \( \gamma \in \Gamma(\mathbb{R}_0^d \times \X) \) is a set of pairs \( \gamma = \{(v_x, x)\} \subset \mathbb{R}_0^d \times \X\), where each particle is characterized by its velocity \( v_x \) and position \( x \).

We focus on a subset of configurations, denoted \( \Gamma_p(\mathbb{R}_0^d \times \X) \), called {\it pinpointed configurations}. A configuration is pinpointed if no two particles are positioned at the same location in \(\X \) unless they have the same velocity, i.e., each position \( x \) in the configuration corresponds to at most one velocity \( v_x \).

Further, we define the {\it Plato space}, denoted \( \Pi(\mathbb{R}_0^d \times \X) \), as the subset of \( \Gamma_p(\mathbb{R}_0^d \times \X) \) consisting of configurations where the total velocity within any compact region of \( \X \) is bounded. This constraint ensures that the system's behaviour remains physically plausible by avoiding the occurrence of unbounded velocities within finite spatial regions.

We define the cone of vector-valued discrete Radon measures $\K(\X)$ as a  subset of the space of Radon measures $\CM(\X)$
$$\K(\X)= \bigg\{\eta=\sum_{x\in \tau(\eta)} v_x\delta_x \in \CM(\X) \bigg| x \in \X, \ v_x \in \R^d_0  \bigg\},$$
 where the support of  $\eta$ is defined as
$$ \tau(\eta)= \{x \in \X\;|\; \eta(\{x\})=:v_x(\eta)\}.$$

This paper is dedicated to the in-depth exploration and analysis of the cone $\K(\X)$, which serves as a significant framework for modelling particle systems in real-world scenarios. Defining mathematical structures on $\K(\X)$, however, presents substantial challenges. To address these, we employ the Plato space $\Pi(\R^d_0 \times \X)$, where the elements of this space, termed *ideas*, are mapped through the reflection mapping $\CR$ to corresponding elements in $\K(\X)$, which represent observed objects (i.e., images of ideas). The properties of the object space in $\K(\X)$ are intrinsically tied to the structure of the idea space in the Plato space. This conceptual framework is inspired by Plato's theory of forms, which asserts that the observable world is a projection of a higher realm of ideal forms. The relationship between Plato space $\Pi(\R^d_0 \times \X)$ and the cone of vector-valued discrete Radon measures $\K(\X)$, mediated by the reflection mapping, forms the basis for the analysis of $\K(\X)$.

The nonlinearity and complexity of the infinite-dimensional spaces involved present significant challenges in directly analyzing the dynamics modelled on $\Pi(\R^d_0 \times \X)$. To overcome this, we propose reinterpreting the equations within the finite configuration space $\Pi_0(\R^d_0 \times \X)$, a subset of $\Pi(\R^d_0 \times \X)$. This reformulation simplifies the analysis, and the $K$-transform plays a central role in enabling this transition.

In this paper, we perform harmonic analysis on the space $\Pi(\R^d_0 \times \X)$ and extend this analysis to the cone of vector-valued discrete Radon measures $\K(\X)$ using the reflection mapping $\CR$. Additionally, we investigate the correlation measures and correlation functions associated with probability measures on the cone. This area of study was previously developed by Kondratiev, Lytvynov, and Vershik in \cite{KLV} for the cone of positive discrete Radon measures on a Riemannian manifold, and by Finkelshtein, Kondratiev, Kuchling, Lytvynov, and Oliveira in \cite{PhD}, \cite{ph} for the cone of positive discrete Radon measures on $\X$. Harmonic analysis on configuration spaces over Riemannian manifolds was developed in \cite{MR1914839} by Kondratiev and Kuna, and we apply the corresponding approach to the cone $\K(\X)$.

	\section{Vector Valued Random Discrete Measures}

	\subsection{ From Configuration Spaces  to  Discrete Measures}
	
	Consider the Euclidean space $\X$ and the locally compact pointed 
	space $\R^d_0=\X\setminus \{0\}$.
	Let $\Ga(\R^d_0 \times \X)$ denote the configuration space over 
	$\R^d_0\times \X$, i.e., the set of  all locally finite  subsets 
	$\ga=\{(v,x)\}\subset \R^d_0 \times \X$.
    A subset is considered locally finite if it contains only finitely many points within any compact set. We equip this space with 
	the vague topology as described in \cite{AKR}, and the corresponding $\sigma$-algebra is  denoted by
	$\B(\Ga(\R^d_0 \times \X))$. 
	
	Within all configurations, we focus on a specific subspace $\Ga_{p}(\R^d_0 \times \X)$, referred to as the space of pinpointed configurations. 
	By definition, $\Ga_{p}(\R^d_0 \times \X)$ consists of configurations
	$\ga \in \Ga (\R^d_0 \times \X)$ satisfying the condition that if  $(v_1, x_1), (v_2,x_2)\in \ga$
	and $ (v_1,x_1)=(v_2,x_2)$ with $x_1=x_2$,  then $v_1= v_2$. This means we cannot have two points with the same location in $\X$ but different values in $\R^d_0$. As easily seen
	$\Ga_{p}(\R^d_0 \times \X)\in \B(\Ga (\R^d_0 \times \X))$. For any $\ga\in \Ga_{p}(\R^d_0 \times \X)$,
	we can write the configuration as
	$$
	\ga =\{ (v_x, x)\} ,\; \tau(\ga)= \{x\;|\; (v_x,x)\in \ga\}\subset \X,
	$$
    where $\tau(\ga)$ denotes the projection of the configuration onto its spatial positions in $\X$.
	%It is important to note that $\tau(\gamma)$ is not typically a configuration but rather a discrete set in $\mathcal{X}$, and we will focus on this scenario.
We can also examine smaller subsets of configurations where \(\tau(\gamma)\) remains a configuration. These subsets correspond to {\it marked point processes}, which offer a more tractable framework for study and analysis.
    
    %We can also consider smaller subsets of configurations where $\tau(\gamma)$ is once again a configuration. These configurations are known as marked point processes, which are more straightforward to study and analyze.

	For any configuration  $\ga\in \Ga_p (\R^d_0 \times \X)$ and any compact set $\La\in \B_c(\X)$, we define the 
	local velocity functional as:
	$$
	V_\La (\ga) = \sum_{x\in\tau(\ga)\cap \La} |v_x|\leq \infty.
	$$
   This functional measures the sum of the magnitudes of velocities associated with points in $\ga$ whose spatial positions lie within the compact set $\La$.  Using this functional, we define the Plato space
	$\Pi(\R^d_0\times\X)$ as a subset of the pinpointed configuration space  $ \Ga_p (\R^d_0 \times \X)$. Specifically, 
	
	$$
	\Pi(\R^d_0\times\X) =\{ \ga\in \Ga_p (\R^d_0 \times \X) \; \forall  \La\in \B_c(\X) \;|\;  V_{\La}(\ga)<\infty\}.
	$$
Let $\CM(\X, \X)$ denote the set of all vector-valued Radon measures defined on the Borel subsets $\B(\X)$, such that

	$$
	\CM(\X,\X) \ni \mu: \B(\X) \to \X,\; \forall \La\in \B_c(\X)\;\;|\mu(\La)| <\infty.
	$$
We introduce a mapping from the Plato space $\Pi(\R^d_0 \times \X)$ to the space of vector-valued Radon measures:

	$$
	\Pi(\R^d_0\times\X) \ni \ga=\sum_{(v_x, x)\in \ga} \delta_{(v_x, x)} \to \eta= \CR\ga =\sum_{x\in \tau(\ga)} v_x\delta_x \in \CM(\X),
	$$
	where the image of this mapping is denoted as $\K(\X)$:
	$$
	\K(\X)= \CR(\Pi(\R^d_0\times\X)) \subset \CM(\X).
	$$
The topology of $\K(\X)$ is derived from the topology of the configuration space $\Ga(\R^d_0 \times \X)$.

In this framework, the points of \(\Pi(\R^d_0 \times \X)\) are identified as *ideas*, while their images under the reflection mapping \(\mathcal{R}: \Pi(\R^d_0 \times \X) \to \K(\X)\) are interpreted as *observed objects* (i.e., the mapped representations of the ideas). The structure and properties of the object space \(\K(\X)\) are entirely governed by the underlying Plato space \(\Pi(\R^d_0 \times \X)\), which serves as the foundational framework for all further analyses of \(\K(\X)\).

From a physical perspective, this construction models a system of particles in the space \(\X\), where each spatial position \(x \in \X\) is associated with a velocity (or mark) \(v_x \in \R^d_0\). This representation encapsulates the particle system's spatial configuration and dynamic behaviour.

\subsection{Poisson Measure on $\Ga(\R_0^d\times \X)$ }
	
The classical ideal gas is described by a pair $(v_x, x)$  where $x\in \X$ represents the position and $v_x\in \R_0^d= \X\setminus \{0\}$ represents the velocity of a particle. The phase space of the gas is then given by $\R_0^d\times \X$. Microscopic states of the gas are represented by phase space configurations, which are locally finite subsets of $\R_0^d\times \X$. The space of such configurations is denoted by:
	$$
	\Ga(\R_0^d\times \X)\ni \ga=\{(v_x, x)\}\subset \R_0^d\times \X.
	$$
A macroscopic state of the gas shall be a probability measure on $\Ga(\R_0^d\times \X)$. For the ideal gas, such a measure is a Poisson point measure.
Let us take a Radon measure $\la$ on $\R_0^d$ and a non-atomic  Lebesgue measure $m$ on $\X$. Consider an intensity measure
	$\la\otimes m $ on $\Ga(\R_0^d\times \X)$ and the Poisson measure $\pi_{\la\otimes m}$ on $\Ga(\R_0^d\times \X)$.
	The following Laplace transform may characterize that measure:
	$$
	\int_{\Ga(\R_0^d\times \X)} \exp ( \sum_{(v_x, x)\in \ga}  \psi(v_x, x))\pi_{\la\otimes m}(d\gamma)= \int_{\R_0^d\times \X}
	(\exp \psi(v_x, x) -1) \la(dv)m(dx)
	$$
	for any continuous $\psi$ with bounded support.
The support of this measure is essentially smaller and may be restricted to  the pinpointed configurations:
		$$
	\ga =\{ (v_x, x)\} ,\; \tau(\ga)= \{x\;|\; (v_x, x)\in \ga\} \subset \X.
	$$
	Note that the set $\tau(\ga)$, in general, is a discrete subset of $\X$ but is not
	locally finite, i.e., is not a configuration in $\X$.
	
	To have a physically relevant property of finite local energy,
	 we need to assume an additional property of the measure $\la$.
	Namely, we consider measures with finite first moments
	$$\int_{\R_0^d} |v|\la(dv) <\infty.
	$$
	This situation will be at the centre of our considerations. 
	
	Note that in the case of finite measures $\la$, the Poisson measure
	$\pi_{\la\otimes m}$  may be considered in the framework of marked point 
	processes \cite{Ki}. The latter allows for an easier technical analysis.

	We are interested in defining the following mapping:
	$$
	\ga\mapsto \eta=\sum_{x\in\tau(\ga)} v_x \delta_x
	$$ 

which is a transformation mapping a configuration \(\ga\) to a discrete vector-valued Radon measure on \(\X\). The space of such measures, which becomes random when \(\ga\) is random, exhibits rich mathematical structures involving complex analysis and geometry elements. For the simpler case of real-valued discrete measures, refer to \cite{KLV}. 

The scenario most analogous to the classical case in statistical physics arises with probability measures \(\lambda\), particularly when dealing with Maxwellian (Gaussian) distributions. In these cases, integrating concerning velocities yields a Poisson distribution \(\pi_m\) on the configuration space of positions \(\Ga(\X)\). However, this result is unattainable when the intensity measure satisfies \(\lambda(\X) = \infty\). The distinction between these two cases is fundamental, as the properties of the velocity distribution play a critical role in determining the spatial structure of the model in the case of an infinite measure \(\lambda\)."

    %which transforms a configuration 
%$\ga$ into a discrete vector-valued Radon measure on $\X$. The space of such measures (which are random for random $\ga$) has rich mathematical properties involving complex analysis and geometry.For the more straightforward case of real-valued discrete measures, see \cite{KLV}. The situation in statistical physics that comes closest to the traditional scenario involves probability measures $\la$, mainly when dealing with Maxwell (Gauss) distributions. In these cases, integrating concerning velocities leads to a Poisson distribution $\pi_m$ in the position configuration space $\Ga(\X)$, which is unattainable for an infinite intensity measure $\la(\X)=\infty$. There is a fundamental difference between these two scenarios, as the characteristics of velocity distributions can significantly impact the spatial structure of the model when dealing with an infinite measure $\la$.

\subsection{ Measures on $\K(\X)$}
Probability measures on the space of compact subsets \(\K(\X)\) are fundamental tools for modelling various phenomena in mathematical physics, biology, and the representation theory of infinite-dimensional groups. Our objective is to construct such measures by utilizing the reflection map \(\CR\) applied to carefully selected measures on the Plato space \(\Pi(\R^d_0 \times \X)\). To exemplify this method, we begin with elementary cases related to Poisson measures. Specifically, we first define a non-atomic Radon measure \(\lambda\) on \(\R^d_0\) that satisfies the following conditions:

%Probability measures on the space of compact subsets $\K(\X)$ play a crucial role in modelling various phenomena across mathematical physics, biology, and the representation theory of infinite-dimensional groups. We aim to construct such measures by applying the reflection map $\CR$ to appropriately chosen measures on the Plato space $\Pi(\R^d_0 \times \X)$. To illustrate this approach, we begin by considering simple examples associated with Poisson measures. First, we define a non-atomic Radon measure $\la$ on $\R^d_0$ that satisfies the following properties:
    $$
	\la(\R^d_0) =\infty,
	$$
	$$
	\forall n\in \N\;\;  \int_{\R^d_0} |v|^n \la(dv) <\infty.
	$$
As a concrete example, we take
	$$
	\la(dv)= \frac{1}{|v|^d} e^{-|v|^2} dv,
	$$
which represents a modified version of the Maxwell distribution for velocities in statistical physics, incorporating a singularity at $v = 0$. A broader class of such measures can be expressed in the following form
	$$
	\la(dv) = \frac{1}{|v|^\al} e^{-|v|^\beta} dv,
	$$
where $\alpha \in [d, d+1)$ and $\beta > 0$. These generalizations allow for flexibility in modelling systems with different velocity distributions.

Let $m(dx)$ denote a non-atomic Lebesgue measure on $\X$, and define  the intensity measure:
	$$
	\sigma(dv,dx)= \lambda(dv) m(dx)
	$$
	on $\R^d_0 \times \X$.  This measure combines the velocity distribution $\lambda$ with the spatial measure $m$.
    
 For any $\La \in \B_c(\R^d_0\times \X)$ we have the following disjoint decomposition:
 $$\Ga(\La)=\bigcup_{n=0}^{\infty}\Ga^n(\La),$$
 where $\Ga(\La):=\{\gamma \in \Ga(\R^d_0 \times \X): \gamma \subset \La\}$ is the set of all configurations supported in $\La$ and  $\Ga^n(\La):=\{\gamma \in \Ga(\La): |\gamma|=n \}$ is the set of $n$-point configurations. The Lebesgue-Poisson measure on $\Ga(\La)$ is given by:  
 $$\mathcal{L}_\sigma:= \sum_{n=0}^\infty \frac{1}{n!}\sigma^{(n)},$$
 where $\sigma^{(n)}$ is the symmetric product measure  on $\Ga^n(\La)$, defines as $\sigma^{(n)}:=(\la \otimes m)^{\otimes n} \circ sym_n^{-1}$. Poisson measure on $\Ga(\La)$ is defined as $\pi_\sigma^\La:= e^{-\sigma(\La)}\mathcal{L}_\sigma$.
 
 There is a standard definition of a Poisson measure 
	$\pi_\sigma$ 
	on $\Ga(\R^d_0 \times \X)$, therefore we can proceed as follows: for any $\psi\in C_0(\R^d_0 \times \X)$ (continuous functions with compact supports) define
	
	$$<\psi, \ga>=\sum_{(v,x)\in \ga} \psi(v,x),\;\; \ga \in \Ga(\R^d_0 \times \X).
	$$
	The Poisson measure $\pi_{\sigma} $ is defined via its Laplace transform:
	$$
	\int_{\Ga(\R^d_0 \times \X)} e^{<\psi,\ga>} \pi_{\sigma}  (d\ga)=
	\exp{\int_{\R^d_0 \times \X} (e^{\psi(v,x)} -1) } \la(dv) m(dx).
	$$
	As a consequence of this definition, we have
	$$
	\int_{\Ga(\R^d_0 \times \X)} <\psi,\ga> \pi_\sigma (d\ga) = \int_{\R^d_0 \times \X} \psi(v,x) \la(dv) m(dx)
	$$
	and this relation can be extended to the case of any $\psi\in L^1 (\la\otimes m)$.

	\begin{Lemma}
		The Poisson measure $\pi_{\sigma}$ in concentrated on $\Pi(\R^d_0\times\X)$, i.e.,
		
		$$
		\pi_{\sigma} (\Pi(\R^d_0\times\X))=1.
		$$

	\end{Lemma}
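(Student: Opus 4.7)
The plan is to check separately the two conditions defining $\Pi(\R_0^d\times\X)$: (a) that $\pi_\sigma$-almost every configuration is pinpointed, and (b) that for every compact $\Lambda\subset\X$, the local velocity functional $V_\Lambda$ is finite $\pi_\sigma$-a.s. Since $\Pi(\R_0^d\times\X)$ is the intersection of $\Gamma_p$ with the countable intersection $\bigcap_n\{V_{B_n}<\infty\}$ along an exhaustion of $\X$ by balls, it suffices to verify each of these on a full-measure set.

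For (a), I would use that $\lambda\otimes m$ is non-atomic on $\R_0^d\times\X$, and more importantly that $m$ is non-atomic on $\X$. A standard argument for Poisson point processes with non-atomic intensity shows that $\pi_\sigma$-a.e.\ $\gamma$ is simple, and in fact the projection $\tau(\gamma)$ contains no coincidences: for any compact $\Lambda\in\B_c(\X)$, one restricts $\pi_\sigma$ to the bounded set $\R_0^d\times\Lambda$ of finite $\sigma$-mass (true locally because $\lambda$ has finite first moment; to make the Poisson measure well-defined on compact slices we may need a further exhaustion of $\R_0^d$ by sets of finite $\lambda$-mass, then pass to the limit), conditions on the number of points, and observes that two independent $\lambda\otimes m$-distributed points almost surely have distinct $x$-coordinates by Fubini and non-atomicity of $m$. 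Hence $\pi_\sigma(\Gamma_p)=1$.

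For (b), fix $\Lambda\in\B_c(\X)$ and take the test function $\psi(v,x)=|v|\mathbf{1}_\Lambda(x)$. By the standing hypothesis $\int_{\R_0^d}|v|\,\lambda(dv)<\infty$ and $m(\Lambda)<\infty$,
$$\int_{\R_0^d\times\X}\psi(v,x)\,\lambda(dv)\,m(dx)=\Big(\int_{\R_0^d}|v|\,\lambda(dv)\Big)\,m(\Lambda)<\infty,$$
so $\psi\in L^1(\lambda\otimes m)$. Applying the Campbell-type identity already extended to $L^1(\lambda\otimes m)$ in the excerpt,
$$\int_{\Gamma(\R_0^d\times\X)}\langle\psi,\gamma\rangle\,\pi_\sigma(d\gamma)=\int\psi\,d(\lambda\otimes m)<\infty.$$
Since $\langle\psi,\gamma\rangle=V_\Lambda(\gamma)$ on $\Gamma_p$, this forces $V_\Lambda(\gamma)<\infty$ for $\pi_\sigma$-a.e.\ $\gamma$. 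Taking $\Lambda=B_n$ in a countable exhaustion of $\X$ and intersecting the resulting full-measure sets, we obtain $V_\Lambda(\gamma)<\infty$ simultaneously for all compact $\Lambda$, because any such $\Lambda$ sits inside some $B_n$ and $V$ is monotone in $\Lambda$.

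The only delicate point is the justification of Campbell's formula for $\psi\notin C_0$: the Laplace formula in the excerpt is stated for continuous functions with compact support, and one must pass through monotone/truncation arguments (e.g.\ $\psi_R=(|v|\wedge R)\mathbf{1}_\Lambda$ with $R\to\infty$, or restriction to $\{|v|\ge\varepsilon\}$ with $\varepsilon\to 0$) to reach general $L^1(\lambda\otimes m)$ integrands; this is the only nontrivial analytic step, and it relies crucially on the finite first moment of $\lambda$. Once this is in place, (a) and (b) together yield $\pi_\sigma(\Pi(\R_0^d\times\X))=1$.
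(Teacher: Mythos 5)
Your proposal is correct and its core is exactly the paper's argument: apply the Campbell formula to $\psi(v,x)=|v|\mathbbm{1}_{\Lambda}(x)$, use the finite first moment of $\lambda$ to conclude $V_\Lambda<\infty$ $\pi_\sigma$-a.s., and exhaust $\X$ by a countable family of compacts. You are in fact more careful than the paper, which silently omits both the verification that $\pi_\sigma$-a.e.\ configuration is pinpointed (your step (a), via non-atomicity of $m$) and the extension of the Campbell identity beyond $C_0$ integrands; these are genuine, if routine, gaps in the published one-line proof that your write-up fills.
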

	\begin{proof}
		Take any $\Lambda \in \B_c(\X)$ and define
		$$
		\psi(v,x)= |v| 1_{\La} (x).
		$$ 
		Then, 
		$$\int_{\Ga(\R^d_0 \times \X)} \sum_{(v,x)\in \ga} |v| 1_{\La}(x) \pi_{\sigma} (d\ga)=
		m(\La) \int_{\R_0^d} |v| \la(dv) <\infty.
		$$
		It means that $V_{\La} <\infty$ $\pi_{\sigma}$-a.s.,
		i.e., $\pi_{\sigma} (\Pi(\R^d_0\times\X)) = 1$.

	\end{proof}
To obtain measures on $\K(\X)$, we can employ the {\it pushforward} of measures on $\Pi(R^d_0\times \X)$ through the mapping $\CR$. One crucial step is demonstrating the measurable structures' compatibility on $\Pi(oR^d_0\times \X)$ and $\K(\X)$.
%\subsubsection{Probability Measures on $\K(\X)$}
Having already established the connection between $\K(\X)$ and $\Pi(R^d_0\times \X)$, it is natural to explore further the relationship induced by the mapping 
$\CR$. For instance, one could examine the relationship between the
$\sigma$-algebras $\B(\Pi(R^d_0\times \X))$ and $\B(\K(\X))$.
	
	\begin{Theorem}
		The image $\sigma$-algebra of $\B(\Pi(R^d_0\times \X))$ under $\CR$ and $\B(\K(\X))$ coincide, i.e.,
		\begin{displaymath}
			\B(\K(\X))=\left\{\CR(A\cap\Pi(R^d_0\times \X))\mid A\in\B(\Gamma(R^d_0\times \X))\right\}.
		\end{displaymath}
	\end{Theorem}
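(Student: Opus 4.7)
The approach is to exploit the fact that $\CR$ restricted to $\Pi(\R^d_0\times\X)$ is a bijection onto $\K(\X)$, and then transport Borel structures along this bijection. My plan has three steps.

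First, I would verify that $\CR|_{\Pi(\R^d_0\times\X)} : \Pi(\R^d_0\times\X) \to \K(\X)$ is a bijection. Surjectivity is the very definition of $\K(\X)$. For injectivity, suppose $\gamma_1, \gamma_2 \in \Pi(\R^d_0\times\X)$ satisfy $\CR\gamma_1 = \CR\gamma_2 = \eta$. The atoms of $\eta$ coincide with $\tau(\gamma_1) = \tau(\gamma_2)$, and for each atom $x$, the value $\eta(\{x\}) = v_x$ is uniquely determined. The pinpointed property forbids two distinct velocities at the same position, so we can reconstruct $\gamma_i = \{(v_x,x) : x\in\tau(\gamma_i)\}$ unambiguously, giving $\gamma_1 = \gamma_2$.

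Second, I would invoke the topological construction of $\K(\X)$: its topology is declared to be the one inherited from $\Gamma(\R^d_0\times\X)$ via $\CR$, so that $\CR|_{\Pi(\R^d_0\times\X)}$ becomes a homeomorphism (equivalently, a Borel isomorphism) onto $\K(\X)$. This immediately yields the identification $\B(\K(\X)) = \{\CR(B) \mid B \in \B(\Pi(\R^d_0\times\X))\}$, where $\B(\Pi(\R^d_0\times\X))$ denotes the Borel $\sigma$-algebra of the subspace.

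Third, since $\Pi(\R^d_0\times\X)\in\B(\Gamma(\R^d_0\times\X))$ (established earlier in the excerpt for both $\Gamma_p$ and then $\Pi$ via the countable conditions $V_\Lambda<\infty$), the subspace Borel $\sigma$-algebra coincides with the trace $\sigma$-algebra:
\[
\B(\Pi(\R^d_0\times\X)) = \{A\cap\Pi(\R^d_0\times\X) \mid A\in\B(\Gamma(\R^d_0\times\X))\}.
\]
Substituting into the previous equality completes the proof. The reverse inclusion, that any image $\CR(A\cap\Pi(\R^d_0\times\X))$ lies in $\B(\K(\X))$, follows from the same Borel-isomorphism argument applied to the forward direction.

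The main obstacle is the second step: making precise the vague phrase that the topology on $\K(\X)$ is ``derived from'' that of $\Gamma(\R^d_0\times\X)$. The cleanest reading is that $\K(\X)$ is equipped with the final topology transported by the bijection $\CR|_{\Pi(\R^d_0\times\X)}$, which trivially makes $\CR$ a homeomorphism. If instead one wishes to check continuity of $\CR$ and of its inverse directly, using vague convergence of measures on $\X$ versus vague convergence of configurations on $\R^d_0\times\X$, one must argue that a sequence $\gamma_n\to\gamma$ in the Plato space forces $\sum_x v_x^{(n)}\delta_x \to \sum_x v_x\delta_x$ in $\CM(\X)$ and, conversely, that convergence of the discrete measures together with the pinpointed bookkeeping recovers convergence of the pairs $(v_x,x)$. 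Once this is verified, everything else is formal manipulation of trace $\sigma$-algebras.
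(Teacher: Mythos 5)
Your proposal is correct and follows essentially the same route as the paper, which itself only cites the topological considerations of \cite{KLV}: the reflection map restricted to the Plato space is a bijection onto $\K(\X)$, it becomes a Borel isomorphism once the topology of $\K(\X)$ is identified, and the trace $\sigma$-algebra identity for $\Pi(\R^d_0\times \X)\in\B(\Gamma(\R^d_0\times \X))$ finishes the argument. The one step you explicitly defer --- bi-measurability of $\CR$ and $\CR^{-1}$ when $\K(\X)$ is viewed inside $\CM(\X)$ with the vague topology rather than with the pushforward topology (under which the statement would be tautological) --- is precisely the part the paper also delegates to the reference, so your sketch is no less complete than the paper's own proof.
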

	\begin{proof}
		The proof follows directly from the topological considerations presented in \cite{KLV} for the case of real-valued discrete measures.
		
	\end{proof}
	
Denote $\mu_\la$ the image  measure on $\K(\X)$ under  
	the reflection map corresponding to $\pi_\sigma$.
	For $h\in\X$ and $\phi\in C_0(\X) $ introduce a function
	$L_{h,\phi}: \K(\X) \to \R$ via:
	$$
	L_{h,\phi}(\eta)=  <h\otimes \phi, \eta> = \int_{\X} \phi(x) <h,\eta(dx)> = \sum_{x\in\tau(\eta)} \phi(x) <h,v_x>.
	$$
	The definition of the Poisson measure implies
	$$
	\int_{\K} e^{<h\otimes \phi, \eta>} \mu_\la (d\eta)= \exp( \int_{\R^d_0\times \X} (e^{<h,v>\phi(x)} -1) \la(dv) m(dx)).
	$$
Define the function:
	$$
	\Phi_\la^h (r) = \exp({\int_{\R^d_0} (e^{<h,v>r} -1) \la(dv)}), \;\;r\in\R,
	$$
	within the integral can be rewritten as
	$$
	\int_{\K} e^{<h\otimes \phi, \eta>} \mu_\la (d\eta)= e^{\int_{\X} \log (\Phi_\la^h (\phi(x))) m(dx)}.
	$$
The definition of the measure $\mu_\la$
  through its Laplace transform can be 
  understood in terms of the equation above.

\section{Harmonic Analysis on $\Pi(\R^d_0\times \X)$}\label{harmana_pi}

The inherent nonlinearity of infinite-dimensional spaces presents significant challenges in directly analyzing the dynamics formulated on \(\Pi(\R^d_0 \times \X)\). To address this, we propose reformulating the equations within the space of finite configurations, \(\Pi_0(\R^d_0 \times \X)\), which is a proper subset of \(\Pi(\R^d_0 \times \X)\). Applying the \(K\)-transform facilitates this reformulation. A comparable methodology was utilized by Kondratiev and Kuna in \cite{MR1914839} for conducting harmonic analysis on configuration spaces over Riemannian manifolds.
 
\subsection{The $K$-Transform}
 
In what follows, we will introduce auxiliary space connected to the Plato space $\Pi(\R^d_0\times \X)$ via the $K$-transform. Moreover we will show relations between functions on $\Pi(\R^d_0\times \X)$ and $\Pi_0(\R^d_0\times \X)$.
	\begin{Definition}
		The Plato space of finite configurations $\Pi_0(\R^d_0\times \X)$ is defined as:
		\begin{displaymath}
			\Pi_0(\R^d_0\times \X):=\{\gamma\in\Pi(\R^d_0\times \X)\mid |\gamma|<\infty\},
		\end{displaymath}
		where $|\cdot|$ denotes the number of elements in a set. Its topology is induced by the set $\Gamma_0(\R^d_0\times \X)$. 
	\end{Definition}
 The spaces $\Pi(\R^d_0\times \X)$ and $\Pi_0(\R^d_0\times \X)$ play completely different roles. 
 As mentioned before, $\Pi(\R^d_0\times \X)$  represents the space of ideas, whereas $\Pi_0(\R^d_0\times \X)$ is regarded as a mathematical construct that exists alongside $\Pi(\R^d_0\times \X)$.  Moreover, these spaces are also topologically different.\\
We will introduce $n$-point configurations, which are used to decompose the space.
\begin{Definition}

\begin{enumerate}
\item For $n\in\N_0$, the set of $n$-point configurations is defined as:
			\begin{displaymath}
				\Pi_0^{(n)}(\R^d_0\times \X):=\left\{\gamma\in\Pi_0(\R^d_0\times \X)\colon|\gamma|=n\right\}.
			\end{displaymath}
\item For a set $\Lambda\subset\R^d_0\times\X$, the set of all configurations supported in $\Lambda$ is defined as:
			\begin{displaymath}
				\Pi_0(\Lambda):=\left\{\gamma\in\Pi_0(\R^d_0\times \X)\colon\gamma\subset\Lambda\right\}.
			\end{displaymath}
\item A Borel set $A\subset\Pi_0(\R^d_0\times \X)$ is called bounded if there exists $\Lambda\subset\R^d_0\times\X$ compact and $N\in\N$ such that
			\begin{displaymath}
				A\subset\bigcup_{n=0}^N\Pi_0^{(n)}(\Lambda).
			\end{displaymath}
Denote the system of all such sets by $\B_b(\Pi_0(\R^d_0\times \X))$.
\end{enumerate}

We have the following decompositions:
		\begin{displaymath}
			\Pi_0(\R^d_0\times \X)=\bigsqcup_{n=0}^\infty\Pi_0^{(n)}(\R^d_0\times \X)=\bigcup_{\Lambda\in\B_c(\R^d_0\times \X)}\Pi_0(\Lambda),
		\end{displaymath}
where the first union is disjoint and $\B_c(\R^d_0\times \X)$ denotes all Borel subsets of $\R^d_0\times \X$ with compact closure.
\end{Definition}
	To introduce the $K$-transform between $\Pi_0(\R^d_0\times \X)$ and $\Pi(\R^d_0\times \X)$, we first need to define well-defined classes of functions on which this transform can be applied. Additionally, we will introduce a specific class of measures that can extend the $K$-transform to a wider range of functions. We use the notation $B^0(\Pi_0(\R^d_0\times \X))$ to represent the set of all measurable functions $G\colon\Pi_0(\R^d_0\times \X)\to\R$.
	
\begin{Definition}\label{locsupp_pi}
\begin{enumerate}
\item  A function $G\colon\Pi_0(\R^d_0\times \X)\to\R$ is said to be bounded with local support if there exists $C>0$ and $\Lambda\in\B_c(\R^d_0\times \X)$ such that the following estimate holds for all $\eta\in\K_0(\X)$:
			\begin{equation}\label{locbd_pi}
				|G(\gamma)|\leq C\mathbbm{1}_{\Pi_0(\Lambda)}(\gamma).
			\end{equation}
It is important to note that this implies that $G(\gamma)=0$ if $\gamma\cap\Lambda^c\neq\emptyset$. We denote by $B_\mathrm{ls}(\Pi_0(\R^d_0\times \X))$ all measurable functions $G\colon\Pi_0(\R^d_0\times \X)\to\R$ which are bounded with local support.
\item A function $G\colon\Pi_0(\R^d_0\times \X)\to\R$ is called bounded with bounded support, if there exists $\Lambda\in\mathcal{B}_c(\R^d_0\times \X), N\in\N$ and $C>0$ such that
			\begin{equation}\label{bbs}
				|G(\gamma)|\leq C\mathbbm{1}_{\Pi_0(\Lambda)}(\gamma)\mathbbm{1}_{\{|\gamma|\leq N\}}(\gamma), \end{equation}
i.e. $G(\gamma)=0$ whenever $|\gamma|>N$ or $\gamma\cap\Lambda^c\neq\emptyset$. Denote the space of all such measurable functions by $B_\mathrm{bs}(\Pi_0(\R^d_0\times \X))$. Evidently, it follows that we have $B_\mathrm{bs}(\Pi_0(\R^d_0\times \X))\subset B_\mathrm{ls}(\Pi_0(\R^d_0\times \X))$.
\item A measure $\rho$ on $\Pi_0(\R^d_0\times \X)$ is called locally finite if for any $\Lambda\in\B_c(\R^d_0\times \X)$ and for any $m\in\N_0$, the value of $\rho(\Pi_0^{(m)}(\Lambda))$ is finite. Similarly, $\rho(A)$ is finite for all bounded measurable sets $A\subset\Pi_0(\R^d_0\times \X)$. The space of all locally finite measures on $\Pi_0(\R^d_0\times \X)$ is denoted by $\CM_\mathrm{lf}(\Pi_0(\R^d_0\times \X))$.
\end{enumerate}
\end{Definition}
 In the following, we define the $K$-transform and its main properties.

	\begin{Definition}[\cite{MR1914839}]
		\label{K transform}
		Let $G\in B_\mathrm{ls}(\Pi_0(\R^d_0\times \X))$. The $K$-transform of $G$ is the function $KG\colon\Pi(\R^d_0\times \X)\to\R$ defined by:
		\begin{displaymath}
			(K_\Pi G)(\gamma)=(KG)(\gamma):=\sum_{\xi\Subset\gamma}G(\xi),
		\end{displaymath}
		where the inclusion $\xi\Subset\gamma$ means that the sum is taken over all finite subsets of $\gamma$. 
	\end{Definition}
 To avoid any confusion, the dependence on $\Pi$ is omitted. We can see  that by the definition of $B_\mathrm{ls}(\Pi_0(\R^d_0\times \X))$, the $K$-transform is well-defined on such functions.\\
 We recall some results similar to the results which can be found in the theory of homogeneous configuration spaces \cite{MR1914839}.
\begin{Proposition}[\cite{MR1914839}]
\begin{enumerate}
\item The $K$-transform maps functions from $B_\mathrm{ls}(\Pi_0(\R^d_0\times \X))$ into cylinder functions $\mathcal{F}L^0(\Pi(\R^d_0\times \X))$, i.e. for $G\in B_\mathrm{ls}(\Pi_0(\R^d_0\times \X))$, there exists $\Lambda\in\B_c(\R^d_0\times \X)$ such that
			\begin{displaymath}
				(KG)(\gamma)=(KG)(\gamma\cap\Lambda),
			\end{displaymath}
			for all $\gamma\in\Pi(\R^d_0\times \X)$.
			\item The $K$-transform maps $B_\mathrm{bs}(\Pi_0(\R^d_0\times \X))$ to polynomially bounded functions, i.e. for $G\in B_\mathrm{bs}(\Pi_0(\R^d_0\times \X))$, there exist $\Lambda\in\B_c(\R^d_0\times \X), N\in\N$ and $C>0$ such that
			\begin{displaymath}
				|KG|(\gamma)\leq C(1+|\gamma\cap\Lambda|)^N,\ \gamma\in\Pi(\R^d_0\times \X).
			\end{displaymath}
			\item The mapping $K\colon B_\mathrm{ls}(\Pi_0(\R^d_0\times \X))\to\mathcal{F}B^0(\Pi(\R^d_0\times \X))$ is invertible with
			\begin{displaymath}
				K^{-1}F(\gamma)=\sum_{\xi\subset\gamma}(-1)^{|\gamma\setminus\xi|}F(\xi),\ \gamma\in\Pi_0(\R^d_0\times \X).
			\end{displaymath}
			\item $K$ is linear and positivity preserving.
			\item If $G\in B_\mathrm{ls}(\Pi_0(\R^d_0\times \X))$ and continuous, then $KG$ is also continuous.
		\end{enumerate}
	\end{Proposition}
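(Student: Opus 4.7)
The plan is to mirror the corresponding arguments of Kondratiev and Kuna \cite{MR1914839} for the homogeneous configuration space, adapting them to the Plato setting. The key observation is that the $K$-transform is a purely combinatorial operation on the subset lattice of $\gamma$, so the Plato condition $V_\Lambda(\gamma)<\infty$ plays no role in the algebraic manipulations; all the sums are accessible because for every $\gamma\in\Pi(\R^d_0\times\X)$ and every $\Lambda\in\B_c(\R^d_0\times\X)$ the intersection $\gamma\cap\Lambda$ is finite.

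Parts (1) and (2) are direct consequences of this observation. For (1), if $G\in B_\mathrm{ls}$ vanishes outside $\Pi_0(\Lambda)$, then the only finite subsets $\xi\Subset\gamma$ contributing to $(KG)(\gamma)$ are those contained in $\gamma\cap\Lambda$, so $(KG)(\gamma)=(KG)(\gamma\cap\Lambda)$. For (2), the extra constraint $|\xi|\leq N$ bounds the number of contributing terms by $\sum_{k=0}^N\binom{|\gamma\cap\Lambda|}{k}$, which is absorbed into a constant multiple of $(1+|\gamma\cap\Lambda|)^N$.

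For (3), I would verify $K^{-1}KG=G$ on $\Pi_0(\R^d_0\times\X)$ and $KK^{-1}F=F$ on $\Pi(\R^d_0\times\X)$ by exchanging the two nested finite sums and applying the elementary identity $\sum_{\tau\subset\eta}(-1)^{|\tau|}=\delta_{\eta,\emptyset}$; the cylindricity of $F$ established in (1) is what turns the a priori formal double sum in $KK^{-1}F$ into a finite one. Part (4) is immediate from the definition $(KG)(\gamma)=\sum_{\xi\Subset\gamma}G(\xi)$. For (5), the cylindricity of $KG$ reduces continuity to that of the finite sum $\gamma\mapsto\sum_{\xi\Subset\gamma\cap\Lambda}G(\xi)$ in the vague topology inherited from $\Gamma(\R^d_0\times\X)$.

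The main obstacle will be the continuity statement (5): the map $\gamma\mapsto\gamma\cap\Lambda$ is not, in general, vaguely continuous because of how points on $\partial\Lambda$ are treated. The fix, exactly as in \cite{MR1914839}, is to exploit the freedom in the choice of $\Lambda$ (enlarging it slightly to $\Lambda'\supset\Lambda$ with $\partial\Lambda'$ avoiding the finitely many points of $\gamma\cap\overline{\Lambda'}$) and the continuity of $G$ on $\Pi_0(\R^d_0\times\X)$ to pass to the limit term by term. None of these steps interacts with the Plato condition $V_\Lambda<\infty$, so the proofs of \cite{MR1914839} transfer without essential change to $\Pi(\R^d_0\times\X)$.
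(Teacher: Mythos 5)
The paper itself offers no proof of this Proposition: it is stated as a recalled result and simply attributed to \cite{MR1914839}. Your sketch correctly reproduces the standard arguments from that reference (finiteness of $\gamma\cap\Lambda$ for locally finite $\gamma$ and compact $\Lambda$, the M\"obius-inversion identity $\sum_{\tau\subset\eta}(-1)^{|\tau|}=\delta_{\eta,\emptyset}$, and the boundary-avoidance trick for continuity), and you rightly observe that the Plato condition $V_\Lambda(\gamma)<\infty$ is never invoked, so the transfer to $\Pi(\R^d_0\times\X)$ is immediate; this is exactly the route the paper intends by its citation.
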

	Let us consider the following example of the $K$-transform of (Lebesgue-Poisson) coherent state corresponding to the function $f$.
 
	\begin{Example}[\cite{MR1914839}]\label{coherent_pi}
		For a function $f\in C_0(\R^d_0\times \X)$, define the coherent state or Lebesgue-Poisson exponent as:
		\begin{displaymath}
			e_\CL(f)\colon\Pi_0(\R^d_0\times \X)\to\R,\ \gamma\mapsto e_\lambda(f,\gamma):=\prod_{(v,x)\in\gamma}f(v,x).
		\end{displaymath}
		Then, $e_\CL(f)\in B_\mathrm{ls}(\Pi_0(\R^d_0\times \X))$. Its $K$-transform is given by:
		\begin{displaymath}
			(Ke_\CL(f))(\gamma)=\prod_{(v,x)\in\gamma}(1+f(v,x)),\ \gamma\in\Pi(\R^d_0\times \X).
		\end{displaymath}
	\end{Example}
	We introduce the $\star$-convolution related to the $K$-transform as the standard convolution on $\R^d$ to the Fourier transform.
	\begin{Definition}\label{def_convolution}
		Let $G_1,G_2\in B_\mathrm{ls}(\Pi_0(\R^d_0\times \X))$. Define the $\star$-convolution as:
		\begin{displaymath}
			(G_1\star G_2)(\gamma):=\sum_{(\xi_1,\xi_2,\xi_3)\in\mathcal{P}^3_\emptyset(\gamma)}G_1(\xi_1\cup\xi_2)G_2(\xi_2\cup\xi_3),\ \gamma\in\Pi_0(\R^d_0\times \X),
		\end{displaymath}
where $\mathcal{P}^3_\emptyset(\gamma)$ denotes all partitions of $\gamma$ into three parts, where the parts may be empty.
	\end{Definition}
	The following relation holds:
	\begin{Proposition}[\cite{MR1914839}]\label{convolution_pi}
		Let $G_1,G_2\in B_\mathrm{ls}(\Pi_0(\R^d_0\times \X))$ be given. Then
		\begin{displaymath}
			K(G_1\star G_2)=KG_1\cdot KG_2.
		\end{displaymath}
	\end{Proposition}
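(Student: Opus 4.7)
The plan is to prove the identity pointwise: for every $\gamma\in\Pi(\R^d_0\times\X)$, expand both sides using the definitions of $K$ and $\star$, and then exhibit a bijection between the two indexing sets of the resulting sums.

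First, I would verify that $G_1\star G_2\in B_{\mathrm{ls}}(\Pi_0(\R^d_0\times\X))$ so that the left-hand side even makes sense. If $G_i$ is supported on $\Pi_0(\Lambda_i)$ and bounded by $C_i$, then in any partition $(\xi_1,\xi_2,\xi_3)\in\mathcal{P}^3_\emptyset(\gamma)$ contributing nontrivially we must have $\xi_1\cup\xi_2\subset\Lambda_1$ and $\xi_2\cup\xi_3\subset\Lambda_2$, forcing $\gamma\subset\Lambda_1\cup\Lambda_2$. This yields a local support bound for $G_1\star G_2$ (the number of partitions is bounded by $3^{|\gamma\cap(\Lambda_1\cup\Lambda_2)|}$, which is finite on $\Pi_0(\Lambda_1\cup\Lambda_2)$).

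Next, I would write out
\begin{displaymath}
K(G_1\star G_2)(\gamma)=\sum_{\xi\Subset\gamma}\sum_{(\xi_1,\xi_2,\xi_3)\in\mathcal{P}^3_\emptyset(\xi)}G_1(\xi_1\cup\xi_2)\,G_2(\xi_2\cup\xi_3),
\end{displaymath}
and
\begin{displaymath}
(KG_1\cdot KG_2)(\gamma)=\sum_{\eta_1\Subset\gamma}\sum_{\eta_2\Subset\gamma}G_1(\eta_1)\,G_2(\eta_2).
\end{displaymath}
The core step is the combinatorial identification. Given a pair $(\eta_1,\eta_2)$ of finite subsets of $\gamma$, set
\begin{displaymath}
\xi:=\eta_1\cup\eta_2,\qquad \xi_1:=\eta_1\setminus\eta_2,\qquad \xi_2:=\eta_1\cap\eta_2,\qquad \xi_3:=\eta_2\setminus\eta_1.
\end{displaymath}
Then $\xi\Subset\gamma$ and $(\xi_1,\xi_2,\xi_3)\in\mathcal{P}^3_\emptyset(\xi)$, and by construction $\eta_1=\xi_1\cup\xi_2$, $\eta_2=\xi_2\cup\xi_3$. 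Conversely, given $\xi\Subset\gamma$ and a partition $(\xi_1,\xi_2,\xi_3)$ of $\xi$, the pair $(\xi_1\cup\xi_2,\xi_2\cup\xi_3)$ is a pair of finite subsets of $\gamma$. These two assignments are mutually inverse, giving a bijection that identifies the summands term by term.

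The main obstacle is mostly bookkeeping: making sure the bijection is really well-defined (the parts of the partition must be allowed to be empty, which matches the convention in Definition \ref{def_convolution}) and that absolute convergence of the sums justifies the reindexing. Both issues are handled by the local-support hypothesis, which makes every sum a finite sum for each fixed $\gamma$; no analytic subtleties arise and the identity follows.
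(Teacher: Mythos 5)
Your argument is correct, and it is the standard proof of this identity: the paper itself gives no proof here, simply citing Kondratiev--Kuna \cite{MR1914839}, and the bijection $(\eta_1,\eta_2)\leftrightarrow(\xi;\,\eta_1\setminus\eta_2,\,\eta_1\cap\eta_2,\,\eta_2\setminus\eta_1)$ with $\xi=\eta_1\cup\eta_2$ is exactly the combinatorial mechanism behind the cited result. The reindexing is legitimate because, for fixed $\gamma\in\Pi(\R^d_0\times\X)$, local support of $G_1,G_2$ forces all contributing subsets to lie in the finite set $\gamma\cap(\Lambda_1\cup\Lambda_2)$, so every sum involved is finite.

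One small imprecision worth flagging: your preliminary step overstates what the support estimate gives. The bound $|G_1\star G_2|(\gamma)\leq C_1C_2\,3^{|\gamma|}\mathbbm{1}_{\Pi_0(\Lambda_1\cup\Lambda_2)}(\gamma)$ shows that $G_1\star G_2$ has local support, but it is \emph{not} uniformly bounded on $\Pi_0(\Lambda_1\cup\Lambda_2)$, since $|\gamma|$ is unrestricted there; so $G_1\star G_2$ need not belong to $B_\mathrm{ls}(\Pi_0(\R^d_0\times\X))$ in the strict sense of Definition \ref{locsupp_pi} (it does land back in $B_\mathrm{bs}$ when $G_1,G_2\in B_\mathrm{bs}$, where $|\gamma|\leq N_1+N_2$). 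This does not damage the proof --- $K(G_1\star G_2)(\gamma)$ is still a well-defined finite sum for each $\gamma$ by the pointwise finiteness you note --- but the justification for the left-hand side "making sense" should be phrased in terms of that pointwise finiteness rather than membership in $B_\mathrm{ls}$.
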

	\subsection{Correlation Measures on $\Pi_0(\R^d_0\times \X)$}
	Our objective is to establish categories of measures on $\Pi_0(\R^d_0\times \X)$ that correspond to probability measures on $\Pi(\R^d_0\times \X)$, utilized to model the state of our system. This approach is based on \cite{MR1914839, MR0323270}. Additionally, we will demonstrate that the group of measures on $\Pi(\R^d_0\times \X)$ with limited local moments allows us to extend the $K$-transform to $L^1$-spaces. To define a measure on $\Pi_0(\R^d_0\times \X)$, we first introduce the pre-kernel $\mathcal{K}$.
	\begin{Definition}\label{prekernel_pi}
		Define the following pre-kernel based on the $K$-transform by:
		\begin{gather}\label{Ktrf_relation}
			\begin{split}
				\mathcal{K}&\colon\B_b(\Pi_0(\R^d_0\times \X))\times\Pi(\R^d_0\times \X)\to[0,\infty)
				\\
				&(A,\gamma)\mapsto\mathcal{K}(A,\gamma):=(K\mathbbm{1}_A)(\gamma).
			\end{split}
		\end{gather}
	\end{Definition}
	We show that $\mathcal{K}$ is in fact a pre-kernel. The property $\mathcal{K}(\emptyset,\gamma)=0$ for any $\gamma\in\Pi_0(\R^d_0\times \X)$ is  obvious. For $\sigma$-additivity, let $A_i\in\B_b(\Pi_0(\R^d_0\times \X)),\ i\in\N$ be disjoint sets such that their countable union also belongs to $\B_b(\Pi_0(\R^d_0\times \X))$. Then there exist $N\in\N$ and $\Lambda\in\B_c(\R^d_0\times \X)$ such that
	\begin{displaymath}
		\bigcup_{i=1}^\infty A_i\subset\bigcup_{k=0}^N\Pi^{(k)}_0(\Lambda).
	\end{displaymath}
	This implies that for $\gamma\in\Pi_0(\R^d_0\times \X)$,
	\begin{displaymath}
		\mathcal{K}\left(\bigcup_{i=1}^\infty A_i,\gamma\right)=\sum_{\xi\Subset\gamma}\sum_{i=1}^\infty\mathbbm{1}_{A_i}(\xi)=\sum_{\substack{\xi\Subset\gamma\\ |\gamma|\leq N}}\sum_{i=1}^\infty\mathbbm{1}_{A_i}(\xi)\sum_{i=1}^\infty\sum_{\substack{\xi\Subset\gamma\\|\xi|\leq N}}\mathbbm{1}_{A_i}(\xi)=\sum_{i=1}^\infty\mathcal{K}(A,\gamma),
	\end{displaymath}
	which completes the proof of the claim. Moreover, $\CK$ can indeed be extended.
	\begin{Lemma}\label{prekernel_ext_pi}
		The pre-kernel $\mathcal{K}$ has a unique extension to a kernel on $\B(\Pi_0(\R^d_0\times \X))\times\Pi(\R^d_0\times \X)$.
	\end{Lemma}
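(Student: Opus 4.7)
The plan is to fix $\gamma\in\Pi(\R^d_0\times\X)$ and first extend the set function $A\mapsto\CK(A,\gamma)$ from $\B_b(\Pi_0(\R^d_0\times\X))$ to a measure on the full Borel $\sigma$-algebra $\B(\Pi_0(\R^d_0\times\X))$ via Carath\'eodory's extension theorem, and then to upgrade the pointwise construction to a bona fide kernel by verifying measurability of $\gamma\mapsto\CK(A,\gamma)$ for every such Borel set $A$ through a monotone class argument.

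For the pointwise extension, $\sigma$-additivity of $\CK(\cdot,\gamma)$ on $\B_b(\Pi_0(\R^d_0\times\X))$ has already been established in the paragraph preceding the statement, and $\CK(\emptyset,\gamma)=0$ is trivial. To obtain the uniqueness part of Carath\'eodory I would fix an increasing exhaustion $\La_n\nearrow\R^d_0\times\X$ by relatively compact Borel sets and invoke the decomposition
\begin{displaymath}
\Pi_0(\R^d_0\times\X)=\bigcup_{n\in\N}\bigcup_{m=0}^\infty\Pi_0^{(m)}(\La_n),
\end{displaymath}
which writes $\Pi_0(\R^d_0\times\X)$ as a countable union of bounded Borel sets. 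On each block $\CK(\Pi_0^{(m)}(\La_n),\gamma)=\binom{|\gamma\cap\La_n|}{m}<\infty$, since $\gamma\cap\La_n$ is finite by local finiteness of $\gamma\in\Pi\subset\Gamma$; this delivers the required $\sigma$-finiteness, and Carath\'eodory then yields a unique measure $\CK(\cdot,\gamma)$ on $\B(\Pi_0(\R^d_0\times\X))$.

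For measurability in $\gamma$, whenever $A\in\B_b(\Pi_0(\R^d_0\times\X))$ is contained in $\bigcup_{k=0}^N\Pi_0^{(k)}(\La)$ the identity $\CK(A,\gamma)=\sum_{\xi\Subset\gamma\cap\La,\,|\xi|\le N}\mathbbm{1}_A(\xi)$ exhibits $\CK(A,\cdot)$ as a finite sum of measurable functions of $\gamma$. Let $\mathcal D$ denote the class of $A\in\B(\Pi_0(\R^d_0\times\X))$ for which $\gamma\mapsto\CK(A,\gamma)$ is measurable; monotone convergence applied pointwise in $\gamma$ shows $\mathcal D$ is closed under countable monotone limits, so the Dynkin class theorem, applied to the $\pi$-system $\B_b(\Pi_0(\R^d_0\times\X))$, forces $\mathcal D=\B(\Pi_0(\R^d_0\times\X))$.

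The step I expect to be most delicate is confirming that the $\sigma$-algebra generated by $\B_b(\Pi_0(\R^d_0\times\X))$ really coincides with $\B(\Pi_0(\R^d_0\times\X))$ in the topology inherited from $\Gamma_0(\R^d_0\times\X)$; this reduces to verifying that the cylinder-type sets $\Pi_0^{(m)}(\La)$ generate the relevant Borel structure, an assertion pointed back to the corresponding topological discussion in \cite{MR1914839}. Once that is in place, both the $\sigma$-finite extension and the Dynkin step close routinely, and the remaining uniqueness of the kernel follows from the uniqueness of each $\gamma$-section measure.
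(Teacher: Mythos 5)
Your proposal is correct and follows essentially the same route as the paper: Carath\'eodory's extension theorem applied to the $\sigma$-additive, $\sigma$-finite set function $\CK(\cdot,\gamma)$ on the ring $\B_b(\Pi_0(\R^d_0\times \X))$. You are in fact somewhat more thorough than the paper's two-line argument, since you make the countable exhaustion behind $\sigma$-finiteness explicit and additionally verify measurability of $\gamma\mapsto\CK(A,\gamma)$, a point the paper leaves implicit.
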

	\begin{proof}
		Since $\B_b(\Pi_0(\R^d_0\times \X))$ is a ring, it suffices to demonstrate the $\sigma$-finiteness of $\mathcal{K}(\cdot,\gamma)$ n order to obtain a unique extension to  $\B(\Pi_0(\R^d_0\times \X))$. For $A\in\B_b(\Pi_0(\R^d_0\times \X))$, the sum
		\begin{displaymath}
			\mathcal{K}(A,\gamma)=\sum_{\xi\Subset\gamma}\mathbbm{1}_A(\xi)
		\end{displaymath}
		is finite. Therefore, by Carath\'eodory's theorem, $\CK$ can be uniquely extended to a kernel on $\B(\Pi_0(\R^d_0\times \X))\times\Pi(\R^d_0\times \X)$.
	\end{proof}
   
	We can further extend Relation \eqref{Ktrf_relation} to more general functions.
	\begin{Proposition}\label{prop_2.52}
		Let $G\colon\Pi_0(\R^d_0\times \X)\to\R$ be a measurable function with $G\geq 0$ or $G\in B_\mathrm{ls}(\Pi_0(\R^d_0\times \X))$. Then,
		\begin{displaymath}
			\int_{\Pi_0(\R^d_0\times \X)}G(\xi)\CK(d\xi,\gamma)=\sum_{\xi\Subset\gamma}G(\xi)=(KG)(\gamma).
		\end{displaymath}
	\end{Proposition}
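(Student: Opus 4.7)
The plan is to run the standard measure-theoretic machine, starting from the defining identity for indicators and extending successively to nonnegative simple functions, nonnegative measurable functions, and finally to $B_\mathrm{ls}$-functions via a positive/negative decomposition.

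First, for $G=\mathbbm{1}_A$ with $A\in\B_b(\Pi_0(\R^d_0\times\X))$, the claim is exactly Relation \eqref{Ktrf_relation} from Definition \ref{prekernel_pi}. By Lemma \ref{prekernel_ext_pi}, $\CK(\cdot,\gamma)$ extends to a genuine kernel on all of $\B(\Pi_0(\R^d_0\times\X))$. Fixing $\gamma\in\Pi(\R^d_0\times\X)$, both $A\mapsto\int\mathbbm{1}_A(\xi)\,\CK(d\xi,\gamma)$ and $A\mapsto\sum_{\xi\Subset\gamma}\mathbbm{1}_A(\xi)$ are $\sigma$-finite measures on $\B(\Pi_0(\R^d_0\times\X))$ that agree on the generating ring $\B_b(\Pi_0(\R^d_0\times\X))$, so by the uniqueness half of Carath\'eodory's extension theorem they agree on $\B(\Pi_0(\R^d_0\times\X))$. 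Linearity in $G$ then extends the identity to nonnegative simple functions.

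Next, for general measurable $G\geq 0$, I would pick an increasing sequence of simple functions $0\leq G_n\uparrow G$ pointwise. On the left, monotone convergence for the kernel $\CK(\cdot,\gamma)$ yields $\int G_n\,d\CK(\cdot,\gamma)\uparrow\int G\,d\CK(\cdot,\gamma)$. On the right, $\sum_{\xi\Subset\gamma}G_n(\xi)$ is a countable sum of nonnegative terms indexed by the finite subsets of $\gamma$, so monotone convergence applied to counting measure yields $\sum_{\xi\Subset\gamma}G_n(\xi)\uparrow\sum_{\xi\Subset\gamma}G(\xi)=(KG)(\gamma)$. Equating the two limits proves the identity in the nonnegative case.

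Finally, for $G\in B_\mathrm{ls}(\Pi_0(\R^d_0\times\X))$, the bound $|G(\xi)|\leq C\,\mathbbm{1}_{\Pi_0(\Lambda)}(\xi)$ from \eqref{locbd_pi} with some $\Lambda\in\B_c(\R^d_0\times\X)$ forces $G(\xi)=0$ whenever $\xi\not\subset\Lambda$. Since $\gamma\in\Pi(\R^d_0\times\X)\subset\Gamma(\R^d_0\times\X)$ is locally finite, $\gamma\cap\Lambda$ is a finite set, so both $\sum_{\xi\Subset\gamma}|G(\xi)|$ and $\int|G|\,d\CK(\cdot,\gamma)$ are dominated by $C\cdot 2^{|\gamma\cap\Lambda|}<\infty$. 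Decomposing $G=G^+-G^-$, applying the nonnegative case to each piece, and subtracting finishes the proof with no convergence subtleties.

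The main obstacle is purely bookkeeping: one must ensure the $\sigma$-finiteness required to identify the two extensions from the ring $\B_b$ to $\B$, and confirm that the local-support hypothesis really yields absolute integrability on both sides simultaneously so that the $G^\pm$ decomposition is legitimate. Lemma \ref{prekernel_ext_pi} supplies the first; the local finiteness of configurations in $\Pi(\R^d_0\times\X)$ combined with the local-support bound supplies the second.
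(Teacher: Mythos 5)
Your proposal is correct and follows essentially the same route as the paper's proof, which likewise approximates $G$ by (countable combinations of) indicators of sets in $\B_b(\Pi_0(\R^d_0\times \X))$ and passes to monotone limits, with the $B_\mathrm{ls}$ case reduced to the nonnegative case. You merely spell out the details the paper delegates to the references, namely the Carath\'eodory uniqueness step for indicators of general Borel sets and the finiteness $\sum_{\xi\Subset\gamma}|G(\xi)|\leq C\,2^{|\gamma\cap\Lambda|}$ that legitimizes the decomposition $G=G^+-G^-$.
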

	\begin{proof}
		The function $G$ can be approximated by a sequence of simple functions, namely:
		\begin{displaymath}
			G(\gamma)=\sum_{k=1}^\infty a_k\mathbbm{1}_{A_k}(\gamma),
		\end{displaymath}
		where $a_k\in\R, A\in\B_b(\Pi_0(\R^d_0\times \X)),\gamma\in\Pi_0(\R^d_0\times \X)$. The identity can then be derived by taking monotone limits. For further details, see \cite{MR1914839, MR0323270}.
	\end{proof}
	We can now construct measures on $\Pi_0(\R^d_0\times \X)$ that correspond to probability measures on $\Pi(\R^d_0\times \X)$ using the kernel $\CK$.
	\begin{Definition}
		Let $\mu$ be a probability measure on $(\Pi(\R^d_0\times \X),\B(\Pi(\R^d_0\times \X)))$. The corresponding correlation measure is defined on $(\Pi_0(\R^d_0\times \X),\B(\Pi_0(\R^d_0\times \X)))$ by the relation:
		\begin{displaymath}
			\rho_\mu(A):=\int_{\Pi(\R^d_0\times \X)}\CK(A,\gamma)\mu(d\gamma).
		\end{displaymath}
	\end{Definition}
    Locally finite correlation measures can be defined through their corresponding probability measures on $\Pi(\R^d_0\times \X)$. However, to guarantee that the correlation measure is indeed locally finite, additional assumptions must be imposed on the measure $\mu$.

\begin{Proposition}\label{finite_moments_pi}
		Let $\mu$ be a probability measure on $(\Pi(\R^d_0\times \X),\B(\Pi(\R^d_0\times \X)))$. Then the corresponding correlation measure $\rho_\mu$ is locally finite if and only if the following holds: for any $\Lambda\in\B_c(\R^d_0\times \X)$ and $N\in\N$,
		\begin{equation}\label{loc_finite}
			\int_{\Pi(\R^d_0\times \X)}|\gamma\cap\Lambda|^N\mu(d\gamma)<\infty.
		\end{equation}
	\end{Proposition}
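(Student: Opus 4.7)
The plan is to rewrite local finiteness of $\rho_\mu$ as a concrete integrability condition by identifying the $K$-transform of the indicator of $\Pi_0^{(m)}(\Lambda)$ with a binomial coefficient in $|\gamma\cap\Lambda|$, and then to pass between factorial moments and ordinary moments by a Stirling-number expansion. The first observation is that every set $A\in\B_b(\Pi_0(\R^d_0\times\X))$ is, by Definition~\ref{locsupp_pi}(iii), contained in a finite disjoint union $\bigsqcup_{m=0}^{N}\Pi_0^{(m)}(\Lambda)$; thus $\rho_\mu$ is locally finite if and only if $\rho_\mu(\Pi_0^{(m)}(\Lambda))<\infty$ for every $\Lambda\in\B_c(\R^d_0\times\X)$ and every $m\in\N_0$.

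Second, I would apply Proposition~\ref{prop_2.52} to the nonnegative function $G=\mathbbm{1}_{\Pi_0^{(m)}(\Lambda)}\in B_\mathrm{bs}(\Pi_0(\R^d_0\times\X))\subset B_\mathrm{ls}(\Pi_0(\R^d_0\times\X))$ and exchange the two integrals by Tonelli (valid by nonnegativity). The key combinatorial identity is that $\sum_{\xi\Subset\gamma}\mathbbm{1}_{\Pi_0^{(m)}(\Lambda)}(\xi)$ counts exactly the $m$-element subsets of $\gamma\cap\Lambda$, so
$$\rho_\mu(\Pi_0^{(m)}(\Lambda)) \;=\; \int_{\Pi(\R^d_0\times\X)}(K\mathbbm{1}_{\Pi_0^{(m)}(\Lambda)})(\gamma)\,\mu(d\gamma) \;=\; \int_{\Pi(\R^d_0\times\X)}\binom{|\gamma\cap\Lambda|}{m}\mu(d\gamma).$$
This identity is the bridge driving both implications of the equivalence.

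For the implication \eqref{loc_finite}$\Rightarrow$ local finiteness, I would simply use the elementary bound $\binom{n}{m}\leq n^m/m!$ applied pointwise with $n=|\gamma\cap\Lambda|$; the hypothesis makes the right-hand side $\mu$-integrable, so the left-hand side is too. For the converse, I would invoke the Stirling expansion $n^N=\sum_{k=0}^{N}S(N,k)\,k!\,\binom{n}{k}$ (where $S(N,k)$ are Stirling numbers of the second kind) evaluated at $n=|\gamma\cap\Lambda|$, and integrate against $\mu$. Each resulting term equals $S(N,k)\,k!\,\rho_\mu(\Pi_0^{(k)}(\Lambda))$, which is finite by the already-established identity and the assumed local finiteness of $\rho_\mu$; summing over $k\leq N$ yields \eqref{loc_finite}.

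There is no real analytic obstacle: the only mild subtlety is making sure the reduction from arbitrary bounded Borel sets to the generating family $\{\Pi_0^{(m)}(\Lambda)\}$ is justified, but this is already built into the definition of bounded sets, and the exchange of sum and integral in Proposition~\ref{prop_2.52} is covered by nonnegativity. The combinatorial content, namely the equivalence of the finiteness of all ordinary moments of $|\gamma\cap\Lambda|$ with the finiteness of all factorial moments, is classical and localized to the step above.
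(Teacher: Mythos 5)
Your argument is correct and is precisely the classical configuration-space proof that the paper invokes by reference to Kondratiev--Kuna: reduce local finiteness to the generating sets $\Pi_0^{(m)}(\Lambda)$, identify $(K\mathbbm{1}_{\Pi_0^{(m)}(\Lambda)})(\gamma)=\binom{|\gamma\cap\Lambda|}{m}$, and pass between ordinary and factorial moments via $\binom{n}{m}\leq n^m/m!$ and the Stirling-number identity. Since the paper's own proof is only a citation, your write-up supplies exactly the intended details and nothing needs to be changed.
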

	\begin{Definition}
		A measure $\mu$ that satisfies property \eqref{loc_finite} is said to have finite local moments of all order. The space of all such measures is denoted by $\CM_\mathrm{fm}^1(\Pi(\R^d_0\times \X))$.
	\end{Definition}
	\begin{proof}[Proof of Proposition \ref{finite_moments_pi}] The proof works analogously to the case of classical configuration spaces,see \cite{MR1914839}.
	\end{proof}
	For the class of measures $\CM^1_\mathrm{fm}(\Pi(\R^d_0\times \X))$, we can extend the $K$-transform to $L^1$-spaces related to these measures.
	\begin{Proposition}[\cite{MR1914839}]
		Let $\mu\in\CM^1_\mathrm{fm}(\Pi(\R^d_0\times \X))$ be given. For all functions $G\in B_\mathrm{bs}(\Pi_0(\R^d_0\times \X))$, we have $G\in L^1(\Pi_0(\R^d_0\times \X),\rho_\mu)$. Furthermore, if $G\geq 0$ or $G\in B_\mathrm{bs}(\Pi_0(\R^d_0\times \X))$, then,
		\begin{equation}\label{eq:K_identity}
			\int_{\Pi_0(\R^d_0\times \X)}G(\ga)\rho_\mu(d\ga)=\int_{\Pi(\R^d_0\times \X)}(KG)(\gamma)\mu(d\gamma).
		\end{equation}
	\end{Proposition}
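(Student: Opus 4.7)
The plan is to first establish the $K$-transform identity \eqref{eq:K_identity} on indicator functions of bounded Borel sets, where it reduces directly to the definition of the correlation measure $\rho_\mu$, and then extend by a standard measure-theoretic argument to non-negative measurable $G$ via monotone convergence on simple functions. The linearity and positivity preservation of $K$ (from the earlier proposition) guarantee that everything lifts cleanly from simple functions, and Proposition~\ref{prop_2.52} already encodes the pointwise version $\int G(\xi)\,\CK(d\xi,\gamma)=(KG)(\gamma)$, so integrating both sides against $\mu(d\gamma)$ and applying Fubini (or Tonelli for the non-negative case) yields \eqref{eq:K_identity} once the relevant integrals are finite.

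Concretely, for $A\in\B_b(\Pi_0(\R^d_0\times \X))$ and $G=\mathbbm{1}_A$, the definition of $\rho_\mu$ gives
\begin{displaymath}
\int_{\Pi_0(\R^d_0\times \X)} \mathbbm{1}_A(\xi)\,\rho_\mu(d\xi) = \rho_\mu(A) = \int_{\Pi(\R^d_0\times \X)} \CK(A,\gamma)\,\mu(d\gamma) = \int_{\Pi(\R^d_0\times \X)} (K\mathbbm{1}_A)(\gamma)\,\mu(d\gamma).
\end{displaymath}
By linearity this extends to non-negative simple functions supported in $\B_b(\Pi_0(\R^d_0\times \X))$, and then Tonelli together with monotone convergence extends it to any measurable $G\geq 0$ on $\Pi_0(\R^d_0\times \X)$, where both sides may a priori be $+\infty$.

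The main substantive step is the integrability claim: for $G\in B_\mathrm{bs}(\Pi_0(\R^d_0\times \X))$, I need to verify $G\in L^1(\rho_\mu)$ so that the identity holds in the signed case via the decomposition $G=G^+-G^-$. For this I would invoke the polynomial bound from the earlier proposition: there exist $\Lambda\in\B_c(\R^d_0\times \X)$, $N\in\N$, and $C>0$ such that
\begin{displaymath}
|K|G|(\gamma)| \leq C(1+|\gamma\cap\Lambda|)^N, \qquad \gamma\in\Pi(\R^d_0\times \X).
\end{displaymath}
Applying the already-established identity to the non-negative function $|G|$ yields
\begin{displaymath}
\int_{\Pi_0(\R^d_0\times \X)} |G|(\xi)\,\rho_\mu(d\xi) = \int_{\Pi(\R^d_0\times \X)} (K|G|)(\gamma)\,\mu(d\gamma) \leq C\int_{\Pi(\R^d_0\times \X)} (1+|\gamma\cap\Lambda|)^N\,\mu(d\gamma),
\end{displaymath}
and the right-hand side is finite precisely because $\mu\in\CM^1_\mathrm{fm}(\Pi(\R^d_0\times \X))$ (expanding the binomial $(1+|\gamma\cap\Lambda|)^N$ produces finitely many moment integrals, each controlled by \eqref{loc_finite}). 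Once $|G|\in L^1(\rho_\mu)$ is known, $G^\pm$ are separately integrable, so the signed identity follows by subtracting the two non-negative identities.

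The only mild obstacle is handling the interchange of sum and integral in passing from Proposition~\ref{prop_2.52} to \eqref{eq:K_identity}; this is justified by Tonelli in the non-negative case and, for signed $G\in B_\mathrm{bs}$, by Fubini once the integrability bound above is in place. Since the paper explicitly points to the analogous proof on configuration spaces in \cite{MR1914839}, I would simply note that the above argument transfers verbatim, with the Plato constraint $V_\Lambda(\gamma)<\infty$ playing no role because $B_\mathrm{bs}$ already localizes the support and truncates the cardinality.
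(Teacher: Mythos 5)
Your proposal is correct and is essentially a written-out version of the standard argument that the paper defers to by citing \cite{MR1914839}: indicators via the definition of $\rho_\mu$, linearity and monotone convergence for $G\geq 0$, and the polynomial bound on $K|G|$ combined with the finite-local-moments condition to get $L^1$-integrability and the signed case. Your closing remark that the Plato restriction plays no role matches the paper's only substantive comment, namely that $\mu(\Pi(\R^d_0\times \X))=1$ makes the restriction from $\Gamma(\R^d_0\times \X)$ harmless.
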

	\begin{proof}
		The proof follows directly as in \cite{MR1914839}. Since $\mu(\Pi(\R^d_0\times \X))=1$, the restriction from $\Gamma(\R^d_0\times \X)$ to $\Pi(\R^d_0\times \X)$ does not affect the identity.
	\end{proof}
	\begin{Remark}
		For a measure $\mu\in\CM^1_\mathrm{fm}(\Pi(\R^d_0\times \X))$, we may define the correlation measure without using the kernel $\CK$ directly via:
		\begin{displaymath}
			\rho_\mu(A):=\int_{\Pi(\R^d_0\times \X)} K\mathbbm{1}_A(\gamma)\mu(d\gamma),\ A\in\B_b(\Pi_0(\R^d_0\times \X)).
		\end{displaymath}
		This follows from Proposition \ref{finite_moments_pi}, since $K\mathbbm{1}_A\in L^1(\mu)$ for $A\in\B_b(\Pi_0(\R^d_0\times \X))$.
	\end{Remark}
	\begin{Definition}
		The remark above enables us to  define the dual operator of $K$, i.e.
		\begin{align*}
			K^*\colon&\CM^1_\mathrm{fm}(\Pi(\R^d_0\times \X))\to\CM_\mathrm{lf}(\Pi_0(\R^d_0\times \X))
			\\
			&\mu\mapsto K^*\mu:=\rho_\mu.
		\end{align*}
	\end{Definition}
	
To complete the extension of the 
$K$-transform, we require one final continuity result for this mapping.
	\begin{Lemma}[\cite{MR1914839}]
		Let $\{G_n\}_{n\in\N}\subset B_\mathrm{bs}(\Pi_0(\R^d_0\times \X))$ be a sequence which converges in \, $L^1(\Pi_0(\R^d_0\times \X),\rho_\mu)$ for some measure $\mu\in\CM^1_\mathrm{fm}(\Pi(\R^d_0\times \X))$. Then $\{KG_n\}_{n\in\N}$ converges in $L^1(\Pi(\R^d_0\times \X),\mu)$.
	\end{Lemma}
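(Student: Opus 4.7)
The plan is to establish that $\{KG_n\}$ is Cauchy in $L^1(\Pi(\R^d_0\times\X),\mu)$ by reducing the $L^1(\mu)$-norm of a difference $KG_n - KG_m$ to the $L^1(\rho_\mu)$-norm of $G_n - G_m$, via the identity \eqref{eq:K_identity} applied to the non-negative function $|G_n - G_m|$. Completeness of $L^1(\mu)$ then yields the claimed convergence.

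First I would exploit linearity of $K$ (from the earlier proposition listing the properties of the $K$-transform) to write $KG_n - KG_m = K(G_n - G_m)$ pointwise on $\Pi(\R^d_0\times\X)$. Next, from the very definition $(KG)(\gamma) = \sum_{\xi\Subset\gamma} G(\xi)$, the triangle inequality gives the pointwise bound
\begin{displaymath}
\bigl|K(G_n - G_m)(\gamma)\bigr| \;\leq\; \sum_{\xi\Subset\gamma} |G_n(\xi) - G_m(\xi)| \;=\; K\bigl(|G_n - G_m|\bigr)(\gamma),
\end{displaymath}
valid for every $\gamma\in\Pi(\R^d_0\times\X)$. Since $G_n, G_m \in B_\mathrm{bs}(\Pi_0(\R^d_0\times\X))$, the function $|G_n - G_m|$ is measurable and non-negative, so the preceding proposition applies to it (recall that \eqref{eq:K_identity} is stated for $G\geq 0$ or $G\in B_\mathrm{bs}$).

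Integrating the pointwise bound against $\mu$ and invoking \eqref{eq:K_identity} with the non-negative integrand $|G_n - G_m|$ yields
\begin{displaymath}
\int_{\Pi(\R^d_0\times\X)} \bigl|KG_n - KG_m\bigr|(\gamma)\,\mu(d\gamma) \;\leq\; \int_{\Pi(\R^d_0\times\X)} K\bigl(|G_n - G_m|\bigr)(\gamma)\,\mu(d\gamma) \;=\; \int_{\Pi_0(\R^d_0\times\X)} |G_n - G_m|(\xi)\,\rho_\mu(d\xi).
\end{displaymath}
By hypothesis the right-hand side tends to $0$ as $n,m\to\infty$, so $\{KG_n\}$ is Cauchy in $L^1(\Pi(\R^d_0\times\X),\mu)$ and therefore converges there by completeness.

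The only mildly delicate point is ensuring that the identity \eqref{eq:K_identity} is legitimately applied to $|G_n - G_m|$; this is not directly covered by the $B_\mathrm{bs}$ clause (which would require a uniform bound on supports as $n,m$ vary), but it is covered by the $G\geq 0$ clause, so no additional assumption is needed. Everything else is a routine linearity-plus-triangle-inequality argument, and the proof is a direct translation of the corresponding result for classical configuration spaces in \cite{MR1914839}.
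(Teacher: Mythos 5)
Your argument is correct and is exactly the paper's approach: the paper's proof consists of the single remark that $|KG|\leq K|G|$ combined with the identity \eqref{eq:K_identity}, which is precisely the pointwise bound and integration step you carry out in detail. Your additional observation about which clause of the identity legitimises the application to $|G_n-G_m|$ is sound (and in fact $|G_n-G_m|$ also lies in $B_\mathrm{bs}$ for each fixed pair $n,m$, so either clause works).
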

	\begin{proof}
		The proof follows by applying the triangle inequality: 
 $|KG|\leq K|G|$.
	\end{proof}
	We can now prove the extension result for the  $K$-transform on $L^1$-spaces.
	\begin{Theorem}[\cite{MR1914839}]\label{Ktrf_extension}
		Let $\mu\in\CM_\mathrm{fm}^1(\Pi(\R^d_0\times \X))$ be given. For any $G\in L^1(\Pi_0(\R^d_0\times \X),\rho_\mu)$, we define
		\begin{displaymath}
			KG(\gamma):=\sum_{\xi\Subset\gamma}G(\xi),
		\end{displaymath}
where the series converges absolutely $\mu$-almost surely. Furthermore, we have the following estimate:
		\begin{displaymath}
			\|KG\|_{L^1(\mu)}\leq\|K|G|\|_{L^1(\mu)}=\|G\|_{L^1(\rho_\mu)},
		\end{displaymath}
which implies that $KG\in L^1(\mu)$ and for all $G\in L^1(\rho_\mu)$,
\begin{equation}
			\int_{\Pi_0(\R^d_0\times \X)}G(\ga)\rho_\mu(d\ga)=\int_{\Pi(\R^d_0\times \X)}(KG)(\gamma)\mu(d\gamma).
\end{equation}
  
	\end{Theorem}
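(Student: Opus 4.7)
The plan is to extend the $K$-transform from $B_{\mathrm{bs}}(\Pi_0(\R^d_0\times\X))$ to all of $L^1(\Pi_0(\R^d_0\times\X),\rho_\mu)$ by an approximation argument, using the identity \eqref{eq:K_identity} as the bridge. The starting point is the observation that $B_{\mathrm{bs}}(\Pi_0(\R^d_0\times\X))$ is dense in $L^1(\Pi_0(\R^d_0\times\X),\rho_\mu)$. This density is a consequence of $\rho_\mu$ being locally finite (Proposition \ref{finite_moments_pi}) together with the decomposition of $\Pi_0(\R^d_0\times\X)$ into $\bigcup_{\Lambda}\bigcup_{n\leq N}\Pi_0^{(n)}(\Lambda)$, which exhibits it as a countable increasing union of sets of finite $\rho_\mu$-measure on which bounded functions are $\rho_\mu$-integrable.

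First I would handle the case of nonnegative $G\in L^1(\rho_\mu)$. Approximate $G$ from below by an increasing sequence $G_n\in B_{\mathrm{bs}}(\Pi_0(\R^d_0\times\X))$ with $G_n\uparrow G$ pointwise. Since $K$ is positivity preserving and linear (Proposition in the excerpt), $KG_n$ is pointwise increasing, so the pointwise limit $\sum_{\xi\Subset\gamma}G(\xi)$ exists in $[0,\infty]$. Applying the identity \eqref{eq:K_identity} for functions in $B_{\mathrm{bs}}$ and invoking the monotone convergence theorem on both sides yields
\begin{equation*}
\int_{\Pi(\R^d_0\times\X)}\Bigl(\sum_{\xi\Subset\gamma}G(\xi)\Bigr)\mu(d\gamma)
=\int_{\Pi_0(\R^d_0\times\X)}G(\xi)\rho_\mu(d\xi)<\infty.
\end{equation*}
This forces the pointwise sum to be finite $\mu$-a.s., so $KG$ is well-defined and belongs to $L^1(\mu)$, with equality of $L^1$-norms.

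For general $G\in L^1(\rho_\mu)$, decompose $G=G_+-G_-$ and apply the previous step to each nonnegative part to get $KG_\pm\in L^1(\mu)$. Define $KG:=KG_+-KG_-$ pointwise $\mu$-a.s.; then the series $\sum_{\xi\Subset\gamma}G(\xi)$ converges absolutely $\mu$-a.s. because $\sum_{\xi\Subset\gamma}|G(\xi)|=K|G|(\gamma)<\infty$ $\mu$-a.s. The triangle inequality $|KG|\leq K|G|$ is immediate from the pointwise definition, which combined with the $L^1$-identity for $|G|$ gives the desired estimate
\begin{equation*}
\|KG\|_{L^1(\mu)}\leq\|K|G|\|_{L^1(\mu)}=\|G\|_{L^1(\rho_\mu)}.
\end{equation*}
The integral identity then follows by writing $\int G\,d\rho_\mu=\int G_+\,d\rho_\mu-\int G_-\,d\rho_\mu$ and applying the nonnegative case to each piece.

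The main technical subtlety, rather than any deep obstacle, is to verify that the pointwise definition of $KG$ obtained through $G_\pm$ coincides with the $L^1$-limit guaranteed by the preceding continuity lemma: choosing approximants $G_n\in B_{\mathrm{bs}}(\Pi_0(\R^d_0\times\X))$ with $G_n\to G$ in $L^1(\rho_\mu)$, the lemma supplies $KG_n\to\tilde{K}G$ in $L^1(\mu)$ for some limit $\tilde{K}G$; passing to a $\mu$-a.s.\ convergent subsequence and comparing with the absolutely convergent series identifies $\tilde{K}G=KG$ $\mu$-a.s., so the two constructions agree and the extension is consistent. Everything else reduces to the monotone/dominated convergence machinery together with the already established identity on $B_{\mathrm{bs}}(\Pi_0(\R^d_0\times\X))$.
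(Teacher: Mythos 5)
Your proof is correct and follows essentially the same route as the paper: establish the identity for nonnegative $G$ by approximating with $B_{\mathrm{bs}}$ functions and passing to the limit (the paper invokes the preceding continuity lemma together with Fatou's lemma where you use monotone convergence, a negligible difference), then handle general $G$ via the decomposition $G=G_+-G_-$ and the bound $|KG|\leq K|G|$. Your closing consistency check that the pointwise series agrees $\mu$-a.s.\ with the $L^1$-limit is a detail the paper leaves implicit, but it does not change the argument.
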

	\begin{proof}
    For non-negative functions, the result is derived through computations involving the preceding lemma and Fatou's lemma. The general case is addressed by decomposing the function into its positive and negative components.
		%The result for non-negative functions follows from calculations using the previous lemma and Fatou's lemma. The extension to general functions is achieved by decomposing the function into its positive and negative parts.
	\end{proof}
The identity given by Proposition \ref{convolution_pi} can be extended, considering some conditions.
	\begin{Proposition}[\cite{MR1914839}]
		Let $G_1,G_2\in B_\mathrm{ls}(\Pi_0(\R^d_0\times \X))$ and let $\mu\in\CM^1_\mathrm{fm}(\Pi(\R^d_0\times \X))$.Then the following identity holds $\mu$-almost surely:
		\begin{displaymath}
			K(G_1\star G_2)=KG_1\cdot KG_2,
		\end{displaymath} if one of the following conditions is satisfied:
		\begin{enumerate}
			\item $G_1,G_2\geq 0$;
			\item $|G_1|\star|G_2|\in L^1(\rho_\mu)$ (and consequentially $K(G_1\star G_2)\in L^1(\mu)$);
			\item $G_1,G_2\in L^1(\rho_\mu)$.
		\end{enumerate}
		
	\end{Proposition}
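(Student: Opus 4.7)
The plan is to treat the three conditions sequentially, in each case reducing to the pointwise identity $K(G_1\star G_2)=KG_1\cdot KG_2$ of Proposition \ref{convolution_pi}, which is already available for $B_{\mathrm{ls}}$ functions. The recurring issue is controlling the passage to the limit simultaneously for the partition sum defining $\star$, for the subset sum defining $K$, and for $L^1$-integrability against $\mu$ and $\rho_\mu$.

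For condition (1), with $G_1,G_2\geq 0$, I would truncate and localise by setting $G_i^{(n)}:=\min(G_i,n)\,\mathbbm{1}_{\Pi_0(\Lambda_n)}\mathbbm{1}_{\{|\gamma|\leq n\}}$ for an exhausting sequence $\Lambda_n\uparrow\R^d_0\times\X$ of sets in $\B_c(\R^d_0\times\X)$. Each $G_i^{(n)}\in\Bbs(\Pi_0(\R^d_0\times\X))$, so Proposition \ref{convolution_pi} gives $K(G_1^{(n)}\star G_2^{(n)})=KG_1^{(n)}\cdot KG_2^{(n)}$ pointwise on $\Pi(\R^d_0\times\X)$. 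Since both the partition sum in Definition \ref{def_convolution} and the subset sum in Definition \ref{K transform} are monotone in non-negative integrands, the monotone convergence theorem applied to both sides yields the identity for the original $G_1,G_2$ pointwise, and hence $\mu$-a.s.

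For condition (2), I would decompose $G_i=G_i^+-G_i^-$. The definition of $\star$ immediately gives the pointwise bound $|G_1^{\varepsilon_1}\star G_2^{\varepsilon_2}|\leq|G_1|\star|G_2|$ for every choice of signs $\varepsilon_i\in\{+,-\}$, so all four cross terms lie in $L^1(\rho_\mu)$ by assumption. Applying condition (1) to each non-negative piece and using the linearity of both $\star$ and $K$ (Theorem \ref{Ktrf_extension}) then produces the identity $\mu$-a.s., with $K(G_1\star G_2)\in L^1(\mu)$ as a by-product of the bound $\|KG\|_{L^1(\mu)}\leq\|G\|_{L^1(\rho_\mu)}$.

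For condition (3), I would approximate each $G_i\in L^1(\rho_\mu)$ by a sequence $G_i^{(n)}\in\Bbs(\Pi_0(\R^d_0\times\X))$ converging in $L^1(\rho_\mu)$, using that the bounded-bounded-support functions are dense in $L^1(\rho_\mu)$ via truncation in both $|\gamma|$ and the support. The continuity lemma preceding Theorem \ref{Ktrf_extension} then gives $KG_i^{(n)}\to KG_i$ in $L^1(\mu)$, and hence, passing to a subsequence, $\mu$-a.s. The main obstacle, and the step I expect to be the most delicate, is identifying the limit of $G_1^{(n)}\star G_2^{(n)}$: $L^1$-convergence of the factors does not transfer automatically to $L^1$-convergence of the $\star$-convolution. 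I would circumvent this by first reducing to non-negative $G_i$ via $G_i^{\pm}$, building $G_1\star G_2$ as a monotone limit via the truncations from condition (1), and then invoking the continuity lemma and Theorem \ref{Ktrf_extension} on both sides to identify the limits. This mirrors the strategy used for the analogous homogeneous result in \cite{MR1914839}.
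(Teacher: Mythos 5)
Your argument is correct and follows the same route the paper intends: the paper's proof is the single line ``a direct consequence of Theorem \ref{Ktrf_extension}'', and your truncation/monotone-convergence treatment of the non-negative case, the $G_i^{\pm}$ decomposition dominated by $|G_1|\star|G_2|$, and the reduction of case (3) to case (1) via the $L^1$-extension and continuity lemma are exactly the details that one-liner suppresses. The only point worth noting is that, as literally stated, the hypothesis $G_1,G_2\in B_{\mathrm{ls}}(\Pi_0(\R^d_0\times\X))$ already makes Proposition \ref{convolution_pi} apply pointwise; your proof correctly addresses the intended, more general reading in which $G_1,G_2$ are merely measurable or in $L^1(\rho_\mu)$.
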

	\begin{proof}
		A direct consequence of the Theorem \ref{Ktrf_extension}.
	\end{proof}

\subsection{Correlation Functions on $\Pi_0(\R^d_0\times \X)$}\label{ch_corrfn_pi}
 Our goal is to establish the existence of a correlation function associated with a specific subset of measures, denoted by \(\CM_\mathrm{fm}^1(\Pi(\R^d_0 \times \X))\). This correlation function acts as the density function for the corresponding correlation measure \(\rho_\mu\). Such functions are of particular mathematical interest due to their widespread use in applications where they characterize the behavior of the system under consideration.
%	We aim at proving the presence of a correlation function that corresponds to a specific subset of measures denoted by $\CM_\mathrm{fm}^1(\Pi(\R^d_0\times \X))$. This correlation function serves as the density function for the correlation measure $\rho_\mu$. These functions hold significant interest as they are extensively employed in applications to describe the behaviour of the system they pertain to.
	\begin{Definition}
		A measure $\mu\in\CM^1_\mathrm{fm}(\Pi(\R^d_0\times \X))$ is locally absolutely continuous with respect to the Poisson measure $\pi_\sigma$ iff the measure $\mu^\Lambda$ is absolutely continuous with respect to $\pi_\sigma^\Lambda$ for all $\Lambda\in\B_c(\R^d_0\times \X)$, where $\mu^\Lambda:=\mu\circ p^{-1}_\Lambda$.
	\end{Definition}
	
	\begin{Proposition}[\cite{MR1914839}]\label{density_on_pi}
		For a measure $\mu\in\CM^1_\mathrm{fm}(\Pi(\R^d_0\times \X))$ which is locally absolutely continuous  with respect to $\pi_\sigma$, the correlation measure $\rho_\mu$ is absolutely continuous  with respect to the Lebesgue-Poisson measure $\CL_\sigma$ introduced in Section 2.1. The density function has the following representation for any $\gamma\in\Pi_0(\Lambda)$:
		\begin{displaymath}
			k_\mu(\gamma)=\frac{d\rho_\mu}{d\CL_\sigma}(\gamma)=\int_{\Pi(\Lambda)}\frac{d\mu^\Lambda}{d\pi^\Lambda_\sigma}(\gamma\cup\xi)\pi_\sigma^\Lambda(d\xi).
		\end{displaymath}
	\end{Proposition}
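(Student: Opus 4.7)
The plan is to localise to a compact window $\Lambda \in \B_c(\R^d_0\times\X)$ and then invoke the standard Mecke-Ruelle identity relating the $K$-transform to the Lebesgue-Poisson measure. Concretely, for any $A \in \B_b(\Pi_0(\Lambda))$ the indicator $\mathbbm{1}_A$ is supported on configurations contained in $\Lambda$, so $(K\mathbbm{1}_A)(\gamma) = \sum_{\xi\Subset\gamma}\mathbbm{1}_A(\xi)$ depends only on $\gamma\cap\Lambda$. Using the identity from Theorem~\ref{Ktrf_extension} together with the definition of $\mu^\Lambda$, I would first write
\begin{displaymath}
\rho_\mu(A) = \int_{\Pi(\R^d_0\times\X)} (K\mathbbm{1}_A)(\gamma)\,\mu(d\gamma) = \int_{\Pi(\Lambda)} (K\mathbbm{1}_A)(\gamma)\,\mu^\Lambda(d\gamma),
\end{displaymath}
and then substitute $\mu^\Lambda = f_\mu^\Lambda\,\pi_\sigma^\Lambda$, where $f_\mu^\Lambda := d\mu^\Lambda/d\pi_\sigma^\Lambda \geq 0$ exists by local absolute continuity.

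The central step is then the Mecke-type identity
\begin{displaymath}
\int_{\Pi(\Lambda)} \sum_{\xi\Subset\gamma} H(\xi,\gamma\setminus\xi)\,\pi_\sigma^\Lambda(d\gamma) = \int_{\Pi_0(\Lambda)} \int_{\Pi(\Lambda)} H(\xi,\gamma)\,\pi_\sigma^\Lambda(d\gamma)\,\CL_\sigma(d\xi),
\end{displaymath}
which follows from the expansion $\pi_\sigma^\Lambda = e^{-\sigma(\Lambda)}\CL_\sigma$ and the combinatorial splitting of the symmetric product measures defining $\CL_\sigma$. Applying it to $H(\xi,\gamma) := \mathbbm{1}_A(\xi)\,f_\mu^\Lambda(\xi\cup\gamma)$ and rearranging with Fubini yields
\begin{displaymath}
\rho_\mu(A) = \int_{\Pi_0(\Lambda)} \mathbbm{1}_A(\xi)\,\biggl(\int_{\Pi(\Lambda)} f_\mu^\Lambda(\xi\cup\gamma)\,\pi_\sigma^\Lambda(d\gamma)\biggr)\CL_\sigma(d\xi),
\end{displaymath}
which simultaneously delivers the absolute continuity $\rho_\mu \ll \CL_\sigma$ on bounded measurable subsets of $\Pi_0(\Lambda)$ and the claimed integral representation of $k_\mu$ on $\Pi_0(\Lambda)$.

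Two loose ends remain. First, the representation must be shown to be independent of $\Lambda$: for $\Lambda' \supset \Lambda$, compatibility of the family $\{\mu^\Lambda\}$ and uniqueness of Radon-Nikodym derivatives force the formulas for $\Lambda$ and $\Lambda'$ to coincide $\CL_\sigma$-a.e.\ on $\Pi_0(\Lambda)$; since $\Pi_0(\R^d_0\times\X) = \bigcup_{\Lambda} \Pi_0(\Lambda)$, this glues into a well-defined global density. Second, and this is the step I expect to be the main obstacle, one must justify the Mecke identity in the vector-marked setting $\R^d_0\times\X$ rather than in the classical configuration space. The justification rests on the fact that $\pi_\sigma$ is a genuine Poisson point process on the product space $\R^d_0\times\X$ with intensity $\lambda\otimes m$, so the formula of \cite{MR1914839} transfers once one restricts to the pinpointed subset, which carries full $\pi_\sigma$-measure by the lemma proved earlier in Section~2.2. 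Integrability of all iterated integrals in the Fubini step is guaranteed by $\mu\in\CM_\mathrm{fm}^1(\Pi(\R^d_0\times\X))$ through Proposition~\ref{finite_moments_pi}, so no further growth hypotheses on $f_\mu^\Lambda$ are needed.
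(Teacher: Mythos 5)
Your argument is correct and is precisely the standard Kondratiev--Kuna proof that the paper invokes by citation instead of writing out: localise to $\Lambda$, insert the local density $f_\mu^\Lambda$, and apply the Minlos/Mecke-type splitting identity for the Lebesgue--Poisson measure (the $\Pi_0$-analogue of part~1 of Lemma~\ref{Minlos}), which immediately yields both $\rho_\mu\ll\CL_\sigma$ and the integral formula for $k_\mu$. The two loose ends you flag --- consistency in $\Lambda$ via uniqueness of Radon--Nikodym derivatives and integrability via $\mu\in\CM^1_\mathrm{fm}$ --- are exactly the points that need closing, and they close as you describe.
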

	\begin{Definition}\label{def_corrfn_pi}
		The function $k_\mu\colon\Pi_0(\R^d_0\times \X)\to\R$ defined by the previous proposition is called the correlation function corresponding to $\mu$. Moreover, we have the decomposition $k_\mu\simeq\{k_\mu^{(n)}\}_{n=0}^\infty$, where for any $n\in\N, k_\mu^{(n)}\colon(\R^d_0\times \X)^n\to\R$ is a symmetric function with
		\begin{equation*}
			k_\mu^{(n)}(v_1,x_1,\dotsc,v_n,x_n):=
			\begin{cases}
				k_\mu(\{(v_1,x_1),\dotsc,(v_n,x_n)\}),&\text{ if }|\{(v_1,x_1),\dotsc,(v_n,x_n)\}|=n,
				\\
				0,&\text{ otherwise}.
			\end{cases}
		\end{equation*}
		The functions $k_\mu^{(n)}$ are called $n$-point correlation functions.
	\end{Definition}
	In what follows, we present the Bogoliubov functional, which we will employ to define correlation functions on the cone $\K(\X)$. Bogoliubov originally developed this category of functionals \cite{B} to establish correlation functions for systems in statistical mechanics. The Bogoliubov functional approach also investigates continuum interacting particle systems in \cite{MR2253724}.

	\begin{Definition}
		Let $\mu\in\CM^1_\mathrm{fm}(\Pi(\R^d_0\times \X))$. The Bogoliubov functional $L^\Pi_\mu$ corresponding to $\mu$ is a functional defined at each measurable function $\varphi\colon\R^d_0\times \X\to\R$ by
		\begin{displaymath}
			L^\Pi_\mu(\varphi):=\int_{\Pi(\R^d_0\times \X)}\prod_{(v,x)\in\gamma}(1+\varphi(v,x))\mu(d\gamma),
		\end{displaymath}
		provided, the right-hand side exists for $|\varphi|$.
	\end{Definition}
	The following proposition will be useful to define the correlation functions on $\K(\X)$ later.

	\begin{Proposition}[\cite{MR2253724}]\label{prop_bogo_pi}
		Under some assumptions, the Bogoliubov functional is the generating functional of the correlation function. In other words, for any $\varphi\colon\R^d_0\times \X\to\R$ such that $L_\mu^\Pi(\varphi)$ is well-defined, we have
		\begin{align*}
			L_\mu^\Pi(\varphi)&=\sum_{n=0}^\infty\frac{1}{n!}\int_{(\R^d_0\times \X)^n}\varphi(v_1,x_1)\dotsb\varphi(v_n,x_n)\times
			\\
			&\hspace{100pt}\vphantom{\int}\times k_\mu^{(n)}(v_1,\dotsc,x_n)\lambda(dv_1)m(dx_1)\dotso\lambda(dv_n)m(dx_n).
		\end{align*}
	\end{Proposition}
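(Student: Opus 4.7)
The plan is to unfold the Bogoliubov functional through the $K$-transform, pass from $\mu$ to $\rho_\mu$ via Theorem \ref{Ktrf_extension}, replace $\rho_\mu$ by $k_\mu\,d\CL_\sigma$ using Proposition \ref{density_on_pi}, and finally decompose $\CL_\sigma$ into its $n$-point components $\sigma^{(n)}/n!$ to match the $k_\mu^{(n)}$ on each stratum.

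First, by Example \ref{coherent_pi}, the integrand in the definition of $L^\Pi_\mu(\varphi)$ is precisely the $K$-transform of the coherent state $e_\CL(\varphi)$, namely
$$\prod_{(v,x)\in\gamma}(1+\varphi(v,x)) = (K e_\CL(\varphi))(\gamma),$$
so $L^\Pi_\mu(\varphi) = \int_{\Pi(\R^d_0\times \X)} (Ke_\CL(\varphi))(\gamma)\,\mu(d\gamma)$. The implicit hypothesis that $L^\Pi_\mu(\varphi)$ is well-defined for $|\varphi|$ translates, via the positivity-preserving property of $K$, into $e_\CL(|\varphi|) \in L^1(\rho_\mu)$; this is exactly the integrability needed to apply Theorem \ref{Ktrf_extension}, which yields
$$L^\Pi_\mu(\varphi) = \int_{\Pi_0(\R^d_0\times \X)} e_\CL(\varphi,\xi)\,\rho_\mu(d\xi).$$

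Next, under local absolute continuity of $\mu$ with respect to $\pi_\sigma$, Proposition \ref{density_on_pi} gives $d\rho_\mu = k_\mu\,d\CL_\sigma$, so
$$L^\Pi_\mu(\varphi) = \int_{\Pi_0(\R^d_0\times \X)} e_\CL(\varphi,\xi)\,k_\mu(\xi)\,\CL_\sigma(d\xi).$$
Finally, I would invoke the stratified decomposition $\CL_\sigma = \sum_{n=0}^\infty \tfrac{1}{n!}\sigma^{(n)}$ from Section 2.3 together with the observation that on an $n$-point configuration $\xi = \{(v_1,x_1),\ldots,(v_n,x_n)\}$ the coherent state reduces to $e_\CL(\varphi,\xi) = \prod_{i=1}^n \varphi(v_i,x_i)$. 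By the definition of $k_\mu^{(n)}$ (Definition \ref{def_corrfn_pi}) and by unfolding $\sigma^{(n)}$ as the symmetrization of $(\lambda\otimes m)^{\otimes n}$, each term in the sum becomes exactly the $n$-fold integral on the right-hand side of the proposition.

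The main obstacle is justifying the interchange of sum and integral when assembling the series. For $\varphi \geq 0$ this is immediate by Tonelli. In the general (signed) case, one has to verify that $\sum_n \tfrac{1}{n!}\int |\varphi|^{\otimes n} k_\mu^{(n)}\,d(\lambda\otimes m)^{\otimes n} < \infty$, which is precisely the finiteness of $L^\Pi_\mu(|\varphi|)$ obtained from the positive case; this dominance then allows one to apply Fubini termwise and recover the signed identity. The vague phrasing "under some assumptions" in the statement corresponds exactly to this integrability requirement together with the local absolute continuity needed for the existence of $k_\mu$.
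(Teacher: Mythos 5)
The paper gives no proof of this proposition at all --- it is simply imported from \cite{MR2253724} --- so there is nothing to compare against except the standard argument, which is exactly what you have reconstructed. Your chain is correct: identify the integrand of $L^\Pi_\mu$ with $Ke_\CL(\varphi)$ via Example \ref{coherent_pi}; observe that well-definedness at $|\varphi|$ together with the identity \eqref{eq:K_identity} for non-negative $G$ gives $e_\CL(|\varphi|)\in L^1(\rho_\mu)$, so Theorem \ref{Ktrf_extension} transfers the integral to $\rho_\mu$; substitute $d\rho_\mu=k_\mu\,d\CL_\sigma$ from Proposition \ref{density_on_pi}; and stratify $\CL_\sigma=\sum_n\sigma^{(n)}/n!$, with Tonelli/dominated convergence handling the sum--integral interchange. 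You have also correctly identified what the vague ``under some assumptions'' must mean ($\mu\in\CM^1_\mathrm{fm}$, local absolute continuity with respect to $\pi_\sigma$, and integrability of $e_\CL(|\varphi|)$). One cosmetic point you could make explicit: Definition \ref{def_corrfn_pi} sets $k_\mu^{(n)}$ to zero on tuples with repeated points, but since $m$ is non-atomic the diagonal is $(\lambda\otimes m)^{\otimes n}$-null, so unfolding the symmetric measure $\sigma^{(n)}$ into the $n$-fold product integral loses nothing.
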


	\section{Harmonic Analysis on $K(\X)$}\label{harmana_k}
	We aim to present harmonic analysis on the cone of vector-valued discrete Radon measures on $\X$, denoted by $\K(\X)$, after introducing harmonic analysis on the Plato space $\Pi(\R_0^d \times \X)$. To establish a relationship between $\Pi(\R_0^d \times \X)$ and $\K(\X)$, we employ the reflection mapping $\CR$.
    
\subsection{The $K$-Transform}

Initially, we focus on the space $\K_0(\X)$ and, similar to the previous section, introduce subspaces of $\K_0(\X)$ to decompose the space. Subsequently, we proceed to introduce classes of functions in these spaces.

\begin{Definition}
\begin{enumerate}
\item The set of discrete Radon measures with finite support is defined as:
			\begin{displaymath}
				\K_0(\X):=\left\{\eta\in\K(\X)\colon|\tau(\eta)|<\infty\right\}.
			\end{displaymath}
\item For $n\in \N_0$, the set of $n$-points measures is defined as:
			\begin{displaymath}
				\K_0^{(n)}(\X):=\left\{\eta\in\K_0(\X)\colon |\tau(\eta)|=n\right\},\ n\in\N
			\end{displaymath}
			and $\K_0^{(0)}(\X)=\{0\}$ the set consisting of the zero measure.
\item For a compact set $\Lambda\subset\X$, the set of all measures supported in $\Lambda$ is defined as:
			\begin{displaymath}
				\K_0(\Lambda):=\left\{\eta\in\K_0(\X)\colon\tau(\eta)\subset\Lambda\right\}.
			\end{displaymath}
\item A set $A\subset\K_0(\X)$ is called bounded if there exists a compact set $\Lambda\subset\X$ and $N\in\N$ such that
			\begin{displaymath}
				A\subset\bigcup_{n=0}^N\K^{(n)}_0(\Lambda).
			\end{displaymath}
			Denote the collection of all bounded  Borel subsets of $\K_0(\X)$ by $\B_b(\K_0(\X))$.
			\item A bounded set $A\subset\K_0(\X)$ is said to have compact velocities if, additionally, there exists a compact set $I\subset
			\R_0^d $ such that
			\begin{displaymath}
				A\cap\{\eta\in\K_0(\X)\mid\exists x\in\tau(\eta)\colon v_x\notin I\}=\emptyset.
			\end{displaymath}
			Denote the collection of all such sets by $\B_\mathrm{cm}(\K_0(\X))$.
		\end{enumerate}
		Note that we have
		$$
		\K_0(\X)=\bigsqcup_{n=0}^\infty   \K_0^{(n)}(\X)
		$$
        and 
		$$
		\K_0(\X)=\bigcup_{\Lambda\in\B_c(\X)}\K_0(\Lambda),
		$$
		where the first union is disjoint.
	\end{Definition}
	
By reflection mapping $\CR$ we relate the subspaces of $\Pi(\R^d_0\times\X)$ and $\K(\X)$.

	\begin{Proposition}
		The following relations hold:
		\begin{enumerate}
			\item $\CR\Pi_0(\R^d_0\times \X)=\K_0(\X)$.
			\item $\CR\Pi_0^{(n)}(\R^d_0\times \X)=\K_0^{(n)}(\X)$ for any $n\in\N_0$.
			\item $\CR\Pi_0(\R^d_0\times\Lambda)=\K_0(\Lambda)$ for any set $\Lambda\subset\X$.
			\item For any $A\in\B_b(\Pi_0(\R^d_0\times \X))$, we have $\CR A\in\B_\mathrm{cm}(\K_0(\X))$ and vice versa.
		\end{enumerate}
	\end{Proposition}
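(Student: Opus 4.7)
The first three claims are essentially bookkeeping: the reflection map $\CR$ is a bijection between pinpointed configurations $\gamma=\{(v_x,x)\}$ and their associated vector-valued measures $\eta=\sum_{x\in\tau(\gamma)}v_x\delta_x$, because the pinpointed property says that distinct points of $\gamma$ have distinct spatial projections. My plan is to verify (1)--(3) by tracking cardinalities and supports through this bijection, and then use a compactness argument on $\R^d_0\times\X$ to settle (4).

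For (1), I would argue both inclusions. If $\gamma\in\Pi_0(\R^d_0\times\X)$, then $|\gamma|<\infty$ and pinpointedness gives $|\tau(\gamma)|=|\gamma|<\infty$, so $\CR\gamma\in\K_0(\X)$. Conversely, any $\eta\in\K_0(\X)$ has $|\tau(\eta)|<\infty$, so $\gamma:=\{(v_x(\eta),x)\mid x\in\tau(\eta)\}$ is a finite pinpointed configuration with $\CR\gamma=\eta$. Item (2) is the same argument restricted to cardinality $n$. For (3), the equivalence $\tau(\CR\gamma)\subset\Lambda \iff \gamma\subset\R^d_0\times\Lambda$ is immediate from the definition of $\tau$.

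For (4), the key observation is that a set $\Lambda\in\B_c(\R^d_0\times\X)$ is compact in the product topology, hence contained in a product $I\times\Lambda_\X$ with $I\subset\R^d_0$ and $\Lambda_\X\subset\X$ both compact; in particular $I$ is separated from $0$. Given $A\in\B_b(\Pi_0(\R^d_0\times\X))$ with $A\subset\bigcup_{n=0}^N\Pi_0^{(n)}(\Lambda)$, combining (2) and (3) yields $\CR A\subset\bigcup_{n=0}^N\K_0^{(n)}(\Lambda_\X)$ together with the velocity constraint $v_x\in I$ for every $x\in\tau(\eta)$, $\eta\in\CR A$; this is exactly the definition of $\B_\mathrm{cm}(\K_0(\X))$. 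Conversely, if $B\in\B_\mathrm{cm}(\K_0(\X))$ is contained in $\bigcup_{n=0}^N\K_0^{(n)}(\Lambda_\X)$ with all velocities in a compact $I\subset\R^d_0$, then $\CR^{-1}B\cap\Pi_0(\R^d_0\times\X)\subset\bigcup_{n=0}^N\Pi_0^{(n)}(I\times\Lambda_\X)$, so it lies in $\B_b(\Pi_0(\R^d_0\times\X))$.

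The main obstacle is preservation of the Borel structure along $\CR$: I need $\CR A$ to be Borel in $\K_0(\X)$ when $A$ is Borel, and the preimage of a Borel set in $\K_0(\X)$ under $\CR$ to be Borel in $\Pi_0(\R^d_0\times\X)$. This is precisely what the Theorem already established in this paper guarantees, namely that the image $\sigma$-algebra of $\B(\Pi(\R^d_0\times\X))$ under $\CR$ coincides with $\B(\K(\X))$; restricting that identity to finite configurations and using that $\Pi_0,\K_0$ inherit their Borel structures from $\Pi,\K$ respectively closes the argument. Once the measurability is in hand, the four claims reduce to the elementary set-theoretic bookkeeping outlined above.
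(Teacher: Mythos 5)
Your proposal is correct and, for items (1)--(3), follows essentially the same route as the paper: represent a finite pinpointed configuration as $\gamma=\sum_{i=1}^n\delta_{(v_i,x_i)}$, observe that pinpointedness makes $\gamma\mapsto\CR\gamma=\sum_{i=1}^n v_i\delta_{x_i}$ a bijection with $|\tau(\CR\gamma)|=|\gamma|$, and track cardinalities and supports both ways. The paper in fact only writes out item (1) and dismisses the rest with ``the other statements follow similarly,'' so your explicit treatment of item (4) --- enclosing a compact $\Lambda\subset\R^d_0\times\X$ in a product $I\times\Lambda_\X$ with $I$ compact in $\R^d_0$ (hence bounded away from $0$), which is exactly what produces the compact-velocity constraint in $\B_\mathrm{cm}(\K_0(\X))$, and the converse via $I\times\Lambda_\X\in\B_c(\R^d_0\times\X)$ --- supplies the one step that is genuinely non-trivial and that the paper leaves implicit; your appeal to the earlier $\sigma$-algebra theorem to guarantee that $\CR A$ is Borel is likewise a gap the paper silently skips over. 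No errors; if anything, your version is the more complete one.
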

	\begin{proof}
		We will now prove the first statement. The other statements follow similarly.
		
		For $\gamma\in\Pi_0(\R^d_0\times \X)$, there is a representation $\gamma=\sum_{i=1}^n\delta_{(v_i,x_i)}$. This implies that  $\CR\gamma=\sum_{i=1}^nv_i\delta_{x_i}\in\K_0(\X)$.
		
		Let $\eta\in\K_0(\X)$. Again, we can represent it as $\eta=\sum_{i=1}^nv_i\delta_{x_i}$. By defining $\gamma=\sum_{i=1}^n\delta_{(v_i,x_i)}$, we obtain $\gamma\in\Pi_0(\R^d_0\times \X)$ and $\CR\gamma=\eta$.
	\end{proof}

	We now introduce the corresponding function spaces on  $\K_0(\X)$. 
	
	\begin{Definition}\label{locsupp}
		\begin{enumerate}
			\item  A function $G\colon\K_0(\X)\to\R$ is said to be bounded with local support if there exist $C>0$ and $\Lambda\in\B_c(\R^d)$ such that the following estimate holds for all $\eta\in\K_0(\X)$:
			\begin{equation}\label{locbd}
				|G(\eta)|\leq C\mathbbm{1}_{\K_0(\Lambda)}(\eta)\prod_{x\in\tau(\eta)}|v_x|.
			\end{equation}
Note that, this implies that $G(\eta)=0$ if $\tau(\eta)\cap\Lambda^c\neq\emptyset$. We denote by $B_\mathrm{ls}(\K_0(\X))$ all measurable functions $G\colon\K_0(\X)\to\R$ which are bounded with local support.
			\item A function $G\colon\K_0(\X)\to\R$ is called bounded with bounded support if there exist $\Lambda\in\mathcal{B}_c(\R^d), N\in\N$ and $C>0$ such that
			\begin{displaymath}
				|G(\eta)|\leq C\mathbbm{1}_{\K_0(\Lambda)}(\eta)\mathbbm{1}_{\{|\tau(\eta)|\leq N\}}(\eta), \end{displaymath}
			i.e. $G(\eta)=0$ whenever $|\tau(\eta)|>N$ or $\tau(\eta)\cap\Lambda^c\neq\emptyset$. Denote the space of all such functions by $B_\mathrm{bs}(\K_0(\X))$.
			\item Taking into account the effect of the marks as above, we define a modified version of $B_\mathrm{bs}(\K_0(\X))$. For some $\Lambda\in\mathcal{B}_c(\R^d), N\in\N$ and $C>0$, we define the space $\widetilde{B}_\mathrm{bs}(\K_0(\X))$ as a space of all functions $G\colon\K_0(\X)\to\R$ which satisfy the bound
			\begin{displaymath}
				|G(\eta)|\leq C\mathbbm{1}_{\K_0(\Lambda)}(\eta)\mathbbm{1}_{\{|\tau(\eta)|\leq N\}}(\eta)\prod_{x\in\tau(\eta)}|v_x|.
			\end{displaymath}
			Obviously, we have $\widetilde{B}_\mathrm{bs}(\K_0(\X))\subset B_\mathrm{ls}(\K_0(\X))$.

\item Define the space of bounded measurable functions with compact mark support as the set of all functions $G\in B_\mathrm{bs}(\K_0(\X))$ such that there exists a compact set $I\subset \R^d_0 $ for which
			\begin{equation}\label{comp_marks}
				|G(\eta)|\leq C\mathbbm{1}_{\K_0(\Lambda)}\mathbbm{1}_{\{|\tau(\eta)|\leq N\}}\prod_{x\in\tau(\eta)}\mathbbm{1}_I(v_x),
			\end{equation}
			where $\Lambda, C$ and $N$ are as above.  Denote the space of bounded functions with compact marks by $B_\mathrm{cm}(\K_0(\X))$.
			\item A measure $\rho$ on $\K_0(\X)$ is called locally finite if for any $\Lambda\in\B_c(\R^d)$ and for any $m\in\N_0$, the value of $\rho(\K_0^{(m)}(\Lambda))$ is finite. Equivalently, $\rho(A)$ is finite for all bounded measurable sets $A\subset\K_0(\X)$. The space of all locally finite measures on $\K_0(\X)$ is denoted by $\CM_\mathrm{lf}(\K_0(\X))$.
			\item A measure $\rho$ on $\K_0(\X)$ is called mark-locally finite if $\rho(A)<\infty$ for all $A\in\B_\mathrm{cm}(\K_0(\X))$. A locally finite measure $\rho$ is also mark-locally finite.
		\end{enumerate}
	\end{Definition}
	
We now establish the relationship between the function spaces on $\Pi_0(\R^d_0 \times \X)$ and  $\K_0(\X)$. Using the reflection mapping $\CR$, we map functions in 
 $\CF(\Pi_0(\R^d_0\times \X))$ , defined on  $\Pi_0(\R^d_0\times \X)$ to functions in $\CF(\K_0(\X))$, defined on $\K_0(\X)$ , as follows
	\begin{Definition}
		Define the pushforward of functions on $\Pi_0(\R^d_0\times \X)$ to $\K_0(\X)$ as follows:
		\begin{align*}
			\CR\colon&\CF(\Pi_0(\R^d_0\times \X))\to\CF(\K_0(\X))
			\\
			&F\mapsto\CR F:=F\circ\CR^{-1},
		\end{align*}
		analogously, we may define the inverse mapping $\CR^{-1}\colon\CF(\K_0(\X))\to\CF(\Pi_0(\R^d_0\times \X))$.
	\end{Definition}

	While for the function spaces on $\Pi_0(\R^d_0\times \X)$ we require compactness of mark support, for the function spaces on $\K_0(\X)$  according to the Definition \ref{locsupp} we have boundedness in the mark variables.
	The following proposition shows the relations between locally supported functions on $\Pi_0(\R^d_0\times \X)$ and $\K_0(\X)$.
	
	\begin{Proposition}\label{relation_functions_pi_k}
		For the above spaces, the following relations hold:
		\begin{enumerate}
			\item $\CR B_\mathrm{ls}(\Pi_0(\R^d_0\times \X))\not\subset B_\mathrm{ls}(\K_0(\X))$ and $B_\mathrm{ls}(\Pi_0(\R^d_0\times \X))\not\supset\CR^{-1} B_\mathrm{ls}(\K_0(\X))$;
			\item $\CR B_\mathrm{bs}(\Pi_0(\R^d_0\times \X))=B_\mathrm{cm}(K_0(\X))$.
		\end{enumerate}
	\end{Proposition}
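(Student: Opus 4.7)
The plan is to unpack the definitions and exploit the key structural observation that any compact set $\Lambda\in\B_c(\R^d_0\times \X)$ is necessarily contained in a product $I\times\Lambda_x$ with $I\subset\R^d_0$ compact (hence both bounded and bounded away from the origin) and $\Lambda_x\subset\X$ compact, whereas the definition of $B_\mathrm{ls}(\K_0(\X))$ constrains positions to a compact set of $\X$ but replaces compactness of the velocity set by the multiplicative weight $\prod_{x\in\tau(\eta)}|v_x|$. Part (2) will then follow by rewriting the product bound as a product of indicators in $v_x$, while part (1) will follow from two explicit counterexamples built on this asymmetry.

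For part (2), I would first prove $\CR B_\mathrm{bs}(\Pi_0(\R^d_0\times \X))\subset B_\mathrm{cm}(\K_0(\X))$. Given $G$ with $|G(\gamma)|\leq C\mathbbm{1}_{\Pi_0(\Lambda)}(\gamma)\mathbbm{1}_{\{|\gamma|\leq N\}}(\gamma)$, choose compact sets $I\subset\R^d_0$ and $\Lambda_x\subset\X$ with $\Lambda\subset I\times\Lambda_x$. If $\CR G(\eta)\neq 0$, then $\CR^{-1}\eta\in\Pi_0(\Lambda)$, which forces $\tau(\eta)\subset\Lambda_x$, $|\tau(\eta)|\leq N$ and $v_x\in I$ for every $x\in\tau(\eta)$; this is exactly the bound \eqref{comp_marks}. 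Conversely, given $G\in B_\mathrm{cm}(\K_0(\X))$ with parameters $\Lambda_x,I,N,C$, set $\Lambda:=I\times\Lambda_x\in\B_c(\R^d_0\times \X)$; since $\CR$ restricted to $\Pi_0(\R^d_0\times \X)$ is a bijection with $|\gamma|=|\tau(\CR\gamma)|$, the function $\CR^{-1}G$ vanishes unless $\gamma\subset\Lambda$ and $|\gamma|\leq N$, giving the $B_\mathrm{bs}(\Pi_0(\R^d_0\times \X))$ bound.

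For part (1), both non-inclusions follow from explicit counterexamples. To show $\CR B_\mathrm{ls}(\Pi_0(\R^d_0\times \X))\not\subset B_\mathrm{ls}(\K_0(\X))$, take $I:=\{v\in\R^d\colon 1/4\leq|v|\leq 1/2\}$, fix any compact $\Lambda_x\subset\X$, and set $F:=\mathbbm{1}_{\Pi_0(I\times\Lambda_x)}\in B_\mathrm{ls}(\Pi_0(\R^d_0\times \X))$. Then $\CR F(\eta)=1$ whenever $\tau(\eta)\subset\Lambda_x$ and every $v_x\in I$, but $\prod_{x\in\tau(\eta)}|v_x|\leq (1/2)^{|\tau(\eta)|}$ can be made arbitrarily small by letting $|\tau(\eta)|$ grow, so no constant $C$ satisfies $1\leq C\prod_{x\in\tau(\eta)}|v_x|$ uniformly. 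For $B_\mathrm{ls}(\Pi_0(\R^d_0\times \X))\not\supset\CR^{-1}B_\mathrm{ls}(\K_0(\X))$, take $G(\eta):=\mathbbm{1}_{\K_0(\Lambda_x)}(\eta)\prod_{x\in\tau(\eta)}|v_x|\in B_\mathrm{ls}(\K_0(\X))$; for single-point configurations $\gamma=\{(v,x_0)\}$ with $x_0\in\Lambda_x$ fixed and $|v|\to\infty$, $\CR^{-1}G(\gamma)=|v|$ is unbounded while $\mathbbm{1}_{\Pi_0(\tilde\Lambda)}(\gamma)=0$ as soon as $|v|$ exceeds the velocity projection of any preassigned $\tilde\Lambda\in\B_c(\R^d_0\times \X)$, so no bound of the form $|\CR^{-1}G(\gamma)|\leq C\mathbbm{1}_{\Pi_0(\tilde\Lambda)}(\gamma)$ can hold.

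I do not anticipate a serious obstacle; the content of the proposition is precisely the observation that the two notions of ``locally supported'' disagree because compactness in $\R^d_0\times \X$ controls both the magnitude and the distance-from-zero of velocities, whereas on $\K_0(\X)$ this is traded for the multiplicative weight $\prod_{x\in\tau(\eta)}|v_x|$. The only technical care is to choose $I$ inside a thin annulus near the origin (rather than a set with $|v|\geq 1$), so that the weight genuinely fails to dominate on many-point configurations in the first counterexample.
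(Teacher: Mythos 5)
Your proposal is correct and takes essentially the same route as the paper: part (2) by sandwiching the compact set $\Lambda\in\B_c(\R^d_0\times\X)$ inside a product $I\times\Lambda'$ and translating the indicator bounds back and forth under $\CR$, and part (1) by exploiting that the weight $\prod_{x\in\tau(\eta)}|v_x|$ can be made arbitrarily small (many points with velocities in a small annulus) or, in the reverse direction, that $B_\mathrm{ls}(\K_0(\X))$ imposes no compactness on the marks so the pullback is unbounded in $v$. Your two explicit counterexamples in part (1) are in fact more carefully stated than the paper's rather terse argument, which only gestures at the same two phenomena.
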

	\begin{proof}
		\begin{enumerate}
			\item Let $G\in B_\mathrm{ls}(\Pi_0(\R^d_0\times \X))$ such that for some compact $A\subset \R^d_0$, we have $A\times\Lambda^\prime\subset\Lambda$, where $\Lambda$ is as in Definition \ref{locsupp_pi} and $\Lambda^\prime\subset\X$ compact. We require the estimate
			\begin{displaymath}
				C\mathbbm{1}_{\Pi_0(\Lambda)}(\CR^{-1}\eta)\leq C_1\mathbbm{1}_{\K_0(\Lambda^\prime)}(\eta)\prod_{x\in\tau(\eta)}v_x
			\end{displaymath}
			for some $C,C_1>0$. But since $A$ is compact is possible, and the number of points in $\eta$ is arbitrary, the right-hand side can be arbitrarily small. On the other hand, let $G\in B_\mathrm{ls}(\K_0(\X))$. To show $\CR^{-1}G\in B_\mathrm{ls}(\Pi_0(\R^d_0\times \X))$, we require
			\begin{displaymath}
				C\mathbbm{1}_{\K_0(\Lambda^\prime)}(\CR\gamma)\prod_{x\in\tau(\CR\gamma)}v_x\leq C_1\mathbbm{1}_{\Pi_0(\Lambda)}(\CR\gamma)
			\end{displaymath}
			for some $C,C_1>0$ and $\Lambda,\Lambda^\prime$ as in the definitions above. Since there is no compactness requirement on the marks in $B_\mathrm{ls}(\K_0(\X))$, the left-hand side can be arbitrarily large.
			
		\item	Let $G\in B_\mathrm{bs}(\Pi_0(\R^d_0\times \X))$. Then there exist
			$\La \in \B_c (\R^d_0\times\X)$ 
			compact, $N\in\N$ and $C>0$ such that  \eqref{bbs}  holds. Then, there exists $A$ such that  $\La\subset A\times\Lambda^\prime$ for some $\Lambda^\prime\in\B_c(\X)$. Then,
			$$
			G(\CR^{-1}\eta)
			%\leq C\mathbbm{1}_{\Pi_0(\Lambda)}(\CR^{-1}\eta)\mathbbm{1}_{\{|\CR^{-1}\eta|\leq N\}}(\R^{-1}\eta)
			\leq C\mathbbm{1}_{\Pi_0([A\times\Lambda^\prime)}(\CR^{-1}\eta)\mathbbm{1}_{\{|\CR^{-1}\eta|\leq N\}}(\CR^{-1}\eta)
			$$
			$$
			=C\mathbbm{1}_{\K_0(\Lambda^\prime)}(\eta)\mathbbm{1}_{\{|\tau(\eta)|\leq N\}}(\eta)\prod_{x\in\tau(\eta)}\mathbbm{1}_A (v_x),
			$$
			
			which shows the first inclusion. On the other hand, let $G\in B_\mathrm{cm}(\K_0(\X))$. Then, there exist $\Lambda^\prime\in\B_c(\X)$, $I\in\B_c(\R^d_0)$, $N\in\N$ and $C>0$ such that \eqref{comp_marks} holds. Also, $I\times\Lambda^\prime\in\B_c(\R^d_0 \times\X)$ and since
			\begin{displaymath}
				\mathbbm{1}_{\K_0(\Lambda^\prime)}(\eta)\prod_{x\in\tau(\eta)}\mathbbm{1}_I(v_x)=\mathbbm{1}_{\Pi_0(I\times\Lambda^\prime)}(\CR^{-1}\eta),
			\end{displaymath}
			the claim follows.
		\end{enumerate}
	\end{proof}
	
	We  define the $K$-transform analogously to the case of space $\Pi_0(\R^d_0\times\X)$. We note that an estimate for the $K$-transform is obtained using the definition of function spaces \eqref{locsupp}.
\begin{Definition}
Let $G\in B_\mathrm{ls}(\K_0(\X))$. The $K$-transform of $G$ is defined as the function $KG\colon\K(\X)\to\R$ of the form:
		\begin{displaymath}
			(K_\K G)(\eta)=(KG)(\eta):=\sum_{\xi\Subset\eta}G(\xi),
		\end{displaymath}
where the inclusion $\xi\Subset\eta$ is meant in the sense of Definition \ref{K transform}. 
	\end{Definition}
	
	\begin{Lemma}
		For any $G\in B_\mathrm{ls}(\K_0(\X))$, the $K$-transform is well-defined and the following estimate holds:
		\begin{displaymath}
			|(KG)(\eta)|\leq C\prod_{x\in\tau(\eta)\cap\Lambda}(1+|v_x|),
		\end{displaymath}
where $C$ and $\Lambda$ are as in Definition \ref{locsupp}.
	\end{Lemma}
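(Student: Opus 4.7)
The plan is to expand the $K$-transform via its definition, apply the local support bound directly, and then recognize the resulting sum as the combinatorial expansion of a product of factors $(1+|v_x|)$ indexed by $\tau(\eta)\cap\Lambda$. Well-definedness (i.e., absolute convergence of the defining sum) will then follow from the same estimate, using the fact that every $\eta\in\K(\X)$ satisfies $V_\Lambda(\eta)<\infty$ for each compact $\Lambda$, since $\K(\X)=\CR\Pi(\R^d_0\times\X)$.

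Concretely, first I would fix $\eta\in\K(\X)$ and $G\in B_{\mathrm{ls}}(\K_0(\X))$ with constants $C$ and $\Lambda$ from Definition \ref{locsupp}. A finite submeasure $\xi\Subset\eta$ has the form $\xi=\sum_{x\in S}v_x\delta_x$ for some finite $S\subset\tau(\eta)$, so the bound \eqref{locbd} gives $|G(\xi)|\leq C\,\mathbbm{1}_{\K_0(\Lambda)}(\xi)\prod_{x\in S}|v_x|$, and the indicator forces $S\subset\tau(\eta)\cap\Lambda$. Therefore
\begin{equation*}
\sum_{\xi\Subset\eta}|G(\xi)|\;\leq\;C\sum_{\substack{S\subset\tau(\eta)\cap\Lambda\\ |S|<\infty}}\prod_{x\in S}|v_x|.
\end{equation*}

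Next I would invoke the elementary identity
\begin{equation*}
\sum_{\substack{S\subset A\\ |S|<\infty}}\prod_{x\in S}a_x\;=\;\prod_{x\in A}(1+a_x),
\end{equation*}
valid for any at most countable collection of nonnegative reals $\{a_x\}_{x\in A}$ whenever $\sum_{x\in A}a_x<\infty$ (both sides then being finite and equal). Applied with $A=\tau(\eta)\cap\Lambda$ and $a_x=|v_x|$, and using that $\eta\in\K(\X)$ implies $V_\Lambda(\eta)=\sum_{x\in\tau(\eta)\cap\Lambda}|v_x|<\infty$, this yields both the absolute convergence of the defining series (so $KG$ is well-defined) and the claimed estimate
\begin{equation*}
|(KG)(\eta)|\;\leq\;C\prod_{x\in\tau(\eta)\cap\Lambda}(1+|v_x|).
\end{equation*}

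I do not expect any serious obstacle: the only subtlety is that $\tau(\eta)\cap\Lambda$ can be a countably infinite set (since $\tau(\eta)$ need not be locally finite as a subset of $\X$), which would make the product $\prod(1+|v_x|)$ an infinite product. This is handled entirely by the Plato-space restriction $V_\Lambda(\eta)<\infty$, which guarantees convergence of the product and simultaneously legitimizes the interchange of summation implicit in the combinatorial identity above.
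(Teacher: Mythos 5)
Your argument is correct and follows essentially the same route as the paper's proof: bound each $|G(\xi)|$ by the local-support estimate, sum over finite sub-measures supported in $\tau(\eta)\cap\Lambda$, identify the resulting sum with $\prod_{x\in\tau(\eta)\cap\Lambda}(1+|v_x|)$, and deduce convergence from $\sum_{x\in\tau(\eta)\cap\Lambda}|v_x|<\infty$. Your write-up is in fact slightly more careful than the paper's, since you make the combinatorial identity and the convergence condition for the infinite product explicit.
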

	\begin{proof}
		We have
		\begin{displaymath}
			|(KG)(\eta)|\leq\sum_{\xi\Subset\eta}|G(\eta)|\leq C\sum_{\substack{\xi\in \K_0(\Lambda)\\ \tau(\xi)\subset\tau(\eta)}}\prod_{x\in\tau(\xi)}|v_x|=C\prod_{x\in\tau(\eta)\cap\Lambda}(1+|v_x|),
		\end{displaymath}
		where the product in the last expression is finite if and only if the following sum is finite:
		\begin{displaymath}
			\sum_{x\in\tau(\eta)\cap\Lambda}|v_x|.
		\end{displaymath}
		 Since the latter holds by the definition of 
 $\eta\in\K(\X)$, the claim follows.
	\end{proof}
	
	Similar to the example in the previous section, we present $\K_0(\X)$-analogue of coherent states.
	\begin{Example}
		For a function $\varphi\in C_0(\R^d)$ and a vector $h\in \X$ we define the coherent state as the function $e^h_\K(\varphi)\colon\K_0(\X)\to\R$ by:
		\begin{displaymath}
			e_{\K}(\varphi,\eta):=\prod_{x\in\tau(\eta)} <h,v_x>\varphi(x),
		\end{displaymath}
		since $\varphi$ is bounded, $e^h_\K(\varphi)$ fulfills bound \eqref{locbd}. We can calculate its $K$-transform:
		\begin{displaymath}
			(Ke^h_\K(\varphi))(\eta)=\prod_{x\in\tau(\eta)}(1+<h,v_x>\varphi(x)).
		\end{displaymath}
		For the right-hand-side to be well-defined, the series $\sum_{x\in\tau(\eta)}<h,v_x>\varphi(x)$ needs to be convergent. This is given in our case since $\varphi$ is compactly supported, this is given in our case.
		
		For $f^h_\varphi(v,x):= <h,v>\varphi(x)$, consider the Lebesgue-Poisson exponent $e_\CL(f^h_\varphi)$ from Example \ref{coherent_pi}, we see that
		\begin{displaymath}
			e^h_\K(\varphi,\CR\gamma)=e_\CL(f^h_\varphi,\gamma),\ \gamma\in\Pi_0(\R^d_0\times\X).
		\end{displaymath}
	\end{Example}

	We can relate the $K$-transform on $\Pi_0(\R^d_0\times \X)$ and on $\K_0(\X)$ in the following way:
	\begin{Proposition}\label{ktrf_rel_pi_k}
		For $G\in B_\mathrm{ls}(\K_0(\X))\cap\CR B_\mathrm{ls}(\Pi_0(\R^d_0\times \X))$ and $\eta\in\K(\X)$, the following holds:
		\begin{displaymath}
			(K_\K G)(\eta)=(K_\Pi(\CR^{-1}G))(\CR^{-1}\eta).
		\end{displaymath}
	\end{Proposition}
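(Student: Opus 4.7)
The plan is to exploit the bijective nature of the reflection map $\CR$ between $\Pi_0(\R^d_0\times\X)$ and $\K_0(\X)$ established in the earlier proposition, and to show that this bijection intertwines the two notions of ``finite sub-object'' used to define the two $K$-transforms. The membership condition $G\in B_\mathrm{ls}(\K_0(\X))\cap\CR B_\mathrm{ls}(\Pi_0(\R^d_0\times \X))$ is used purely to guarantee that both sides are well-defined: the left-hand side converges absolutely by the preceding lemma bounding $|K_\K G|$ on $\K(\X)$, while the right-hand side converges absolutely because $\CR^{-1}G\in B_\mathrm{ls}(\Pi_0(\R^d_0\times\X))$ and the $K_\Pi$-transform is well-defined on this class by the results of Section \ref{harmana_pi}.

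The core combinatorial step is to verify that, for any fixed $\eta\in\K(\X)$, the reflection mapping restricts to a bijection
\[
\CR\colon\{\zeta\in\Pi_0(\R^d_0\times\X):\zeta\Subset\CR^{-1}\eta\}\;\longrightarrow\;\{\xi\in\K_0(\X):\xi\Subset\eta\}.
\]
This is immediate from the structure of the two spaces: writing $\CR^{-1}\eta=\{(v_x,x):x\in\tau(\eta)\}$, a finite subset $\zeta\Subset\CR^{-1}\eta$ has the form $\zeta=\{(v_{x_i},x_i)\}_{i=1}^n$ for some finite $\{x_1,\dots,x_n\}\subset\tau(\eta)$, and its image $\CR\zeta=\sum_{i=1}^n v_{x_i}\delta_{x_i}$ is precisely a finite sub-measure of $\eta$. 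The inverse correspondence is identical, and cardinality is preserved, so the bijection indeed sends finite sub-configurations to finite sub-measures. This is essentially a pointwise restriction of the global statements $\CR\Pi_0(\R^d_0\times\X)=\K_0(\X)$ and $\CR\Pi_0^{(n)}(\R^d_0\times\X)=\K_0^{(n)}(\X)$ proved earlier.

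With the bijection in hand, and using $(\CR^{-1}G)(\zeta)=G(\CR\zeta)$ by the very definition of the pushforward of functions, the identity follows by a reindexing of the defining sum:
\[
(K_\Pi(\CR^{-1}G))(\CR^{-1}\eta)=\sum_{\zeta\Subset\CR^{-1}\eta}(\CR^{-1}G)(\zeta)=\sum_{\zeta\Subset\CR^{-1}\eta}G(\CR\zeta)=\sum_{\xi\Subset\eta}G(\xi)=(K_\K G)(\eta),
\]
where the third equality is the change of summation variable $\xi=\CR\zeta$ through the bijection described above. I do not expect any serious obstacle here; the only point requiring attention is that the notion of ``finite subset'' $\Subset$ is intrinsically a cardinality condition on the support, so it is transported correctly by $\CR$, which preserves the support size. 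Thus the proof reduces to this bookkeeping argument together with the well-definedness guaranteed by the two-sided local-support hypothesis on $G$.
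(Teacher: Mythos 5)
Your proposal is correct and follows essentially the same route as the paper: the paper's proof enumerates finite sub-measures of $\eta$ by finite index subsets $\{i_1,\dotsc,i_n\}\subset I$ and rewrites each as $\CR$ of the corresponding sub-configuration, which is exactly the bijection $\zeta\mapsto\CR\zeta$ you describe, combined with $(\CR^{-1}G)(\zeta)=G(\CR\zeta)$. Your version merely states the reindexing bijection more explicitly; there is no substantive difference.
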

	
	\begin{proof}[Proof of Proposition \ref{ktrf_rel_pi_k}]
		Let $\eta=\sum_{i\in I}v_i\delta_{x_i}$, where $I\subset\N$. Then,
		\begin{align*}
			(K_\K G)(\eta)&=\sum_{n=0}^\infty\sum_{\{i_1,\dotsc,i_n\}\subset I}G\left(\sum_{i=1}^n v_{i_k}\delta_{x_{i_k}}\right)=\sum_{n=0}^\infty\sum_{\{i_1,\dotsc,i_n\}\subset I}G\left(\CR\left[\sum_{i=1}^n\delta_{(x_{i_k},v_{i_k})}\right]\right)
			\\
			&=\sum_{n=0}^\infty\sum_{\{i_1,\dotsc,i_n\}\subset I}(\CR^{-1}G)\left(\sum_{i=1}^n\delta_{(x_{i_k},v_{i_k})}\right)
			\\
			&=\sum_{n=0}^\infty\sum_{\{i_1,\dotsc,i_n\}\subset I}(\CR^{-1}G)\left(\CR^{-1}\left[\sum_{i=1}^n v_{i_k}\delta_{x_{i_k}}\right]\right)
			\\
			&=(K_\Pi (\CR^{-1}G))(\CR^{-1}\eta).
		\end{align*}
	\end{proof}

	We need properties related to the $K$-transform, which will be used in the calculation below.
	
	\begin{Lemma}\label{combi}
		Let $G,G_1,G_2\in B_\mathrm{ls}(\K_0(\X))$.
		\begin{enumerate}
			\item The K-transform has the following properties:
			\begin{align*}
				KG(\eta-v_x\delta_x)-KG(\eta)&=-(KG(\cdot+v_x\delta_x))(\eta-v_x\delta_x),
				\\
				KG(\eta+v_x\delta_x)-KG(\eta)&=(KG(\cdot+v_x\delta_x))(\eta).
			\end{align*}
			\item The K-transform and the $\star$-convolution have the following relation:
			\begin{displaymath}
				K(G_1\star G_2)=KG_1\cdot KG_2.
			\end{displaymath}
		\end{enumerate}
	\end{Lemma}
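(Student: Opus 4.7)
The plan is to prove the three identities by direct combinatorial manipulation of the defining sum $KG(\eta)=\sum_{\xi\Subset\eta}G(\xi)$, then invoke the analogous statement on $\Pi_0(\R^d_0\times\X)$ for part (2) via the reflection map.

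For the second identity in part (1), I would start from the definition:
\begin{displaymath}
KG(\eta+v_x\delta_x)=\sum_{\xi\Subset\eta+v_x\delta_x}G(\xi).
\end{displaymath}
Every finite subconfiguration $\xi$ of $\eta+v_x\delta_x$ either contains the atom $v_x\delta_x$ or it does not. Splitting the sum along this dichotomy, the subsets not containing $v_x\delta_x$ are exactly the finite subconfigurations of $\eta$, contributing $KG(\eta)$; those containing it are in bijection with the finite subconfigurations $\zeta\Subset\eta$ via $\xi=\zeta+v_x\delta_x$, contributing $\sum_{\zeta\Subset\eta}G(\zeta+v_x\delta_x)=(KG(\cdot+v_x\delta_x))(\eta)$. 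Rearranging yields the second identity. The first identity is then obtained by substituting $\eta\mapsto\eta-v_x\delta_x$ in the second (where $v_x\delta_x$ is understood as an atom of $\eta$), and reading the resulting equation backwards.

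For part (2), the cleanest route is to transport the problem to the Plato space. By Proposition \ref{ktrf_rel_pi_k}, for any $G\in B_{\mathrm{ls}}(\K_0(\X))$ and $\eta\in\K(\X)$ we have $(K_\K G)(\eta)=(K_\Pi \CR^{-1}G)(\CR^{-1}\eta)$. Defining the $\star$-convolution on $\K_0(\X)$ through $G_1\star G_2:=\CR(\CR^{-1}G_1\star\CR^{-1}G_2)$ (so that it is compatible with the definition on $\Pi_0(\R^d_0\times\X)$ under $\CR$), the identity reduces to
\begin{displaymath}
K_\Pi(\CR^{-1}G_1\star\CR^{-1}G_2)=K_\Pi(\CR^{-1}G_1)\cdot K_\Pi(\CR^{-1}G_2),
\end{displaymath}
which is precisely Proposition \ref{convolution_pi}. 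Evaluating at $\CR^{-1}\eta$ and applying Proposition \ref{ktrf_rel_pi_k} to each factor gives $K(G_1\star G_2)(\eta)=KG_1(\eta)\cdot KG_2(\eta)$.

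The main obstacle is conceptual rather than technical: part (2) is stated on $\K_0(\X)$ but the $\star$-convolution was only defined on $\Pi_0(\R^d_0\times\X)$ in Definition \ref{def_convolution}. One must either first pull the definition back through $\CR$, as above, or give a direct definition by summing over partitions into three subconfigurations of $\eta\in\K_0(\X)$ (with the caveat that addition of measures respects the pinpointed structure). Once this compatibility is acknowledged, the identity is immediate from the corresponding result on $\Pi_0$. The identities in part (1), by contrast, are purely bookkeeping arguments about splitting sums over finite subconfigurations.
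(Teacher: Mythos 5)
Your argument for part (1) is exactly the combinatorial bookkeeping the paper has in mind (the paper's own ``proof'' is a one-line citation to the combinatorial arguments of \cite{MR1914839}), and it is correct: splitting the subconfigurations of $\eta+v_x\delta_x$ according to whether they contain the atom $v_x\delta_x$ gives the second identity, and the first follows by the substitution $\eta\mapsto\eta-v_x\delta_x$. You are also right to flag that the $\star$-convolution is never actually defined on $\K_0(\X)$ in the paper; that is a genuine gap in the statement, not in your proof, and either of your two fixes (pulling the definition back through $\CR$, or summing directly over three-part partitions of $\eta$) is legitimate.

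One soft spot in your preferred route for part (2): transporting the identity to the Plato space requires $\CR^{-1}G_1,\CR^{-1}G_2\in B_\mathrm{ls}(\Pi_0(\R^d_0\times\X))$ in order to invoke Proposition \ref{convolution_pi}, and Proposition \ref{ktrf_rel_pi_k} is likewise only stated on the intersection $B_\mathrm{ls}(\K_0(\X))\cap\CR B_\mathrm{ls}(\Pi_0(\R^d_0\times\X))$. By Proposition \ref{relation_functions_pi_k}(1) this intersection is strictly smaller than $B_\mathrm{ls}(\K_0(\X))$: the pullback of a generic $G\in B_\mathrm{ls}(\K_0(\X))$ need not have compact mark support, so the cited propositions do not apply verbatim. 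The clean repair is the alternative you already mention: define $(G_1\star G_2)(\eta)$ directly by summing over partitions of $\eta$ into three subconfigurations and run the same partition-counting argument on $\K_0(\X)$ itself; absolute convergence of all the sums involved is guaranteed by the bound $|(KG)(\eta)|\leq C\prod_{x\in\tau(\eta)\cap\Lambda}(1+|v_x|)$ established in the lemma following the definition of $K_\K$, which uses only the $B_\mathrm{ls}(\K_0(\X))$ estimate and the local summability of $|v_x|$ built into $\K(\X)$. With that adjustment your proof is complete and matches the paper's (unwritten) intent.
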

	\begin{proof}
		Proof can be conducted using combinatorial arguments \cite{MR1914839}.
	\end{proof}
   The following lemma is needed for calculations on the space of finite measures. It is also known as  Minlos Lemma.
	\begin{Lemma}[\cite{MR2426716}]\label{Minlos}
		Let $\CL_\sigma$ be the Lebesgue-Poisson measure on $\K_0(\X)$ associated with some intensity measure 
		$\sigma =\la\otimes m$.
		\begin{enumerate}
			\item Let $G\colon\K_0(\X)\to\R$, $H\colon(\K_0(\X))^2\to\R$. Then,
			\begin{align*}
				\int_{\K_0(\X)}&\int_{\K_0(\X)}G(\xi_1+\xi_2)H(\xi_1,\xi_2) \CL_\sigma(d\xi_1)\CL_\sigma(d\xi_2)
				\\
				&=\int_{\K_0(\X)}G(\eta)\sum_{\xi\subset\eta}H(\xi,\eta-\xi)\CL_\sigma(d\eta).
			\end{align*}
			\item Let $H\colon\K_0(\X)\times\R^d_0\times\R^d\to\R$. Then,
			\begin{align*}
				\int_{\K_0(\X)}&\sum_{x\in\tau(\eta)}H(\eta,v_x,x)\CL_\sigma(d\eta)
				\\
				&=\int_{\K_0(\X)}\int_{\R^d_0\times\R^d} H(\eta+v\delta_x )\sigma(dv,dx)\CL_\sigma(d\eta),
			\end{align*}
		\end{enumerate}
		provided that at least one side of the equation exists.
	\end{Lemma}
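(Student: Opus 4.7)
The natural approach is to reduce both identities to the classical Minlos lemma on the ordinary Poisson-configuration space $\Ga_0(\R^d_0\times\X)$ by transporting through the reflection bijection $\CR\colon\Pi_0(\R^d_0\times\X)\to\K_0(\X)$. By construction, $\CL_\sigma$ on $\K_0(\X)$ is the pushforward of the classical Lebesgue-Poisson measure $\sum_n \frac{1}{n!}\sigma^{(n)}$ on $\Ga_0(\R^d_0\times\X)$; the addition $\xi_1+\xi_2$ of $\K_0$-measures with disjoint supports corresponds to the disjoint union of the associated finite subsets of $\R^d_0\times\X$; and the inclusion $\xi\subset\eta$ in $\K_0(\X)$ corresponds, via $\CR^{-1}$, to the usual finite-subset relation $\xi\Subset\gamma$ on $\Pi_0(\R^d_0\times\X)$. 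Pulling both sides of each identity back by $\CR$ therefore reduces the statement to the version given in \cite{MR2426716}.

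For a self-contained derivation, I would unfold $\CL_\sigma=\sum_{n=0}^\infty\frac{1}{n!}\sigma^{(n)}$ on both sides of Part 1. The left-hand side becomes a double sum indexed by $(n,m)\in\N_0^2$. Setting $k:=n+m$, grouping the two $\sigma^{\otimes n}\otimes\sigma^{\otimes m}$-integrations into a single $\sigma^{\otimes k}$-integration, and exploiting the symmetry of the integrand together with the combinatorial identity $\binom{k}{n}/k!=1/(n!(k-n)!)$ converts the double sum into $\sum_{k}\frac{1}{k!}\int\sigma^{(k)}(d\eta)\,G(\eta)\sum_{\xi\subset\eta}H(\xi,\eta-\xi)$, which is the desired right-hand side. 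Part 2 is then obtained either as a specialization of Part 1 (take $G\equiv 1$ and concentrate $\xi_1$ on one-point configurations) or by a direct computation: on an $n$-point configuration the sum $\sum_{x\in\tau(\eta)}$ produces $n$ symmetric terms, and the factor $n/n!=1/(n-1)!$ matches precisely the $(n-1)$-th term of $\CL_\sigma$ after one integration against $\sigma$ is peeled off and relabeled as the external $(v,x)$-integration on the right-hand side.

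The only substantive technical points are the interchange of summations and integrations via Tonelli, to be carried out first in the non-negative case and then extended by the standard positive/negative splitting under the standing hypothesis that one side of the identity exists. A minor subsidiary remark: because $\lambda\otimes m$ is non-atomic, the exceptional event $\{x\in\tau(\eta)\}$ on which $\eta+v\delta_x$ fails to acquire a new atom has $\sigma\otimes\CL_\sigma$-measure zero, so the right-hand side of Part 2 is well-defined and $\eta+v\delta_x$ almost surely lies in $\K_0(\X)$ with one more atom than $\eta$. I expect no genuine obstacle beyond this bookkeeping, as the lemma is essentially a re-encoding of the combinatorics of the Lebesgue-Poisson decomposition already established in the cited literature; the role of the reflection map is only to ensure that the transfer from $\Pi_0$ to $\K_0$ preserves all the relevant measurable and combinatorial structure.
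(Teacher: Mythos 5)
The paper offers no proof of this lemma at all: it is quoted from \cite{MR2426716} and used as a black box, so there is no argument in the text to compare yours against line by line. Your proposal is nevertheless correct and is the standard derivation from the configuration-space literature: unfold $\CL_\sigma=\sum_n\frac{1}{n!}\sigma^{(n)}$, regroup the double sum over $(n,m)$ with $k=n+m$ using $\binom{k}{n}/k!=1/(n!\,m!)$ together with the symmetry of $\sigma^{(k)}$ for Part~1, and peel off one integration with the factor $n/n!=1/(n-1)!$ for Part~2; Tonelli plus the positive/negative splitting then handles the stated existence hypothesis. Two small points deserve to be made explicit. First, in Part~1 the same non-atomicity argument you invoke for Part~2 is also needed: the event $\tau(\xi_1)\cap\tau(\xi_2)\neq\emptyset$, on which $\xi_1+\xi_2$ merges or even cancels atoms (the marks being vector-valued, $v+w$ may vanish), must be discarded as a $\CL_\sigma\otimes\CL_\sigma$-null set before $\xi_1+\xi_2$ can be identified with an $(n+m)$-point element of $\K_0(\X)$. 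Second, the right-hand side of Part~2 as printed reads $H(\eta+v\delta_x)$, which is a typo for $H(\eta+v\delta_x,v,x)$; your computation implicitly, and correctly, uses the latter. The reduction to $\Ga_0(\R^d_0\times\X)$ via $\CR$ that you propose as a first route is consistent with Example~\ref{exlf}, which defines $\CL_\sigma$ on $\K_0(\X)$ precisely as that pushforward, so either of your two routes closes the gap the paper leaves open.
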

	
	We aim to extend the $K$-transform to the whole space $L^1(\rho)$ for (mark-)locally finite measures $\rho$ on $\K_0(\X)$.
    For this purpose, we show that $B_\mathrm{cm}(\K_0(\X))$ is dense in  $L^1$-spaces. Similarly, we prove that $\widetilde{B}_\mathrm{bs} (\K_0(\X))$ is dense in $L^1$ with a modified measure.\par
    By Proposition \ref{relation_functions_pi_k} we have $B_\mathrm{cm}(\K_0(\X))\subset B_\mathrm{ls}(\K_0(\X))\cap\CR B_\mathrm{ls}(\Pi_0(\R^d_0\times \X))$. Hence, the relation in Proposition \ref{ktrf_rel_pi_k} holds on a set of functions in $L^1(\K_0,\rho)$ for a class of measures $\rho$.
	\begin{Lemma}\label{bsdense}
		For any locally finite measure $\rho$, the space $B_\mathrm{bs}(\K_0(\X))$ is dense in $L^1(\rho)$.
	\end{Lemma}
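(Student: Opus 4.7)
The plan is a direct measure-theoretic truncation, combining the natural exhaustion of $\K_0(\X)$ by bounded Borel sets with a truncation of large values of $f$. First I would fix an exhausting sequence $\Lambda_n \uparrow \X$ of compact sets and set
\begin{displaymath}
B_n := \bigcup_{k=0}^n \K_0^{(k)}(\Lambda_n) = \K_0(\Lambda_n) \cap \{\eta \in \K_0(\X) : |\tau(\eta)| \leq n\},
\end{displaymath}
so that each $B_n$ lies in $\B_b(\K_0(\X))$ and $B_n \uparrow \K_0(\X)$. Local finiteness of $\rho$ then yields $\rho(B_n) < \infty$ for every $n$, and in particular $\rho$ is $\sigma$-finite.

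Given $f \in L^1(\K_0(\X),\rho)$, I would then consider the truncated functions
\begin{displaymath}
f_n(\eta) := f(\eta)\, \mathbbm{1}_{B_n}(\eta)\, \mathbbm{1}_{\{|f| \leq n\}}(\eta),
\end{displaymath}
which are Borel measurable by construction. The essential verification is that $f_n \in B_\mathrm{bs}(\K_0(\X))$: from the definition of $B_n$ one immediately obtains the pointwise bound
\begin{displaymath}
|f_n(\eta)| \leq n\, \mathbbm{1}_{\K_0(\Lambda_n)}(\eta)\, \mathbbm{1}_{\{|\tau(\eta)| \leq n\}}(\eta),
\end{displaymath}
which matches the defining estimate \eqref{bbs} of $B_\mathrm{bs}(\K_0(\X))$ with constants $C = n$, $\Lambda = \Lambda_n$, and $N = n$. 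For each fixed $\eta \in \K_0(\X)$, one has $|\tau(\eta)| < \infty$ and $\tau(\eta)$ contained in some compact set, so $\mathbbm{1}_{B_n}(\eta) = 1$ for all sufficiently large $n$; likewise $\mathbbm{1}_{\{|f|\leq n\}}(\eta) = 1$ for $n$ large enough, hence $f_n(\eta) \to f(\eta)$ pointwise. Combined with the dominating bound $|f_n| \leq |f| \in L^1(\rho)$, the dominated convergence theorem gives $\|f - f_n\|_{L^1(\rho)} \to 0$.

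No serious analytic obstacle arises; the only point meriting attention is the Borel measurability of the building blocks $\K_0(\Lambda_n)$ and $\{\eta : |\tau(\eta)| \leq n\}$, which follows from the Borel structure on $\K_0(\X)$ inherited via the reflection map $\CR$ from $\Pi(\R^d_0 \times \X)$, together with the fact that the cardinality functional $\eta \mapsto |\tau(\eta)|$ is measurable on that space. Beyond this the statement is a standard $L^1$ truncation argument on a $\sigma$-finite measure space, carried out so that the truncants live in the prescribed class.
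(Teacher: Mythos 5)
Your proposal is correct and follows essentially the same route as the paper: both approximate $G\in L^1(\rho)$ by truncating simultaneously in the value of the function, the number of points $|\tau(\eta)|$, and the spatial support via an exhaustion of $\X$ by balls, so that the truncations land in $B_\mathrm{bs}(\K_0(\X))$. The only difference is organizational — the paper splits the error into three successive terms and estimates each, while you perform one combined truncation and invoke dominated convergence once, which is if anything slightly cleaner for signed $G$.
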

	\begin{proof}
		Let $G\in L^1(\K_0(\X),\rho)$ for some measure $\rho$ on $\K_0(\X)$. We begin by approximating unbounded functions with bounded support. Define
		\begin{align*}
			G_n(\eta)&:=\left[G(\eta)\mathbbm{1}_{\K_0(B_n)}(\eta)\mathbbm{1}_{\{|\tau(\eta)|\leq n\}}(\eta)\right]\land n
			\\
			G^\prime_n(\eta)&:=\left[G(\eta)\mathbbm{1}_{\K_0(B_n)}(\eta)\mathbbm{1}_{\{|\tau(\eta)|\leq n\}}(\eta)\right],
		\end{align*}
		where $B_n\subset\R^d$ is the ball with radius $n$ centered at $0$. Then $G_n\in B_\mathrm{bs}(\K_0(\X))$ and
		\begin{align*}
			\|G_n(\eta)-G^\prime_n(\eta)\|_{L^1(\rho)}&=\int_{\K_0(\X)}|G_n(\eta)-G^\prime_n(\eta)|\rho(d\eta)
			\\
			&=\int_{\K_0(\X)}|G_n^\prime(\eta)|\mathbbm{1}_{\{|G_n(\eta)|\geq n\}}(\eta)\rho(d\eta)
			\\
			&\leq\int_{\K_0(\X)}|G(\eta)|\mathbbm{1}_{\{|G(\eta)|\geq n\}}(\eta)\rho(d\eta).
		\end{align*}
		Since $G\in L^1(\K_0(\X),\rho)$, the last term converges to $0$ for $n\to\infty$. Next, define
		\begin{displaymath}
			G^{\prime\prime}_n(\eta)=G(\eta)\mathbbm{1}_{\K_0(B_n)}(\eta).
		\end{displaymath}
Recall that $\K_0(\X)$ can be decomposed into disjoint n-point configurations, i.e.,
		\begin{displaymath}
			\K_0(\X)=\bigcup_{m=0}^\infty\K_0^{(m)}(\X),\ \K_0^{(m)}(\X)=\left\{\eta\in\K_0(\X)\colon|\tau(\eta)|=m\right\}.
		\end{displaymath}
Using this decomposition, we get
		\begin{align*}
			\|G^\prime_n(\eta)-G^{\prime\prime}_n(\eta)\|&=\int_{\K_0(\X)}|G^\prime_n(\eta)-G^{\prime\prime}_n(\eta)|\rho(d\eta)
			\\
			&=\sum_{m=n+1}^\infty\int_{\K_0^{(n)}(\X)}|G^{\prime\prime}_n(\eta)|\rho(d\eta)\leq\sum_{m=n+1}^\infty\int_{\K_0^{(n)}(\X)}|G(\eta)|\rho(d\eta)
		\end{align*}
		and since $G\in L^1(\K_0(\X))$, the series is absolutely convergent. Therefore,  the last expression tends to $0$ as $n\to\infty$. For the final step, observe that the increasing sequence $\{\K_0(B_n)\}_{n=1}^\infty$ approximates $\K_0(\X)$, and thus
		\begin{displaymath}
			\|G^{\prime\prime}_n-G\|_{L^1}=\int_{\K_0(\X)}|G(\eta)|\rho(d\eta)-\int_{\K_0(B_n)}|G(\eta)|\rho(d\eta),
		\end{displaymath}
		which converges to $0$ as $n\to\infty$ by the argument provided above.
	\end{proof}
	
	\begin{Corollary}\label{cordense}
		Define the density function:
		\begin{displaymath}
			f(\eta)=\prod_{x\in\tau(\eta)}\frac{1}{|v_x|}.
		\end{displaymath}
		Then the space $\widetilde{B}_\mathrm{bs}(\K_0(\X))$ is dense in $L^1(\K_0(\X),f\rho)$.
	\end{Corollary}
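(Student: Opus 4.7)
The plan is to reduce Corollary \ref{cordense} to Lemma \ref{bsdense} via an isometric identification between $L^1(f\rho)$ and $L^1(\rho)$ provided by pointwise multiplication with $f$. The key observation is that $f(\eta) = \prod_{x\in\tau(\eta)}|v_x|^{-1}$ is strictly positive on $\K_0(\X)$ (since $v_x\in\R_0^d$), so multiplication by $f$ is a pointwise bijection of measurable functions, with inverse given by multiplication by $f^{-1}(\eta) = \prod_{x\in\tau(\eta)}|v_x|$.

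First I would check that the map $G \mapsto Gf$ sends $\widetilde{B}_\mathrm{bs}(\K_0(\X))$ onto $B_\mathrm{bs}(\K_0(\X))$. If $G$ satisfies the defining bound for $\widetilde{B}_\mathrm{bs}$, namely $|G(\eta)| \leq C\mathbbm{1}_{\K_0(\Lambda)}(\eta)\mathbbm{1}_{\{|\tau(\eta)|\leq N\}}(\eta)\prod_{x\in\tau(\eta)}|v_x|$, then multiplying by $f$ cancels the product factor, yielding exactly the defining bound for $B_\mathrm{bs}$. The reverse direction is identical. Next I would note the isometric identity
\begin{equation*}
\|Gf\|_{L^1(\rho)} = \int_{\K_0(\X)}|G(\eta)|f(\eta)\,\rho(d\eta) = \|G\|_{L^1(f\rho)},
\end{equation*}
valid for any measurable $G$, which shows that $G \mapsto Gf$ is an isometric isomorphism from $L^1(\K_0(\X),f\rho)$ onto $L^1(\K_0(\X),\rho)$.

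With these two observations in hand the density claim becomes immediate: given $G \in L^1(f\rho)$, set $H := Gf \in L^1(\rho)$; apply Lemma \ref{bsdense} to obtain $H_n \in B_\mathrm{bs}(\K_0(\X))$ with $H_n \to H$ in $L^1(\rho)$; define $G_n := H_n/f$, which lies in $\widetilde{B}_\mathrm{bs}(\K_0(\X))$ by the first observation; then by the isometry $\|G - G_n\|_{L^1(f\rho)} = \|H - H_n\|_{L^1(\rho)} \to 0$.

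Since everything reduces to an algebraic reweighting, there is no real obstacle; the only thing worth flagging is that the argument presupposes that the underlying $\rho$ is locally finite in the sense of Definition \ref{locsupp}, so that Lemma \ref{bsdense} is applicable, and that both $\rho$ and $f\rho$ are genuine measures on $\K_0(\X)$ — both points being implicit in the setup of this subsection.
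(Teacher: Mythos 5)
Your proposal is correct and follows essentially the same route as the paper's own proof: both multiply by $f$ to pass to $L^1(\rho)$, invoke Lemma \ref{bsdense} to approximate by $B_\mathrm{bs}(\K_0(\X))$ functions, divide by $f$ to land in $\widetilde{B}_\mathrm{bs}(\K_0(\X))$, and conclude via the change-of-measure identity $\|G_n/f - G\|_{L^1(f\rho)} = \|G_n - Gf\|_{L^1(\rho)}$. Your explicit framing of multiplication by $f$ as an isometric isomorphism, and your remark that local finiteness of $\rho$ is needed for Lemma \ref{bsdense} to apply, are welcome clarifications but not substantive departures.
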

	\begin{proof}
		Let $G\in L^1(f\rho)$. By definition, we have
		\begin{displaymath}
			\|G\|_{L^1(f\rho)}=\int_{\K_0(\X)}|G(\eta)|\prod_{x\in\tau(\eta)}\frac{1}{|v_x|}\rho(d\eta)<\infty,
		\end{displaymath}
		which implies that $G\cdot f\in L^1(\rho)$. Since $B_\mathrm{bs}(\K_0(\X))$ is dense in $L^1(\rho)$, there exists a sequence $\{G_n\}_{n=1}^\infty\subset B_\mathrm{bs}(\K_0(\X))$ such that
		\begin{displaymath}
			\|G_n-G\cdot f\|_{L^1(\rho)}\to 0,\ n\to\infty.
		\end{displaymath}
		On the other hand, the sequence $\{\widetilde{G}_n\}_{n=1}^\infty$ is in $\widetilde{B}_\mathrm{bs}(\K_0(\X))$, where $\widetilde{G}_n:=\frac{G_n}{f}$ and satisfies
		\begin{displaymath}
			\widetilde{G}_n(\eta)=\frac{G_n(\eta)}{f(\eta)}\leq C\mathbbm{1}_{\K_0(\Lambda)}(\eta)\mathbbm{1}_{\{|\tau(\eta)|\leq N\}}(\eta)\prod_{x\in\tau(\eta)}v_x.
		\end{displaymath}
Moreover, $\widetilde{G}_n$ converges to $G$ in $L^1(f\rho)$:
		\begin{align*}
			\|\widetilde{G}_n-G\|_{L^1(f\rho)}&=\int_{\K_0(\X)}\left|\frac{G_n}{f}-G\right|fd\rho
			\\
			&=\int_{\K_0(\X)}|G_n-Gf|d\rho=\|G_n-Gf\|_{L^1(\rho)}\to0\ n\to\infty.
		\end{align*}
		This completes the proof.
	\end{proof}
	
	The following is an example of a locally finite measure on $\K_0(\X)$, derived from the Lebesgue-Poisson measure on $\Pi_0(\R^d_0\times \X)$. 
	\begin{Example}\label{exlf}
		Let $\la$ be a locally finite measure on $\R^d_0$ and $m$ a non-atomic measure on $\X$ (e.g. the Lebesgue measure). Define the measure $\CL=\CL_{\la \otimes m}$ as:
		\begin{align*}
			\int_{\K_0(\X)}&F(\eta)\CL (d\eta)=
			\\
			&=F(0)+\sum_{n=1}^\infty\frac{1}{n!}\int_{(\R^d_0\times\R^d)^n}F\left(\sum_{i=1}^n v_i\delta_{x_i}\right)\la(dv_1)\dotso\la(dv_n)m(dx_1)\dots m(dx_n),
		\end{align*}
		where $F\colon\K_0(\X)\to\R$ such that the above expression exists and $0$ denotes the zero measure. Then, $\CL$ is locally finite.
	\end{Example}

	\begin{Proposition}\label{cm_dense}
		Let $\rho\in\CM_\mathrm{lf}(\K_0(\X))$. The space $B_\mathrm{cm}(\K_0(\X))$ is dense in $L^1(\rho)$ as well as $L^1(f\rho)$, where $f$ is the density function from Corollary \ref{cordense}.
	\end{Proposition}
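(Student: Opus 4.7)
The plan is to reduce to the already established density results: Lemma \ref{bsdense} (showing $B_{\mathrm{bs}}(\K_0(\X))$ is dense in $L^1(\rho)$) and Corollary \ref{cordense} (showing $\widetilde{B}_{\mathrm{bs}}(\K_0(\X))$ is dense in $L^1(f\rho)$). Since $B_{\mathrm{cm}}(\K_0(\X))$ differs from $B_{\mathrm{bs}}(\K_0(\X))$ only by the extra requirement that all marks lie in a compact set $I \subset \R^d_0$, it suffices to approximate a function with bounded support by functions with additionally compactly supported marks through a cutoff in the velocity variable.

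Concretely, I choose an exhausting sequence of compact sets in the punctured space, e.g.\ $I_n := \{v \in \R^d_0 \mid 1/n \leq |v| \leq n\}$, so that $I_n \uparrow \R^d_0$. For any $G \in B_{\mathrm{bs}}(\K_0(\X))$ with defining constants $C,\Lambda,N$, I define
\begin{displaymath}
G_n(\eta) := G(\eta)\prod_{x\in\tau(\eta)}\mathbbm{1}_{I_n}(v_x).
\end{displaymath}
Then $G_n \in B_{\mathrm{cm}}(\K_0(\X))$ with mark constraint $I_n$, and
\begin{displaymath}
|G(\eta)-G_n(\eta)| \leq C\,\mathbbm{1}_{\K_0(\Lambda)}(\eta)\,\mathbbm{1}_{\{|\tau(\eta)|\leq N\}}(\eta)\,\mathbbm{1}_{E_n}(\eta),
\end{displaymath}
where $E_n := \{\eta \mid \exists\, x\in\tau(\eta): v_x \notin I_n\}$. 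For any fixed $\eta \in \K_0(\X)$, since $\tau(\eta)$ is finite and each $v_x \in \R^d_0$, one has $\mathbbm{1}_{E_n}(\eta) \to 0$. The dominating function $C\,\mathbbm{1}_{\K_0(\Lambda)}\,\mathbbm{1}_{\{|\tau(\eta)|\leq N\}}$ lies in $L^1(\rho)$ by local finiteness of $\rho$, so dominated convergence gives $\|G-G_n\|_{L^1(\rho)} \to 0$. Combined with Lemma \ref{bsdense}, this proves density in $L^1(\rho)$.

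For $L^1(f\rho)$ I use Corollary \ref{cordense}: start with $G \in \widetilde{B}_{\mathrm{bs}}(\K_0(\X))$ satisfying $|G(\eta)| \leq C\,\mathbbm{1}_{\K_0(\Lambda)}(\eta)\,\mathbbm{1}_{\{|\tau(\eta)|\leq N\}}(\eta)\prod_{x\in\tau(\eta)}|v_x|$, and apply the same cutoff $G_n(\eta) = G(\eta)\prod_{x\in\tau(\eta)}\mathbbm{1}_{I_n}(v_x)$. On $I_n$ we have $|v_x| \leq n$, hence $G_n$ satisfies the bound for $B_{\mathrm{cm}}(\K_0(\X))$ with constant $Cn^N$. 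Moreover,
\begin{displaymath}
|G(\eta)-G_n(\eta)|f(\eta) \leq C\,\mathbbm{1}_{\K_0(\Lambda)}(\eta)\,\mathbbm{1}_{\{|\tau(\eta)|\leq N\}}(\eta)\,\mathbbm{1}_{E_n}(\eta),
\end{displaymath}
because the factor $\prod|v_x|$ in the bound on $|G|$ cancels exactly against $f(\eta)=\prod 1/|v_x|$. The same dominated convergence argument then yields $\|G-G_n\|_{L^1(f\rho)} \to 0$.

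The main subtlety to watch is the choice of exhausting sets $I_n$: they must stay away from the origin so that $I_n \subset \R^d_0$ remains compact, while still covering every $\eta \in \K_0(\X)$ eventually. This is precisely why removing $\{0\}$ at the beginning is harmless here — a configuration $\eta$ has only finitely many marks $v_x$, all strictly away from $0$, so the inner radius $1/n$ does not obstruct pointwise convergence. With this care the two cases are structurally identical and the bookkeeping is routine.
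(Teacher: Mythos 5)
Your proposal is correct and follows essentially the same route as the paper: reduce to Lemma \ref{bsdense} and Corollary \ref{cordense}, then cut off the marks with the annuli $\{v : 1/n \leq |v| \leq n\}$ (the paper writes this as $\mathbbm{1}_{[\frac{1}{n},n]}(|v_x|)$), verify the $B_\mathrm{cm}$ bound (with the same $Cn^N$ constant in the $\widetilde{B}_\mathrm{bs}$ case), and conclude by dominated convergence using that a finite configuration has all its marks inside the annulus for $n$ large. No substantive differences.
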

	\begin{proof}
		By Lemma \ref{bsdense} and Corollary \ref{cordense}, it is enough to show that functions in the spaces $B_\mathrm{bs}(\K_0(\X))$ and $\widetilde{B}_\mathrm{bs}(\K_0(\X))$ can be approximated by functions in $B_\mathrm{cm}(\K_0(\X))$, with the convergence occurring with respect to $L^1(\rho)$ and $L^1(f\rho)$, respectively. Consider $G\in B_\mathrm{bs}(\K_0(\X))$. Define the sequence $\{G_n\}_{n=0}^\infty$ as:
		\begin{displaymath}
			G_n(\eta):=G(\eta)\cdot\prod_{x\in\tau(\eta)}\mathbbm{1}_{[\frac{1}{n},n]}(|v_x|),\ \eta\in\K_0(\X).
		\end{displaymath}
		Then, we have
		\begin{align*}
			|G_n(\eta)|&\leq C\mathbbm{1}_{\K_0(\Lambda)}(\eta)\mathbbm{1}_{\{|\tau(\eta)|\leq N\}}\prod_{x\in\tau(\eta)}\mathbbm{1}_{[\frac{1}{n},n]}(|v_x|),
            \end{align*}
		where $C,\Lambda$ and $N$ are defined as in Definition \ref{locsupp}. This shows that $G_n\in B_\mathrm{cm}(\K_0(\X))$. Next, we  show that $G_n$ approximates $G$ as $n \to \infty$. Specifically,
        \begin{align*}
			\|G_n-G\|_{L^1}&=\int_{\K_0(\X)}|G_n(\eta)-G(\eta)|\rho(d\eta)
			\\
			&=\int_{\K_0(\X)}|G(\eta)|\left|1-\prod_{x\in\tau(\eta)}\mathbbm{1}_{[\frac{1}{n},n]}(|v_x|) \right|\rho(d\eta)
			\\
			&=\int_{\K_0(\X)}|G(\eta)|\rho(d\eta)-\int_{\K_0(\X)}|G(\eta)|\prod_{x\in\tau(\eta)}\mathbbm{1}_{[\frac{1}{n},n]}|v_x|\rho(d\eta).
		\end{align*}
		Since $G\in L^1(\rho)$, by Lebesgue's theorem, it suffices to show that $G_n\to G$ pointwisely. Fix $\eta\in\K_0(\X)$. Since $\tau(\eta)$ is finite, there exists $n_0\in\N$ such that for all $n\geq n_0$, $|v_x|\in[\frac{1}{n},n]$ for every  $x\in\tau(\eta)$ . Therefore,
		\begin{displaymath}
			\prod_{x\in\tau(\eta)}\mathbbm{1}_{[\frac{1}{n},n]}(|v_x|)=1\ \forall n\geq n_0,
		\end{displaymath}
		which implies that $G_n(\eta)=G(\eta)$ for all $n\geq n_0$, i.e. $G_n\to G$ pointwisely. The above arguments establish that $G_n\to G$ in $L^1$, completing the proof.
		
		The proof for the density of  $B_\mathrm{cm}(\K_0(\X))$  in $L^1(f\rho)$ follows the same reasoning. Specifically, the estimate for $G_n$ as above for  $G\in\widetilde{B}_\mathrm{bs}(\K_0(\X))$ is given by
		$$
		|G_n(\eta)|\leq C\mathbbm{1}_{\K_0(\Lambda)}(\eta)\mathbbm{1}_{\{|\tau(\eta)|\leq N\}}\prod_{x\in\tau(\eta)}|v_x| 
		\mathbbm{1}_{[\frac{1}{n},n]}(|v_x|)
		$$
		$$
		\leq Cn^N\mathbbm{1}_{\K_0(\Lambda)}(\eta)\mathbbm{1}_{\{|\tau(\eta)|\leq N\}}\prod_{x\in\tau(\eta)}\mathbbm{1}_{[\frac{1}{n},n]}(|v_x|),
		$$
		which also implies $G_n\in B_\mathrm{cm}(\K_0(\X))$.
	\end{proof}
	
	%Similar to the case of $\Pi_0(\R^d_0\times \X)$, we now want to consider the extension of the $K$-transform to $L^1$-type spaces.
	
\subsection{Correlation Measures on $\K_0(\X)$}

We aim at establishing correlation measures for $\K_0(\X)$ that correspond to probability measures $\mu$ on $(\K(\X),\B(\K(\X)))$. Additionally, we aim to expand the $K$-transform to $L^1$-type spaces for appropriate classes of measures. We will follow a similar approach to that used for $\Pi_0(\mathbb{R}_0^d \times\X)$.

Given the unique structure of the spaces $\K(\X)$ and $\K_0(\X)$, we must consider the characteristics of measures concerning the marks.
	\begin{Definition} \label{pre-kernel}
    The pre-kernel $\mathcal{K}$ derived from the $K$-transform is defined as a map:
		\begin{align*}
			\mathcal{K}&\colon\B_b(\K_0(\X))\times\K(\X)\to[0,\infty),
			\\
		&(A,\eta)\mapsto\mathcal{K}(A,\eta):=(K\mathbbm{1}_A)(\eta).
		\end{align*}
	\end{Definition}
	Similar to the case of $\Pi_0(\mathbb{R}_0^d \times\X)$ it can be shown that $\CK$ is a pre-kernel and the same extension result holds:
	\begin{Lemma}
		The pre-kernel $\mathcal{K}$ can be uniquely extended to a kernel on $\B(\K_0(\X))\times\K(\X)$.
	\end{Lemma}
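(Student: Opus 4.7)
The plan is to follow the template of Lemma \ref{prekernel_ext_pi}: for each fixed $\eta \in \K(\X)$, verify that $\mathcal{K}(\cdot, \eta)$ is a $\sigma$-additive, $\sigma$-finite pre-measure on the ring $\B_b(\K_0(\X))$, and then invoke Carath\'eodory's extension theorem to obtain the unique extension to $\B(\K_0(\X))$.

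First I would check the ring property of $\B_b(\K_0(\X))$: given two bounded sets $A \subset \bigcup_{n=0}^{N_1} \K_0^{(n)}(\Lambda_1)$ and $A' \subset \bigcup_{n=0}^{N_2} \K_0^{(n)}(\Lambda_2)$, both are contained in $\bigcup_{n=0}^{\max(N_1, N_2)} \K_0^{(n)}(\Lambda_1 \cup \Lambda_2)$, so unions and set differences remain bounded. The identity $\mathcal{K}(\emptyset, \eta) = (K\mathbbm{1}_\emptyset)(\eta) = 0$ is immediate. For $\sigma$-additivity, take disjoint $\{A_i\}_{i \in \N} \subset \B_b(\K_0(\X))$ whose union lies in $\B_b(\K_0(\X))$; all $A_i$ then fit inside a common set $\bigcup_{n=0}^N \K_0^{(n)}(\Lambda)$, and applying Tonelli's theorem to the counting measure over finite subsets of $\eta$ gives
$$\mathcal{K}\Bigl(\bigcup_i A_i, \eta\Bigr) = \sum_{\xi \Subset \eta} \sum_{i=1}^\infty \mathbbm{1}_{A_i}(\xi) = \sum_{i=1}^\infty \sum_{\xi \Subset \eta} \mathbbm{1}_{A_i}(\xi) = \sum_{i=1}^\infty \mathcal{K}(A_i, \eta).$$

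For $\sigma$-finiteness, every $A \in \B_b(\K_0(\X))$ is contained in some $\bigcup_{n=0}^N \K_0^{(n)}(\Lambda)$, and there
$$\mathcal{K}(A, \eta) \leq \sum_{k=0}^N \binom{|\tau(\eta) \cap \Lambda|}{k}$$
is finite whenever the spatial projection is locally finite on $\Lambda$; the ring is then exhausted by such sets under a compact exhaustion $\Lambda_k \uparrow \X$. Carath\'eodory's theorem supplies the unique extension to $\B(\K_0(\X))$, and measurability of $\eta \mapsto \mathcal{K}(A, \eta)$ is promoted from $\B_b$ to $\B(\K_0(\X))$ via the standard monotone-class argument.

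The principal obstacle, absent in Lemma \ref{prekernel_ext_pi}, is that the compact $\Lambda$ in the definition of $\B_b(\K_0(\X))$ lies in $\X$ rather than in $\R_0^d \times \X$, so $\tau(\eta) \cap \Lambda$ need not be finite even though $V_\Lambda(\eta) < \infty$. The cleanest workaround is to treat the kernel as $[0, \infty]$-valued, which is enough for Carath\'eodory since $\sigma$-finiteness still holds on the exhaustion $\Lambda_k \uparrow \X$; alternatively, one reduces to the pre-kernel on $\B_\mathrm{cm}(\K_0(\X))$, where both space and marks are compact, which corresponds via $\CR$ to $\B_b(\Pi_0(\R_0^d \times \X))$, so that the finite-valued assertion transfers directly from Lemma \ref{prekernel_ext_pi} using Proposition \ref{ktrf_rel_pi_k}.
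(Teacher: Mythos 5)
Your proposal is correct and follows the same route as the paper, which simply says the proof is ``similar to the proof of Lemma \ref{prekernel_ext_pi}'': one checks that $\B_b(\K_0(\X))$ is a ring, that $\CK(\cdot,\eta)$ is a $\sigma$-additive pre-measure on it, and that $\sigma$-finiteness holds, then applies Carath\'eodory. The one place where you genuinely go beyond the paper is in flagging that the verbatim transfer breaks down at the finiteness step: for $A\in\B_b(\K_0(\X))$ the bounding set $\Lambda$ is compact in $\X$ only, and since $\tau(\eta)$ for $\eta\in\K(\X)$ is merely discrete rather than locally finite (the paper itself notes this after introducing the reflection map), the sum $\CK(A,\eta)=\sum_{\xi\Subset\eta}\mathbbm{1}_A(\xi)$ can be infinite even for bounded $A$ --- the condition $V_\Lambda(\eta)<\infty$ permits infinitely many points of $\tau(\eta)$ in $\Lambda$ with small marks. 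Your repair is the right one: either admit a $[0,\infty]$-valued pre-measure and recover $\sigma$-finiteness by exhausting $\K_0(\X)$ with sets whose marks satisfy $|v_x|\in[\tfrac1j,j]$ (on which only finitely many points of $\tau(\eta)\cap\Lambda$ can contribute, precisely because $V_\Lambda(\eta)<\infty$), or work on $\B_\mathrm{cm}(\K_0(\X))$ and pull the finite-valued statement back from Lemma \ref{prekernel_ext_pi} via Proposition \ref{ktrf_rel_pi_k}. This is consistent with the paper's own later emphasis that the natural correlation measures on $\K_0(\X)$ are mark-locally finite rather than locally finite, and it makes explicit a point the paper's one-line proof leaves unaddressed.
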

	\begin{proof}
		Similar to the proof of Lemma \ref{prekernel_ext_pi}
	\end{proof}
	\begin{Proposition} \label{pre-kernel}
		Let $G\colon\K_0(\X)\to\R$ be a measurable function with $G\geq 0$ or $G\in B_\mathrm{ls}(\K_0(\X))$. Then
		\begin{displaymath}
			\int_{\K_0(\X)}G(\xi)\mathcal{K}(d\xi,\eta)=\sum_{\xi\Subset\eta}G(\xi)=(KG)(\eta).
		\end{displaymath}
	\end{Proposition}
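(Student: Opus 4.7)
The approach mirrors the proof of Proposition \ref{prop_2.52} for the $\Pi_0(\R^d_0\times\X)$ case, proceeding by approximation with simple functions together with monotone convergence. First, I would note that for any $A\in\B_b(\K_0(\X))$, the very definition of integration against a measure together with the definition of $\mathcal{K}$ (Definition \ref{pre-kernel}) gives
$$
\int_{\K_0(\X)}\mathbbm{1}_A(\xi)\,\mathcal{K}(d\xi,\eta) \;=\; \mathcal{K}(A,\eta) \;=\; (K\mathbbm{1}_A)(\eta) \;=\; \sum_{\xi\Subset\eta}\mathbbm{1}_A(\xi),
$$
so the identity holds on indicators. By linearity of both sides it extends to simple functions of the form $G=\sum_{k=1}^n a_k\mathbbm{1}_{A_k}$ with $A_k\in\B_b(\K_0(\X))$.

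For an arbitrary measurable $G\geq 0$, I would approximate $G$ from below by an increasing sequence $G_n\uparrow G$ of simple functions of the above form. The standard truncation construction works after restricting to the bounded set $\bigcup_{j=0}^{n}\K_0^{(j)}(B_n)\in\B_b(\K_0(\X))$, where $B_n\subset\X$ denotes the ball of radius $n$; this exhausts $\K_0(\X)$ by the decomposition $\K_0(\X)=\bigcup_{\Lambda\in\B_c(\X)}\K_0(\Lambda)=\bigsqcup_{n=0}^\infty\K_0^{(n)}(\X)$. Applying the monotone convergence theorem on the left (the pre-kernel $\mathcal{K}(\cdot,\eta)$ being a positive measure after extension via the preceding lemma) and monotone convergence term-by-term in the $\xi$-sum on the right, both sides converge to the desired common value $\sum_{\xi\Subset\eta}G(\xi)$.

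For $G\in B_\mathrm{ls}(\K_0(\X))$, the lemma immediately preceding this proposition supplies the pointwise bound $|(KG)(\eta)|\leq C\prod_{x\in\tau(\eta)\cap\Lambda}(1+|v_x|)<\infty$ for every $\eta\in\K(\X)$, using the defining condition $\sum_{x\in\tau(\eta)\cap\Lambda}|v_x|<\infty$. Hence $\sum_{\xi\Subset\eta}|G(\xi)|<\infty$, and writing $G=G^+-G^-$ and applying the non-negative case to each of the two parts, subtraction of two finite numbers yields the identity.

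The main obstacle is purely bookkeeping: ensuring the approximating simple functions take constant values on sets lying in $\B_b(\K_0(\X))$ so that the pre-kernel assigns a finite value to each indicator used in the approximation. This is handled by the exhaustion just described. No feature specific to the vector-valued (marked) setting enters in an essential way; once the correct boundedness class for $\K$ has been identified, the argument carries over from \cite{MR1914839, MR0323270} essentially verbatim.
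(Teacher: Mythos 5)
Your proposal is correct and follows essentially the same route as the paper, whose proof simply refers back to Proposition \ref{prop_2.52} (approximation by simple functions supported on bounded sets, then monotone limits). The additional details you supply --- the exhaustion by $\bigcup_{j=0}^{n}\K_0^{(j)}(B_n)$ and the use of the preceding lemma's bound to justify the signed case for $G\in B_\mathrm{ls}(\K_0(\X))$ --- are exactly the bookkeeping the paper leaves implicit.
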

	\begin{proof}
		Similar to the proof of Proposition \ref{prop_2.52}
	\end{proof}
	Proposition \ref{pre-kernel} shows how $K$-transform relates to the pre-kernel $\CK$ for more general functions.
 
	Now we can use the kernel $\CK$  to construct measures on $(\K_0(\X),\B(\K_0(\X)))$ corresponding to the probability measures on $(\K(\X),\B(\K(\X)))$ .
	\begin{Definition}
		Let $\mu$ be a probability measure on the space $(\K(\X),\B(\K(\X)))$. The corresponding correlation measure is defined as a measure on $(\K_0(\X),\B(\K_0(\X)))$ by the relation:
		\begin{displaymath}
			\rho_\mu(A):=\int_{\K(\X)}\mathcal{K}(A,\eta)\mu(d\eta).
		\end{displaymath}
	\end{Definition}
 In the preceding section, we focused on the group of locally finite correlation measures $\rho_\mu$ on $(\K_0(\X)$. Although it is common for such correlation measures to be mark-locally finite rather than locally finite in practical applications, studying the class of locally finite measures remains meaningful.
	\begin{Proposition}\label{finite_moments}
		Let $\mu$ be a probability measure on $(\K(\X),\B(\K(\X)))$. Then the corresponding correlation measure $\rho_\mu$ is locally finite if and only if the following holds: For any $\Lambda\in\B_c(\X)$ and $N\in\N$,
		\begin{displaymath}
			\int_{\K(\X)}|\tau(\eta)\cap\Lambda|^N\mu(d\eta)<\infty.
		\end{displaymath}
		A measure $\mu$ with the above property is said to have finite local moments of all order. The space of all such measures is denoted by $\mathcal{M}^1_\mathrm{fm}(\K(\X))$.
	\end{Proposition}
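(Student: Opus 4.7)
The plan is to mimic the classical configuration-space argument alluded to in the proof of Proposition \ref{finite_moments_pi}, exploiting the combinatorial structure of the pre-kernel $\CK$ together with the classical passage between factorial and ordinary moments.

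First I would reduce local finiteness of $\rho_\mu$ to a countable family of conditions. By the definition of a locally finite measure on $\K_0(\X)$ (Definition \ref{locsupp}), $\rho_\mu \in \CM_\mathrm{lf}(\K_0(\X))$ is equivalent to $\rho_\mu(\K_0^{(m)}(\Lambda))<\infty$ for every $m\in\N_0$ and every $\Lambda\in\B_c(\X)$. Applying Proposition \ref{pre-kernel} with $G=\mathbbm{1}_{\K_0^{(m)}(\Lambda)}$ and Tonelli, I get
$$\rho_\mu(\K_0^{(m)}(\Lambda))=\int_{\K(\X)}\sum_{\xi\Subset\eta}\mathbbm{1}_{\K_0^{(m)}(\Lambda)}(\xi)\,\mu(d\eta).$$
Each finite sub-measure $\xi\Subset\eta$ with $|\tau(\xi)|=m$ and $\tau(\xi)\subset\Lambda$ is uniquely determined by the choice of an $m$-element subset of $\tau(\eta)\cap\Lambda$, with velocities inherited from $\eta$; hence the inner sum equals $\binom{|\tau(\eta)\cap\Lambda|}{m}$, yielding the key identity
$$\rho_\mu(\K_0^{(m)}(\Lambda))=\int_{\K(\X)}\binom{|\tau(\eta)\cap\Lambda|}{m}\mu(d\eta).$$

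For the implication $(\Leftarrow)$, the trivial bound $\binom{n}{m}\leq n^{m}/m!$ combined with the assumed finiteness of $\int_{\K(\X)}|\tau(\eta)\cap\Lambda|^{m}\mu(d\eta)$ gives $\rho_\mu(\K_0^{(m)}(\Lambda))<\infty$ for every admissible $m$ and $\Lambda$. For the implication $(\Rightarrow)$, I would invert the relation between falling factorials and monomials: there exist nonnegative integers $c_{N,k}$ (the Stirling numbers of the second kind multiplied by $k!$) such that
$$n^{N}=\sum_{k=0}^{N}c_{N,k}\binom{n}{k},\quad n\in\N_0.$$
Integrating in $\mu$ and using the previous identity gives
$$\int_{\K(\X)}|\tau(\eta)\cap\Lambda|^{N}\mu(d\eta)=\sum_{k=0}^{N}c_{N,k}\,\rho_\mu(\K_0^{(k)}(\Lambda)),$$
which is finite when $\rho_\mu$ is locally finite, completing the equivalence.

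The main technical point, and essentially the only nontrivial step, is the combinatorial identity $\sum_{\xi\Subset\eta}\mathbbm{1}_{\K_0^{(m)}(\Lambda)}(\xi)=\binom{|\tau(\eta)\cap\Lambda|}{m}$. Here one must recall that the notion $\xi\Subset\eta$ on $\K(\X)$ is transported from $\Pi(\R_0^d\times\X)$ via the reflection mapping $\CR$, so every finite sub-measure of $\eta$ is uniquely encoded by a finite subset of $\tau(\eta)$ together with the matching velocities. Once this bijection is established, the remainder of the argument is a routine combination of Tonelli's theorem and the classical monomial-to-falling-factorial expansion, which is why the corresponding statement on $\Pi(\R_0^d\times\X)$ was said to reduce to the classical configuration-space case.
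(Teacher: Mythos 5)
Your proposal is correct and follows exactly the route the paper intends: the paper's own proof is only a deferral to Proposition \ref{finite_moments_pi} and thence to the classical configuration-space argument of Kondratiev--Kuna, which is precisely the identity $\rho_\mu(\K_0^{(m)}(\Lambda))=\int_{\K(\X)}\binom{|\tau(\eta)\cap\Lambda|}{m}\mu(d\eta)$ combined with the two-way comparison between binomial coefficients and monomials that you spell out. Your explicit verification that finite sub-measures $\xi\Subset\eta$ supported in $\Lambda$ correspond bijectively to $m$-element subsets of $\tau(\eta)\cap\Lambda$ is the right (and only) point where the cone structure enters.
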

	
	\begin{proof}[Proof of Proposition \ref{finite_moments}] The proof works analogously to the one of Proposition \ref{finite_moments_pi}.
	\end{proof}
	
	We introduce an example of measures on $\K_0(\X)$, which are not locally finite but at least mark-locally finite. We examine a certain type of measures mentioned above. For a measure $\rho$ on $\K_0(\X)$, set
	\begin{displaymath}
		\widetilde{\rho}(d\eta):=f(\eta)\rho(d\eta),
	\end{displaymath}
	where $f\colon\K_0(\X)\to (0,\infty)$ is the density function defined as:
	\begin{displaymath}
		f(\eta)=\prod_{x\in\tau(\eta)}\frac{1}{|v_x|}.
	\end{displaymath}
	\begin{Lemma}
		Let $\rho$ be a locally finite measure. Then $\widetilde{\rho}$ is mark-locally finite.
	\end{Lemma}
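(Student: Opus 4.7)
The plan is to unpack what mark-local finiteness of $\widetilde{\rho}$ requires and then show that on any set $A \in \B_{\mathrm{cm}}(\K_0(\X))$ the density $f$ is uniformly bounded, reducing the claim to the local finiteness of $\rho$ on the bounded set $A$.

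More concretely, fix an arbitrary $A \in \B_{\mathrm{cm}}(\K_0(\X))$. By definition of $\B_{\mathrm{cm}}(\K_0(\X))$, there exist a compact $\Lambda \subset \X$, an integer $N \in \N$, and a compact $I \subset \R^d_0$ such that every $\eta \in A$ satisfies $\tau(\eta) \subset \Lambda$, $|\tau(\eta)| \leq N$, and $v_x \in I$ for all $x \in \tau(\eta)$. The decisive observation is that $I$ is compact in $\R^d_0 = \R^d \setminus \{0\}$, so $0 \notin I$ and hence
\begin{displaymath}
c := \inf_{v \in I} |v| > 0.
\end{displaymath}

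Using this bound, I would estimate the density $f$ pointwise on $A$. For any $\eta \in A$, since each $|v_x| \geq c$ and there are at most $N$ points in $\tau(\eta)$,
\begin{displaymath}
f(\eta) = \prod_{x \in \tau(\eta)} \frac{1}{|v_x|} \leq \frac{1}{c^{|\tau(\eta)|}} \leq \max(1, c^{-N}) =: M < \infty.
\end{displaymath}
Therefore
\begin{displaymath}
\widetilde{\rho}(A) = \int_A f(\eta)\,\rho(d\eta) \leq M \cdot \rho(A).
\end{displaymath}

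Since $A \in \B_{\mathrm{cm}}(\K_0(\X)) \subset \B_b(\K_0(\X))$ and $\rho$ is locally finite by assumption, $\rho(A) < \infty$, giving $\widetilde{\rho}(A) < \infty$ as required. There is no real obstacle here; the only substantive point is recognizing that compactness of the mark set $I$ in the \emph{pointed} space $\R^d_0$ controls the singularity of $f$ at the origin, and it is precisely this that distinguishes the mark-locally finite notion from the locally finite one (for which no uniform lower bound on the marks would be available).
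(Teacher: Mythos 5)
Your proof is correct and follows essentially the same route as the paper's: both extract a uniform lower bound $a>0$ on the marks from the definition of $\B_{\mathrm{cm}}(\K_0(\X))$, bound the density by $\max(1,a^{-N})$, and reduce to the local finiteness of $\rho$ on the bounded set $A$. Your explicit derivation of the lower bound from the compactness of $I$ in the pointed space $\R^d_0$ is a slightly more careful justification of a step the paper merely asserts.
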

	\begin{proof}
		Let $A\in\B_\mathrm{cm}(\K_0(\X))$. Then $A\in\B_b(\K_0(\X))$. Furthermore, there exists $a>0$ such that for all $\eta\in A$, we have $|v_x|\geq a$ for all $x\in\tau(\eta)$. Then
		\begin{align*}
			\widetilde{\rho}(A)&=\int_{\K_0(\X)}\mathbbm{1}_A(\eta)\widetilde{\rho}(d\eta)=\int_{\K_0(\X)}\mathbbm{1}_A(\eta)f(\eta)\rho(d\eta)
			\\
			&\leq\int_{\K_0(\X)}\mathbbm{1}_A(\eta)\max\left(1,\frac{1}{a^N}\right)\rho(d\eta)=\max\left(1,\frac{1}{a^N}\right)\rho(A)<\infty.
		\end{align*}
	\end{proof}
	
	\subsection{Correlation Functions on $\K_0(\X)$}
    As noted earlier, correlation measures in applications are typically mark-locally finite but not locally finite. Our objective is to establish a density function specifically for this class of mark-locally finite correlation measures on \(\K_0(\X)\). Consequently, our analysis will concentrate on this particular class of measures.
	%As previously noted, correlation measures in applications are typically mark-locally finite but not locally finite. We focus on demonstrating a density function for this particular class of mark-locally finite correlation measures on $\K_0(\X)$. Therefore, our attention will be devoted to this class of measures.
    
	\begin{Definition}
		Let $\mu$ be a probability measure on $(\K(\X),\B(\K(\X)))$.
		\begin{enumerate}
			\item Let $\Lambda\subset\R^d_0\times \X$. For $\eta\in\K(\X)$ of the form $\eta=\sum_{x\in\tau(\eta)}v_x\delta_x$, define the projection with marks as:
			\begin{displaymath}
				p_\Lambda(\eta)=\sum_{\substack{x\in\tau(\eta)\\(v_x,x)\in\Lambda}}v_x\delta_x.
			\end{displaymath}
			The projection measure is defined as:
			\begin{displaymath}
				\mu^\Lambda:=\mu\circ p_\Lambda^{-1}.
			\end{displaymath}
			\item The measure $\mu$ is called mark-locally absolutely continuous with respect to the  measure $\mu_\la$ if for any $\Lambda\subset\R^d_0\times \X$ compact, the measure $\mu^\Lambda$ is absolutely continuous with respect to $\mu_\la^\Lambda$.
		\end{enumerate}
	\end{Definition}
	We will use a pullback correlation measure on the Plato space to examine mark-locally finite correlation measures on $\K_0(\X)$, and we will use pullback correlation measure on the Plato space.
	\begin{Proposition}\label{loc_finite_pi_k}
		A correlation measure $\rho_\mu$ on $\K_0(\X)$ corresponding to a probability measure $\mu$ on $\K(\X)$ is mark-locally finite if and only if the measure $\rho_{\mu_{\CR^{-1}}}$ on $\Pi_0(\R^d_0\times \X)$ is locally finite.
	\end{Proposition}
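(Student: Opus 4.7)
The plan is to reduce the equivalence to a single identity linking the two correlation measures through the reflection mapping, and then invoke the bijection between $\B_b(\Pi_0(\R^d_0\times\X))$ and $\B_\mathrm{cm}(\K_0(\X))$ established earlier.

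First I would set up notation carefully. The measure $\mu_{\CR^{-1}}$ on $\Pi(\R^d_0\times\X)$ is the pushforward of $\mu$ under $\CR^{-1}$; this is well-defined as a probability measure thanks to the $\sigma$-algebra coincidence proved earlier (the theorem on image $\sigma$-algebras). Next I would fix an arbitrary $A\in\B_b(\Pi_0(\R^d_0\times\X))$ and observe that, by the proposition identifying $\CR B_\mathrm{bs}(\Pi_0(\R^d_0\times\X))$ with $B_\mathrm{cm}(\K_0(\X))$, we have $\mathbbm{1}_{\CR A}\in B_\mathrm{cm}(\K_0(\X))\subset B_\mathrm{ls}(\K_0(\X))\cap\CR B_\mathrm{ls}(\Pi_0(\R^d_0\times\X))$, so that Proposition \ref{ktrf_rel_pi_k} applies and yields
\begin{displaymath}
(K_\K\mathbbm{1}_{\CR A})(\eta)=(K_\Pi\mathbbm{1}_A)(\CR^{-1}\eta),\qquad\eta\in\K(\X).
\end{displaymath}

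Then I would compute, using the definition of the correlation measure and the change of variables formula for pushforwards,
\begin{displaymath}
\rho_\mu(\CR A)=\int_{\K(\X)}(K_\K\mathbbm{1}_{\CR A})(\eta)\,\mu(d\eta)=\int_{\K(\X)}(K_\Pi\mathbbm{1}_A)(\CR^{-1}\eta)\,\mu(d\eta)=\int_{\Pi(\R^d_0\times\X)}(K_\Pi\mathbbm{1}_A)(\gamma)\,\mu_{\CR^{-1}}(d\gamma)=\rho_{\mu_{\CR^{-1}}}(A).
\end{displaymath}
Thus $\rho_\mu\circ\CR$ coincides with $\rho_{\mu_{\CR^{-1}}}$ on $\B_b(\Pi_0(\R^d_0\times\X))$.

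To conclude, I would invoke the bijection $\CR\colon\B_b(\Pi_0(\R^d_0\times\X))\to\B_\mathrm{cm}(\K_0(\X))$ from the earlier proposition: mark-local finiteness of $\rho_\mu$ is by definition the finiteness of $\rho_\mu(A')$ for every $A'\in\B_\mathrm{cm}(\K_0(\X))$, while local finiteness of $\rho_{\mu_{\CR^{-1}}}$ is the finiteness on every $A\in\B_b(\Pi_0(\R^d_0\times\X))$; the displayed identity $\rho_\mu(\CR A)=\rho_{\mu_{\CR^{-1}}}(A)$ makes these two conditions literally equivalent, which gives both directions at once. The main technical point — and essentially the only non-routine one — is verifying that Proposition \ref{ktrf_rel_pi_k} can be applied to $\mathbbm{1}_{\CR A}$; once one notices that $A\in\B_b$ gives $\CR A\in\B_\mathrm{cm}$ and therefore places $\mathbbm{1}_{\CR A}$ in the required intersection of function classes, the rest is a direct change of variables.
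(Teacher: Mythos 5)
Your proof is correct and follows essentially the same route as the paper's: both reduce the statement to the identity $\rho_\mu(\CR A)=\rho_{\mu_{\CR^{-1}}}(A)$ via Proposition \ref{ktrf_rel_pi_k} and the change of variables for the pushforward, and then use the correspondence between $\B_b(\Pi_0(\R^d_0\times\X))$ and $\B_\mathrm{cm}(\K_0(\X))$ under $\CR$. If anything, you are more explicit than the paper about why $\mathbbm{1}_{\CR A}$ lies in the function class needed for Proposition \ref{ktrf_rel_pi_k} to apply, which is a welcome addition.
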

	\begin{proof}
		Let $\rho_\mu$ be the correlation measure of a measure $\mu$ on $\K(\X)$. Then for a set $A\in\B_\mathrm{cm}(\K_0)$,
		\begin{align*}
			\rho_\mu(A)&=\int_{\K(\X)}\CK(A,\eta)\mu(d\eta)=\int_{\K(\X)}(K_\K\mathbbm{1}_A)(\eta)\mu(d\eta)
			\\
			&=\int_{\K(\X)}\left[K_\Pi(\CR^{-1}\mathbbm{1}_A)\right](\CR^{-1}\eta)\mu(d\eta)=\int_{\K(\X)}\left[K_\Pi\mathbbm{1}_{\CR^{-1}A}\right](\CR^{-1}\eta)\mu(d\eta)
			\\
			&=\int_{\Pi(\R^d_0\times \X)}(K_\Pi\mathbbm{1}_{\CR^{-1}A})(\gamma)\mu_{\CR^{-1}}(d\gamma)
			\\
			&=\rho_{\mu_{\CR^{-1}}}(\CR^{-1}A),
		\end{align*}
		where $\mu_{\CR^{-1}}$ is the pullback measure of $\mu$ under $\CR$. Reversing the calculations yields the converse result.
	\end{proof}
    Mark-local absolute continuity allows us to compare measures on $\K(\X)$ and $\Pi(\R^d_0\times \X)$. The following Lemma will be used to compare measures on $\K(\X)$ and $\Pi(\R^d_0\times \X)$.
	\begin{Lemma}\label{loc_ac_pi_k}
		Let $\mu$ as a probability measure on $\K(\X)$ be mark-locally absolutely continuous concerning $\mu_\la$. Then $\mu_{\CR^{-1}}$ on $\Pi(\R^d_0\times \X)$ is locally absolutely continuous with respect to $\pi_\sigma$.
	\end{Lemma}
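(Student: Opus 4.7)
The plan is to exploit the intertwining relation $\CR\circ p_\Lambda=p_\Lambda\circ\CR$ between the reflection map and the two projection operators. Mark-local absolute continuity on $\K(\X)$ then transfers to local absolute continuity on $\Pi(\R^d_0\times\X)$ through a short null-set argument.

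First, I would fix $\Lambda\in\B_c(\R^d_0\times\X)$ and verify the commutation. On $\Pi(\R^d_0\times\X)$ the projection is $p_\Lambda(\gamma)=\gamma\cap\Lambda$, while on $\K(\X)$ it is $p_\Lambda(\eta)=\sum_{(v_x,x)\in\Lambda}v_x\delta_x$, as in the Definition above. Comparing these formulas termwise gives $\CR(p_\Lambda\gamma)=p_\Lambda(\CR\gamma)$ for every $\gamma\in\Pi(\R^d_0\times\X)$. Together with the defining relations $\mu_\la=\CR_\ast\pi_\sigma$ and $\mu=\CR_\ast\mu_{\CR^{-1}}$, this commutation yields the change-of-variables identities
\begin{displaymath}
\mu^\Lambda=\CR_\ast(\mu_{\CR^{-1}})^\Lambda,\qquad \mu_\la^\Lambda=\CR_\ast\pi_\sigma^\Lambda,
\end{displaymath}
or, spelled out on measurable sets $A$ in the range of $p_\Lambda$ on $\Pi_0$,
\begin{displaymath}
(\mu_{\CR^{-1}})^\Lambda(A)=\mu^\Lambda(\CR A),\qquad \pi_\sigma^\Lambda(A)=\mu_\la^\Lambda(\CR A).
\end{displaymath}

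Next, given a measurable $A$ with $\pi_\sigma^\Lambda(A)=0$, the second identity gives $\mu_\la^\Lambda(\CR A)=0$; the hypothesis $\mu^\Lambda\ll\mu_\la^\Lambda$ (mark-local absolute continuity restricted to this $\Lambda$) forces $\mu^\Lambda(\CR A)=0$; and the first identity returns $(\mu_{\CR^{-1}})^\Lambda(A)=0$. Hence $(\mu_{\CR^{-1}})^\Lambda\ll\pi_\sigma^\Lambda$, and since $\Lambda$ was arbitrary this is exactly the local absolute continuity claimed.

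The only delicate point is ensuring that $\CR A$ is measurable whenever $A$ is, so that the two displayed identities are genuinely meaningful. This is furnished by the earlier Theorem identifying the image $\sigma$-algebra of $\B(\Pi(\R^d_0\times\X))$ under $\CR$ with $\B(\K(\X))$, combined with the Proposition asserting $\CR\Pi_0(\R^d_0\times\Lambda)=\K_0(\Lambda)$; together they make $\CR$ a bimeasurable bijection on the relevant projected spaces. All remaining work is bookkeeping of the pushforward/pullback identities, so no substantive analytic obstacle is expected.
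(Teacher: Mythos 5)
Your argument is correct. Note that the paper states this lemma without any proof, so there is nothing to compare it against; your write-up in fact supplies the missing argument. The route you take --- the intertwining $\CR\circ p_\Lambda=p_\Lambda\circ\CR$, the resulting pushforward identities $\mu^\Lambda=\CR_*\bigl((\mu_{\CR^{-1}})^\Lambda\bigr)$ and $\mu_\la^\Lambda=\CR_*\pi_\sigma^\Lambda$, and the null-set transfer using bijectivity of $\CR$ on pinpointed configurations --- is exactly the mechanism the paper uses implicitly in the proof of the neighbouring Proposition \ref{loc_finite_pi_k}, so it is consistent with the authors' intended approach. The only point worth adding explicitly is the small mismatch between the two definitions: mark-local absolute continuity on $\K(\X)$ is required for compact $\Lambda\subset\R^d_0\times\X$, whereas local absolute continuity on $\Pi(\R^d_0\times\X)$ is required for all $\Lambda\in\B_c(\R^d_0\times\X)$, i.e.\ Borel sets with compact closure; this gap is closed by observing that $p_\Lambda=p_\Lambda\circ p_{\overline{\Lambda}}$ for $\Lambda\subset\overline{\Lambda}$, so absolute continuity of the projections onto $\overline{\Lambda}$ implies that of the projections onto $\Lambda$. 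With that remark included, the proof is complete.
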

	\begin{Proposition}\label{density_from_k_pi}
 
		Suppose $\mu$ is a probability measure on $\K(\X)$ that satisfies the following conditions: it is mark-locally absolutely continuous concerning $\mu_\la$, and its associated correlation measure $\rho_\mu$ is mark-locally finite. Then, we can conclude that $\mu_{\CR^{-1}}\in\CM^1_\mathrm{fm}(\Pi(\R^d_0\times \X))$, and it is locally absolutely continuous with respect to $\pi_\sigma$. Additionally, we can observe that $\rho_{\mu_{\CR^{-1}}}$ is absolutely continuous with respect to $\CL_\sigma$ and its correlation function is well-defined.
	\end{Proposition}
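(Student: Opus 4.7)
The plan is to reduce the statement entirely to the analogous theory already developed on the Plato space $\Pi(\R^d_0\times\X)$, using the reflection map $\CR$ as the bridge. Concretely, I want to assemble three ingredients in order: (i) local absolute continuity of the pullback $\mu_{\CR^{-1}}$ with respect to $\pi_\sigma$; (ii) membership $\mu_{\CR^{-1}}\in\CM^1_\mathrm{fm}(\Pi(\R^d_0\times\X))$; and (iii) absolute continuity of $\rho_{\mu_{\CR^{-1}}}$ with respect to $\CL_\sigma$, which yields the correlation function.

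First, step (i) is delivered by Lemma \ref{loc_ac_pi_k}: the hypothesis of mark-local absolute continuity of $\mu$ with respect to $\mu_\la$ means that for every compact $\Lambda\subset\R^d_0\times\X$ the projection $\mu^\Lambda=\mu\circ p_\Lambda^{-1}$ is absolutely continuous with respect to $\mu_\la^\Lambda$. Since $\mu_\la$ is by construction the pushforward of $\pi_\sigma$ under $\CR$, translating by $\CR^{-1}$ turns these $\K$-projections into the corresponding projections on $\Pi(\R^d_0\times\X)$, giving local absolute continuity of $\mu_{\CR^{-1}}$ with respect to $\pi_\sigma$.

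For step (ii), I would invoke Proposition \ref{loc_finite_pi_k}, which states that $\rho_\mu$ is mark-locally finite on $\K_0(\X)$ if and only if $\rho_{\mu_{\CR^{-1}}}$ is locally finite on $\Pi_0(\R^d_0\times\X)$. Since we assume the former, the latter follows. Then Proposition \ref{finite_moments_pi} equates local finiteness of $\rho_{\mu_{\CR^{-1}}}$ with the finite-local-moments condition defining $\CM^1_\mathrm{fm}(\Pi(\R^d_0\times\X))$, so $\mu_{\CR^{-1}}$ lies in that class. With (i) and (ii) in hand, Proposition \ref{density_on_pi} applies directly to $\mu_{\CR^{-1}}$: it produces an $\CL_\sigma$-density
\[
k_{\mu_{\CR^{-1}}}(\gamma)=\frac{d\rho_{\mu_{\CR^{-1}}}}{d\CL_\sigma}(\gamma)=\int_{\Pi(\Lambda)}\frac{d\mu_{\CR^{-1}}^\Lambda}{d\pi_\sigma^\Lambda}(\gamma\cup\xi)\,\pi_\sigma^\Lambda(d\xi),
\]
which is precisely the correlation function in the sense of Definition \ref{def_corrfn_pi}.

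The main obstacle I anticipate is justifying step (i) rigorously, i.e., verifying that the reflection $\CR$ intertwines the projection operators on the two sides: that $p_\Lambda^\Pi\circ\CR^{-1}$ agrees with $\CR^{-1}\circ p_\Lambda^\K$ on the relevant bounded sets, so that Radon--Nikodym derivatives transport cleanly between the $\K$- and $\Pi$-pictures. This is exactly the point where the definition of \emph{mark}-local absolute continuity on $\K(\X)$, with $\Lambda$ taken inside the enlarged phase space $\R^d_0\times\X$ rather than only $\X$, matches the ordinary local notion on $\Pi(\R^d_0\times\X)$ under the bijection $\CR\colon\Pi_0\to\K_0$ from Proposition given earlier. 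Once this compatibility is recorded, the rest of the argument is a direct concatenation of Lemma \ref{loc_ac_pi_k}, Proposition \ref{loc_finite_pi_k}, Proposition \ref{finite_moments_pi}, and Proposition \ref{density_on_pi}.
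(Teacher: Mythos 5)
Your proposal is correct and follows essentially the same route as the paper, which likewise deduces the statement by concatenating Lemma \ref{loc_ac_pi_k}, Proposition \ref{loc_finite_pi_k} (together with Proposition \ref{finite_moments_pi}), and Proposition \ref{density_on_pi} via the reflection map. Your additional remarks on the compatibility of the projections $p_\Lambda$ under $\CR$ simply make explicit what the paper leaves implicit in its citation of Lemma \ref{loc_ac_pi_k}.
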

	\begin{proof}
		The proof follows from Proposition \ref{loc_finite_pi_k}, Lemma \ref{loc_ac_pi_k} and Proposition \ref{density_on_pi}.
	\end{proof}
	The results obtained above imply the existence of a density function.
	\begin{Theorem}
		Assume the conditions of Proposition \ref{density_from_k_pi}. Then the correlation function of $\mu$ exists, i.e. a function $k_\mu\colon\K_0(\X)\to\R$ such that $k_\mu$ is the density function of $\rho_\mu$ with respect to the $\CL_\sigma$.
	\end{Theorem}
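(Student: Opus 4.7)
The plan is to transfer the existence of a correlation function from the Plato space side to the cone side by means of the reflection mapping $\CR$. By Proposition \ref{density_from_k_pi}, the pullback measure $\mu_{\CR^{-1}}$ lies in $\CM^1_{\mathrm{fm}}(\Pi(\R^d_0\times\X))$ and is locally absolutely continuous with respect to $\pi_\sigma$, so Proposition \ref{density_on_pi} already produces a well-defined correlation function
\[
k_{\mu_{\CR^{-1}}}(\gamma)=\frac{d\rho_{\mu_{\CR^{-1}}}}{d\CL_\sigma}(\gamma),\qquad \gamma\in\Pi_0(\R^d_0\times\X).
\]
The natural candidate on the cone is then
\[
k_\mu(\eta):=k_{\mu_{\CR^{-1}}}(\CR^{-1}\eta),\qquad \eta\in\K_0(\X),
\]
which is well-defined because $\CR$ restricts to a measurable bijection between $\Pi_0(\R^d_0\times\X)$ and $\K_0(\X)$ by the Proposition relating their subspaces.

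The verification step is to check that $k_\mu$ is indeed the Radon--Nikodym density of $\rho_\mu$ with respect to the Lebesgue--Poisson measure $\CL_\sigma$ on $\K_0(\X)$ introduced in Example \ref{exlf}. For any test set $A\in\B_{\mathrm{cm}}(\K_0(\X))$ I would chain the following identities: first, Proposition \ref{loc_finite_pi_k} gives
\[
\rho_\mu(A)=\rho_{\mu_{\CR^{-1}}}(\CR^{-1}A);
\]
second, by Proposition \ref{density_on_pi} applied to $\mu_{\CR^{-1}}$,
\[
\rho_{\mu_{\CR^{-1}}}(\CR^{-1}A)=\int_{\CR^{-1}A}k_{\mu_{\CR^{-1}}}(\gamma)\,\CL_\sigma(d\gamma);
\]
third, the change-of-variables formula under the pushforward by $\CR$ gives
\[
\int_{\CR^{-1}A}k_{\mu_{\CR^{-1}}}(\gamma)\,\CL_\sigma(d\gamma)=\int_A k_\mu(\eta)\,\CL_\sigma(d\eta),
\]
the latter $\CL_\sigma$ being the measure on $\K_0(\X)$ from Example \ref{exlf}. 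The desired identity $\rho_\mu(A)=\int_A k_\mu\,d\CL_\sigma$ then follows, and standard $\pi$--$\lambda$ arguments extend it from $\B_{\mathrm{cm}}(\K_0(\X))$ to all bounded Borel sets, giving the Radon--Nikodym density on the mark-locally finite part of $\rho_\mu$.

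The main obstacle is the third step: justifying that $\CR_{*}\CL_\sigma^{\Pi_0}=\CL_\sigma^{\K_0}$ as measures on $\K_0(\X)$. This is essentially tautological from the two definitions, since both are built by symmetrizing $\sigma^{\otimes n}=(\la\otimes m)^{\otimes n}$ on $n$-tuples of pairs $(v_i,x_i)$, but one must be careful about the pinpointedness condition defining $\Pi_0$: one has to argue that the set of $n$-tuples in which two $x_i$'s coincide is $\sigma^{\otimes n}$-null (because $m$ is non-atomic), so that modulo null sets the pushforward symmetric measure on tuples of pairs coincides with the symmetric measure appearing in Example \ref{exlf}. Once this null-set argument is settled, the three identities above compose to yield the claim, and the proof is complete by observing that the resulting $k_\mu$ decomposes in the $n$-point form of Definition \ref{def_corrfn_pi} through the analogous decomposition of $k_{\mu_{\CR^{-1}}}$.
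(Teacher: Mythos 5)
Your proposal is correct and follows essentially the same route as the paper: invoke Proposition \ref{density_from_k_pi} to obtain the correlation function of $\rho_{\mu_{\CR^{-1}}}$ on $\Pi_0(\R^d_0\times\X)$ and then transfer it to $\K_0(\X)$ via the reflection mapping $\CR$. The paper's proof is only two sentences, and your additional verification --- in particular the identification $\CR_{*}\CL_\sigma^{\Pi_0}=\CL_\sigma^{\K_0}$ modulo the $\sigma^{\otimes n}$-null set of tuples with coinciding positions --- correctly supplies the details that the paper leaves implicit.
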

	\begin{proof}
		By the Proposition  \ref{density_from_k_pi}, we obtain the existence of a correlation function for $\rho_{\mu_{\CR^{-1}}}$ on $\Pi_0(\R^d_0\times \X)$ with respect to $\CL_\sigma$.  We obtain the desired result by transferring this function by the reflection mapping $\CR$.
	\end{proof}
    From a mathematical perspective, it is beneficial to represent a hierarchical structure associated with a function \( k \colon \K_0(\X) \to \R \). This representation allows the function on \(\K_0(\X)\) to be expressed as a sequence of functions on \((\R^d_0 \times \X)^n\). Consequently, in practical applications, an evolution equation defined on an infinite-dimensional space can be reformulated as a sequence of evolution equations on finite-dimensional spaces.
%	In terms of practical applications, depicting a hierarchical structure corresponding to a function $k\colon\K_0(\X)\to\R$ is advantageous. This structure enables us to substitute a function on $\K_0(\X)$ with a series of functions on $(\R^d_0\times \X)^n$. As a result, in practical applications, we can replace an evolution equation on an infinite-dimensional space with a series of equations on finite-dimensional spaces.

	\begin{Definition}\label{def_corrfn_k}
		Let $k\colon\K_0(\X)\to\R$. The hierarchical structure corresponding to $k$ is defined as the sequence of symmetric functions $\{k^{(n)}\}_{n=0}^\infty$, $k^{(n)}\colon(\R^d_0\times \X)^n\to\R$ by:
		\begin{equation*}
			k^{(n)}(v_1,x_1,\dotsc,v_n,x_n):=
			\begin{cases}
				k(\sum_{i=1}^n v_i\delta_{x_i}),  &\text{ if }\ \eta=\sum_{i=1}^n v_i\delta_{x_i}\in\K_0^{(n)}(\X),
				\\
				0, &\text{ otherwise}.
			\end{cases}
		\end{equation*}
The function $k_\mu^{(n)}$ is referred to as the $n$-point correlation function of $\mu$. For convenience, we also write:
		\begin{displaymath}
			k^{(n)}(v_1,\dotsc,x_n):=k^{(n)}(v_1,x_1,\dotsc,v_n,x_n).
		\end{displaymath}
	\end{Definition}
We  define the following notion of a correlation function:
	\begin{Definition}
		The $n$-point correlation function on $\K_0(\X)$ with respect to positions is defined as:
		\begin{displaymath}
			\varkappa_{\mu,h}^{(n)}(x_1,\dotsc,x_n):=\int_{(\R^d_0)^n}<h,v_1>\dotsm <h,v_n> k^{(n)}_\mu(v_1,\dotsc,x_n)\la(dv_1)\dotso\la(dv_n),
		\end{displaymath}
		where $k^{(n)}$ is the $n$-point correlation function introduced in Definition \ref{def_corrfn_k}
		and $h\in \X$.
	\end{Definition}
	\begin{Remark}
		The function $\varkappa^{(n)}$ can be obtained by integration of $k^{(n)}$; therefore, we can proceed by only analyzing the latter.
	\end{Remark}
\section{Discussion}

We developed a rigorous mathematical framework to analyze classical continuous systems with singular velocity distributions, addressing a significant gap in modelling such systems. These distributions, represented by Radon measures with infinite mass, introduce challenges beyond standard approaches in configuration spaces. In particular, we focused on constructing and analyzing the phase space of configurations and measures, with specific emphasis on the Plato space \(\Pi(\mathbb{R}_0^d \times \mathbb{R}^d)\) and the cone of vector-valued discrete Radon measures \(\K(\mathbb{R}^d)\), connected via the reflection mapping $\CR$. This relationship is the foundation for developing harmonic analysis and studying associated measures and dynamics. Let us underline that the conceptualization of the Plato space introduces an idealized setting where configurations are locally finite subsets of \(\mathbb{R}_0^d \times \mathbb{R}^d\), which are required to satisfy the condition that, for any compact subset \(\Lambda \subset \mathbb{R}^d\), the total velocity of particles within \(\Lambda\) remains finite, expressed as:
\[
V_\Lambda(\gamma) = \sum_{x \in \tau(\gamma) \cap \Lambda} |v_x| < \infty, \quad \forall \gamma \in \Pi(\mathbb{R}_0^d \times \mathbb{R}^d),
\]
\(\tau(\gamma)\) being the projection of the configuration \(\gamma\) onto spatial positions. This constraint ensures the system avoids unbounded velocities within finite spatial regions, aligning with physical plausibility.
Moreover, since the reflection mapping $\CR$ connects the Plato space \(\Pi(\mathbb{R}_0^d \times \mathbb{R}^d)\) with the cone \(\K(\mathbb{R}^d)\) by mapping configurations \(\gamma \in \Pi(\mathbb{R}_0^d \times \mathbb{R}^d)\) to vector-valued Radon measures:
\[
\CR\gamma = \sum_{x \in \tau(\gamma)} v_x \delta_x, \quad \text{where } v_x \in \mathbb{R}_0^d,
\]
then it defines \(\K(\mathbb{R}^d)\) as the image of \(\Pi(\mathbb{R}_0^d \times \mathbb{R}^d)\) under $\CR$. The latter gives the possibility of forming the space of observable objects derived from the ideal configurations. Consequently, the topology of \(\K(\mathbb{R}^d)\) inherits its structure from the vague topology on the configuration space.
In the latter setting, a cornerstone of the analysis lies in the use of Poisson measures on \(\Pi(\mathbb{R}_0^d \times \mathbb{R}^d)\), defined via their Laplace transform:
\[
\int_{\Pi(\mathbb{R}_0^d \times \mathbb{R}^d)} \exp\left( \sum_{(v, x) \in \gamma} \psi(v, x) \right) \pi_{\sigma}(d\gamma)
= \exp \int_{\mathbb{R}_0^d \times \mathbb{R}^d} \left( e^{\psi(v, x)} - 1 \right) \sigma(dv, dx),
\]
where \(\sigma = \lambda \otimes m\) is an intensity measure combining a Radon measure \(\lambda(dv)\) on \(\mathbb{R}_0^d\) and a Lebesgue measure \(m(dx)\) on \(\mathbb{R}^d\), hence encapsulating the statistical properties of the system and forms the basis for deriving probability measures on \(\K(\mathbb{R}^d)\).
We also implemented the associated harmonic analysis via the K-transform to facilitate the transition between finite configurations in \(\Pi_0(\mathbb{R}_0^d \times \mathbb{R}^d)\) and infinite configurations in \(\Pi(\mathbb{R}_0^d \times \mathbb{R}^d)\). 

Indeed, it allows the extension of functions and measures from finite to infinite configurations, providing a powerful tool for analyzing the system's dynamics.
Moreover, introducing the measures on \(\K(\mathbb{R}^d)\) via the pushforward of measures on \(\Pi(\mathbb{R}_0^d \times \mathbb{R}^d)\) under $\CR$, allows us to study complex systems with singular velocity distributions. Accordingly, we derive a central result of the paper, i.e.: we proved that the image \(\sigma\)-algebra of \(\Pi(\mathbb{R}_0^d \times \mathbb{R}^d)\) under $\CR$ coincides with the \(\sigma\)-algebra of \(\K(\mathbb{R}^d)\), ensuring the measurability of the reflection mapping and the compatibility of the mathematical structures.
The results discussed above notably extend the scope of prior research on configuration spaces and Radon measures, especially those concentrating on positive measures in Riemannian contexts. Classical studies, such as those by Kondratiev et al., have explored the harmonic analysis of positive discrete Radon measures on Riemannian manifolds, representing particle configurations with purely spatial characteristics. These configurations are typically modelled as locally finite sets with well-behaved measures, facilitating the development of analytical techniques such as correlation functions and generating functionals. In contrast, the framework developed here addresses a broader and more complex class of systems: vector-valued Radon measures with singular distributions. These singularities arise when measures have infinite mass or exhibit significant anisotropy in velocity space. Specifically, velocity distributions of the form \(\lambda(dv) = |v|^{-\alpha} e^{-|v|^\beta} dv\), with \(\alpha \in [d, d+1)\) and \(\beta > 0\), present deep mathematical and physical consequences. These distributions enable the inclusion of systems where velocities can become arbitrarily large near specific regions of phase space, a scenario that cannot be described using traditional finite measures.
%The above-recalled results significantly expand the scope of earlier studies on configuration spaces and Radon measures, particularly those focusing on positive measures in Riemannian settings. Classical works, such as those by Kondratiev and colleagues, investigated the harmonic analysis of positive discrete Radon measures on Riemannian manifolds, where the measures describe configurations of particles with purely spatial properties. These configurations are typically modelled as locally finite sets with well-behaved measures, allowing for the development of robust analytical techniques such as correlation functions and generating functionals. In contrast, the framework developed here addresses a more general and challenging class of systems: vector-valued Radon measures with singular distributions. The singularity arises from introducing measures that possess infinite mass or exhibit significant anisotropy in the velocity space. Specifically, velocity distributions parameterized by \(\lambda(dv) = |v|^{-\alpha} e^{-|v|^\beta} dv\), where \(\alpha \in [d, d+1)\) and \(\beta > 0\), introduce profound mathematical and physical implications. Such distributions allow for the inclusion of systems where velocities can become arbitrarily large near certain regions of phase space, a scenario that cannot be captured using traditional finite measures.
In particular, the parameter \(\alpha\) determines the degree of singularity near \(v = 0\). For \(\alpha = d\), the measure \(\lambda(dv)\) transitions to a marginally finite scenario. In contrast, values of \(\alpha\) approaching \(d+1\) emphasize the singular nature of the distribution, leading to configurations that challenge standard assumptions of local finiteness. On the other hand, the parameter \(\beta\) controls the distribution's decay rate at large \(|v|\), with larger values corresponding to faster decay. The interplay between \(\alpha\) and \(\beta\) broadens the modelling flexibility, accommodating a wide variety of physical systems, e.g.:
when \(\beta = 2\), the measure approximates modified Maxwellian distributions relevant in statistical physics, while, for \(\beta < 2\), the distributions model systems with heavy-tailed velocities characteristic of certain astrophysical or turbulent systems.

The shift from positive measures to vector-valued measures introduces additional complexity due to the need to account for the vectorial nature of the marks (velocities) attached to spatial configurations. This extension requires new mathematical tools to handle the coupled spatial and velocity dependencies. For instance, the harmonic analysis developed for positive measures relies on scalar-valued generating functions and correlation measures, whereas vector-valued measures necessitate functionals and transforms that respect the vector structure.
Let us also underline that the introduction of singular velocity distributions also enhances the physical realism of the model, enabling the study of systems where unbounded velocities naturally occur such, e.g.:
Astrophysical Systems, where stars and galaxies often exhibit velocity distributions with heavy tails due to gravitational interactions. The singular measures proposed here can model such systems without artificial truncations of velocity magnitudes; Kinetic Theory: namely in the Boltzmann framework, velocity distributions often develop singular features due to collision dynamics, especially in dilute gases; Turbulence and Fluids: where singular velocity distributions appear in the statistical description of turbulent flows, where small-scale dynamics generate extreme velocities; etc.

The generalizations discussed above introduce additional complexity in the definition and analysis of measures on configuration spaces. Specifically, the transition from scalar-valued to vector-valued measures on \(\K(\mathbb{R}^d)\) requires a careful consideration of both topological and algebraic structures. Key to this is ensuring the measurability of the reflection mapping \(\CR: \Pi(\mathbb{R}_0^d \times \mathbb{R}^d) \to \K(\mathbb{R}^d)\) and the compatibility of the corresponding \(\sigma\)-algebras. The K-transform, originally developed for scalar measures, is extended here to accommodate vector-valued measures, providing robust harmonic analysis and probabilistic modeling. This extension enables the handling of singular and anisotropic velocity distributions, thus broadening the framework's applicability to various physical and mathematical problems. This includes exploring systems where classical assumptions, such as bounded velocities or locally finite configurations, do not hold, opening new research directions in studying complex systems. 

Future work will aim to extend this framework to interacting particle systems, where pairwise or higher-order interactions introduce additional complexities. Additionally, generalizing the framework to non-Euclidean geometries, such as Riemannian or hyperbolic spaces, will shed light on systems with curvature-dependent dynamics. Numerical implementations of the K-transform and associated measures could facilitate the application of these methods to real-world systems. Lastly, investigating stochastic evolution on \(\K(\mathbb{R}^d)\) in the context of (stochastic) mean-field games is a promising direction, linking the theoretical framework to dynamic processes.

\section*{Acknowledgments}
This work would not have been possible without the constant help, mathematical support and brilliant intuitions of our friend Yuri Kondratiev.
We shared the fundamental steps of this work with him, but unfortunately, we were not able to finish it all together: Prof. Kondratiev left us prematurely. His absence is undoubtedly strong, but not so much as to obscure his human greatness, even before his excellent scientific abilities.
Sit tibi terra levis, our dearest friend.

\end{document}